\newenvironment{proofsketch}{\trivlist\item\emph{Sketch of Proof.}}{\endtrivlist}
\newenvironment{informaltheorem}{\trivlist\item\textbf{Theorem (Informal).}}{\endtrivlist}
\newcommand\E{\mathop{\mathbb E}}
\newcommand\I{\mathop{\mathbb I}}
\DeclareMathOperator{\secmax}{{\mathrm{secmax}}}
\newcommand\bbR{{\mathbb R}}
\newcommand\calF{{\mathcal F}}
\newcommand\calA{{\mathcal A}}
\newcommand{\Myer}{\mathrm{Myer}}
\newcommand{\WEL}{\mathrm{WEL}}
\newcommand{\SPA}{\mathrm{SPA}}
\newcommand{\VCG}{\mathrm{VCG}}
\newcommand{\REV}{\mathrm{REV}}
\title{Ex-Ante Truthful Distribution-Reporting Mechanisms}
\author{Xiaotie Deng \inst{1}
\and Yanru Guan \inst{1}
\and Ningyuan Li \inst{1}
\and Zihe Wang \inst{2}
\and Jie Zhang \inst{3}}
\institute{Peking University, China \\
\email{xiaotie@pku.edu.cn, piscesguan@stu.pku.edu.cn, liningyuan@pku.edu.cn}\and
Renmin University of China, China\\
\email{wang.zihe@ruc.edu.cn}\and
University of Bath, United Kingdom\\
\email{jz2558@bath.ac.uk}
}
\begin{document}

\maketitle

\begin{abstract}
This paper studies mechanism design for revenue maximization in a distribution-reporting setting, where the auctioneer does not know the buyers' true value distributions. Instead, each buyer reports and commits to a bid distribution in the ex-ante stage, which the auctioneer uses as input to the mechanism. Buyers strategically decide the reported distributions to maximize ex-ante utility, potentially deviating from their value distributions. As shown in previous work, classical prior-dependent mechanisms such as the Myerson auction fail to elicit truthful value distributions at the ex-ante stage, despite satisfying Bayesian incentive compatibility at the interim stage. We study the design of ex-ante incentive compatible mechanisms, and aim to maximize revenue in a prior-independent approximation framework. 
We introduce a family of \emph{threshold-augmented mechanisms}, which ensures ex-ante incentive compatibility while boosting revenue through ex-ante thresholds. Based on these mechanisms, we construct the \emph{Peer-Max Mechanism}, which achieves an either-or approximation guarantee for general non-identical  distributions. Specifically, for any value distributions, its expected revenue either achieves a constant fraction of the optimal social welfare, or surpasses the second-price revenue by a constant fraction, where the constants depend on the number of buyers and a tunable parameter. We also provide an upper bound on the revenue achievable by any ex-ante incentive compatible mechanism, matching our lower bound up to a constant factor. Finally, we extend our approach to a setting where multiple units of identical items are sold to buyers with multi-unit demands.

\end{abstract}

% Paper body
\section{Introduction}
Auction design has long been a cornerstone of economic theory and mechanism design, with applications ranging from government auctions for spectrum licenses to online marketplaces and computational advertising. At its core, auction design aims to allocate resources efficiently or to maximize revenue for the auctioneer, typically by incorporating information about buyers' preferences. Foundational work such as Myerson’s seminal auction theory \cite{bc0ee6e0-c3d6-3130-9f09-cf95a8a3f3a4} provides a rigorous framework for revenue-optimal auctions, under the assumption that the auctioneer has precise knowledge of the buyers’ value distributions.

In practice, however, this assumption is often unrealistic. Buyer valuations are often private and shaped by unobservable factors, making it difficult or impossible for the auctioneer to know the underlying value distributions. Consequently, classical auction theory cannot be directly applied, posing a significant challenge to revenue-maximizing mechanism design. This has motivated the development of prior-independent mechanisms \cite{pimd-Dhangwatnotai15}, which seek strong performance guarantees without requiring distributional knowledge. However, the absence of distributional information hinders revenue maximization even in simple settings. Prior-independent mechanisms often require restrictive assumptions, such as identical or regular value distributions, to deliver satisfactory revenue guarantees.

% In many real-world settings, particularly in computational advertising, auctioneers such as ad platforms interact repeatedly with the same set of bidders across a large number of auctions. In these environments, platforms typically do not rely on a single reported value from each bidder. Instead, they observe bidding behavior over time and use historical bid data to estimate the bidders' value distributions. These inferred distributions are then used to optimize future auctions. However, when bidders understand that their current bids influence how their value distributions are learned, they may strategically misreport in order to manipulate the inferred inputs to the mechanism.
In many real-world settings, particularly in computational advertising, auctioneers such as ad platforms interact repeatedly with the same set of buyers across numerous rounds. To optimize mechanism performance, platforms commonly use historical bid data to estimate buyers’ value distributions. However, this creates incentives for buyers to strategically distort their bids in order to influence future mechanism outcomes, especially when they understand that current bids affect future distribution estimates.
This strategic behavior is formally captured in prior research \cite{tang2018price,deng2020www,ChenZH2024www} through a distribution-reporting model, where buyers report and commit to fake distributions in the ex-ante stage. Analyses demonstrate that such strategic manipulation degrades the revenue performance of classical prior-dependent auctions (e.g., Myerson auction) to the level of prior-independent auctions (e.g., first-price auction).
%Designing mechanisms that withstand such strategic behavior while ensuring strong revenue performance is essential for effective auction design in these adaptive environments.\jz{I agree. I think we can just remove this sentence.}
% Designing mechanisms that account for this kind of strategic manipulation, while remaining ex-ante incentive compatible and ensuring strong revenue performance, is essential for robust auction design in such data-driven environments.\lny{"account for this kind of strategic manipulation"="ex-ante incentive compatible"}\lny{but here the ex-ante model is undefined yet}\jz{how about "Designing mechanisms that withstand such strategic behavior while ensuring strong revenue performance is essential for effective auction design in these adaptive environments."}

% In this paper, we focus on incentive compatibility evaluated under ex-ante utilities. 
In this paper, we study the mechanism design problem under the distribution-reporting model.
The central question is whether mechanisms can be designed to truthfully elicit buyers’ private value distributions at the ex-ante stage, while using this reported distribution information to secure strong revenue guarantees. We answer this question affirmatively. 
In our model, while a buyer's true value distribution is private, it is assumed to lie within a known distribution class. Each buyer strategically selects and reports a distribution from this class at the ex-ante stage, committing to bid according to this distribution, which the auctioneer then uses to determine allocation and pricing rules. 
We design mechanisms that are ex-ante incentive compatible, ensuring that buyers maximize their ex-ante expected utility by truthfully reporting their value distributions. To evaluate the revenue performance, we aim for worst-case revenue guarantees akin to prior-independent mechanism design. Since canonical benchmarks such as Myerson revenue may be unattainable, we evaluate our mechanisms against a well-defined benchmark tailored to this environment, which reflects what is realistically achievable while still capturing the goal of strong revenue performance.

\subsection{Paper Overview}
%\lny{techniques in SODA and our difference}
%\lny{compare revenue results}
%\lny{don't mention repeated}

\subsubsection{Most Relevant Related Work.} \citet{tang2018price} introduce the distribution-reporting game modeling the buyers' strategic behavior in a setting where the auctioneer infers value distributions from reported bids. Each buyer has a true type distribution, from which their private type is drawn, but can strategically commit to a fake distribution. Buyers then behave rationally as if their values were drawn from the fake distribution, causing the mechanism to adopt the fake distribution as prior. They analyze the equilibrium in the resulting distribution-reporting game, and show that classical prior-dependent mechanisms, such as Myerson's auction and the second-price auction with a monopoly reserve, fail to be incentive compatible for buyers' ex-ante utilities, when buyers can manipulate the distribution the mechanism relies on. This loss of incentive compatibility significantly degrades the performance of these mechanisms, making them no better than simple prior-independent mechanisms like first-price or second-price auctions. Our mechanism design model is adapted from their distribution-reporting model, replacing the Bayesian incentive compatibility\footnote{Note that Bayesian incentive compatibility is defined on buyers' interim utility, and does not imply ex-ante IC.} assumption with ex-ante incentive compatibility requirement. While \citet{tang2018price} focus on analyzing the impact of fake distributions on classical mechanisms, our work instead investigates how to design mechanisms that are ex-ante incentive compatible within the distribution-reporting model.

In the context of prior-independent mechanism design, \citet{GuruganeshMWW24} recently study the problem where buyers have non-identical but regular value distributions. For revenue approximation, they adopt an either-or framework: designing a mechanism that either achieves a constant-factor approximation to the expected revenue of Myerson’s auction or exceeds the second-price auction revenue by a multiplicative factor. This approach is motivated by the observation that in heterogeneous settings, no prior-independent mechanism can consistently approximate Myerson’s revenue, nor can it always outperform the second-price auction.
While their work focuses on classical incentive compatibility constraints in single-shot auctions, we study a different setting in which buyers report value distributions and the mechanism is required to be ex-ante incentive compatible. We also adopt an either-or approximation framework, but use a stronger variant of their benchmarks: while both works include the second-price auction revenue ($\SPA$) as one component, we replace Myerson’s revenue with the optimal social welfare ($\WEL$). Notably, $\WEL$ always upper bounds Myerson’s revenue and the gap between the two can be arbitrarily large. Technically, both papers employ randomized threshold techniques in mechanism construction, but the analysis in \citet{GuruganeshMWW24} does not carry over to our setting, as it critically relies on the regularity of distributions and focuses on benchmarks different from ours (i.e., $\Myer$ rather than $\WEL$).

\subsubsection{Summary of Main Results.}
% \lny{informal theorem}
% Most existing prior-dependent mechanisms such as Myerson auction fail to satisfy ex-ante incentive compatibility in distribution-reporting model.
%To tackle the ex-ante IC requirement, we introduce Threshold-Augmented Mechanisms, a class of distribution-reporting mechanisms that guarantee ex-ante incentive compatibility. A threshold-augmented mechanism is built on an IC bid-reporting prior-independent mechanism, and boosts the revenue by applying a threshold on the ex-ante utility of buyers.
% \begin{theorem}
% \end{theorem}
%     \jz{Complete this}

We use the \emph{bid-reporting} setting to refer to  environments in which each buyer submits a single value as a bid to the mechanism, as in classical auction literature. In contrast, the \emph{distribution-reporting} setting, which is the focus of this paper, refers to environments in which each buyer reports a value distribution that the mechanism uses as input to determine outcomes.

To design ex-ante incentive compatible mechanisms in the distribution-reporting setting, we introduce the class of \emph{threshold-augmented mechanisms}. These mechanisms are built on top of an incentive compatible prior-independent mechanism from the bid-reporting setting and enhance revenue by applying a threshold to each buyer’s ex-ante utility: buyers whose utility exceeds the threshold retain the original allocation and their payment is added the threshold value, while those below the threshold are excluded. %These mechanisms serve as the foundation for our Peer-Max mechanism, which randomizes over them to achieve strong revenue guarantees.
An immediate application of threshold-augmented mechanisms is in achieving full surplus extraction under the i.i.d. setting. When all buyers share the same (but unknown) value distribution, we construct a simple threshold-augmented mechanism that guarantees the seller's revenue equals the optimal social welfare.

Our main result is the design of ex-ante incentive compatible distribution-reporting mechanisms that achieve an either-or approximation guarantee with respect to the optimal welfare ($\WEL$) and the second-price auction revenue ($\SPA$) in the setting where buyers have non-identical and general value distributions.

\begin{informaltheorem}
    For $n\geq 2$ buyers and any $K\geq 1$, there exists an ex-ante IC distribution-reporting mechanism that guarantees an expected revenue of 
$$\min \Bigl\{ \frac1{24(K+\log_2 n)}\cdot \WEL,2^K\cdot \SPA \Bigr\},$$ for general value distributions.
\end{informaltheorem}
%\jz{Maybe we should not index these informal theorems by 1,2,and 3. }
%Remarkably, this result separates the distribution-reporting mechanism from prior-independent mechanisms in revenue performance.\jz{What does this mean?}

To establish the tightness of the lower bound, we provide a matching upper bound, up to constant factors. 
\begin{informaltheorem}
    For $n\geq 2$ buyers and any $K\geq 6$, for general value distributions, no ex-ante IC distribution-reporting mechanism can always guarantee a revenue of 
$$\min \Bigl\{ \frac{64}{K+\log_2 n}\cdot \WEL,2^K\cdot \SPA \Bigr\}.$$
\end{informaltheorem}
Note that this upper bound can be extended to the regular distribution setting with constant factor losses.

Finally, we extend our main positive result to the setting where $m$ units of identical items are sold to $n$ multi-demand buyers with unknown, non-identical value distributions, using a distribution-reporting mechanism. Note that this result is also tight up to constants, since the above upper bound directly carries over to this setting as a special case.
\begin{informaltheorem}
    For the multi-unit multi-demand setting, for any $n\geq 2$ and $K\geq 1$, there exists an ex-ante IC distribution-reporting mechanism that guarantees an expected revenue of 
$$\min \Bigl\{ \frac1{24(K+\log_2 n)}\cdot \WEL,2^K\cdot \VCG \Bigr\},$$ for general value distributions.
\end{informaltheorem}

Most proofs are deferred to the Appendix. For key results, we include proof sketches in the main text when space allows.

\subsubsection{Outline of Technical Approach.}
%We introduce threshold-augmented mechanisms to improve the revenue while ensuring ex-ante IC.
%A threshold-augmented mechanism augments an prior-independent bid-reporting IC mechanism with thresholds on the buyers' ex-ante utility in the bid-reporting mechanism. 
%The threshold for each buyer is computed as a function on the value distributions reported by the other buyers, and works analogously to an ex-ante entry fee: A buyer whose ex-ante utility in the bid-reporting mechanism surpasses her threshold retains the original allocation and payment rules but incurs an additional charge equal to the threshold value. A buyer with utility below the threshold is excluded, receiving zero allocation and making no payment. Note though, this buyer's bid value will still be used by the mechanism to compute the threshold for other buyers (this buyer's presence matters to other buyers' outcome).

To ensure ex-ante incentive compatibility while exploiting the reported distribution information, we propose a family of distribution-reporting mechanisms named threshold-augmented mechanisms. A threshold-augmented mechanism builds on a prior-independent, incentive compatible mechanism from the bid-reporting setting by applying a threshold on each buyer's ex-ante utility. This threshold is computed as a function of the value distributions reported by the other buyers. It plays the role of an ex-ante entry fee: a buyer whose expected utility under the original bid-reporting mechanism exceeds her threshold proceeds with the original allocation and payment rules but is additionally charged an amount equal to the threshold. A buyer whose utility falls below the threshold is excluded from the auction, receiving no allocation and making no payment. However, her bid is still simulated in the original mechanism when computing the allocation and payment outcomes for the other buyers, so that the overall competitive landscape is unaffected by the exclusion of buyers. This structure preserves incentive compatibility because the underlying mechanism is incentive compatible and the threshold imposed on each buyer is independent of her own report.

To achieve the worst-case revenue guarantees, we employ a properly designed randomization over threshold-augmented mechanisms. 
% To improve revenue, we assign thresholds to individual bidders and design a mechanism that randomizes over the resulting threshold-augmented mechanisms. 
The use of randomized thresholds, along with our either-or approximation benchmark, is inspired by techniques developed in prior-independent mechanism design, particularly the work of \citet{GuruganeshMWW24}. However, their analysis is tailored to settings where buyers have identical or regular value distributions. In contrast, our results hold under general, non-identical distributions. Moreover, the ex-ante nature of our distribution-reporting setting necessitates different approximation target and requires new techniques distinct from those used in the interim, bid-reporting context.

Specifically, we design the \emph{Peer-Max mechanism}, which sets the threshold for each buyer based on the expected maximum value among the other buyers, scaled by a randomly selected multiplicative factor from a geometrically separated set of size $O(K + \log n)$. The analysis considers two distinct cases: either a single buyer contributes a large fraction of the social welfare under the second-price auction, or the welfare is more evenly distributed and each buyer contributes only a small fraction. In the first case, we directly lower bound the revenue obtained from this dominant buyer. In the second, we aggregate revenue contributions from all buyers whose welfare contributions are not too small. To maximize revenue in both cases, the Peer-Max mechanism selects thresholds that approximate each buyer’s ex-ante utility from below, increasing the likelihood that they remain in the auction and contribute to the revenue. This can be viewed as ``guessing'' each buyer’s ex-ante utility and setting the entry threshold just low enough to retain them. The $\log n$ factor in the revenue bound arises from accommodating buyers whose welfare contribution fractions vary from $\Theta(1/n)$ to $\Theta(1)$.

%To prove a matching upper bound on the either-or approximation, we construct a distribution over instances where each buyer's value distribution is randomly selected from a family of value distributions that can be viewed as a basic value distribution times a randomized multiplicative factor. The multiplicative factor roughly \jz{roughly?} follows a discrete equal-revenue distribution\lny{rewrite}. On this distribution family, reporting a distribution is equivalent to reporting the multiplicative factor, so any ex-ante IC distribution-reporting mechanism is equivalent to a bid-reporting IC mechanism, which allows us to derive an upper bound on its expected revenue on this distribution of instances.

To prove a matching upper bound on the either-or approximation, we construct a distribution over instances where each buyer's value distribution is randomly drawn from a family of distributions. Each distribution in this family is a scaled version of a simple equal-revenue type, with the scaling factor varying geometrically. The probabilities assigned to these distributions are chosen so that the induced distribution over scaling factors approximates a discrete equal-revenue distribution. In this setting, reporting a distribution effectively amounts to reporting the scaling factor. This reduction implies that any ex-ante incentive compatible distribution-reporting mechanism is equivalent to a mechanism that is incentive compatible in the bid-reporting setting. We then use this equivalence to upper bound the expected revenue of any such mechanism under our constructed distribution over instances.

%%%%%%%%%%%%%%%%%%%%%%%%%%%%%%%%%%%%%%%%%%%%%%%

\subsection{Further Related Literature}
The field of auction design was fundamentally shaped by the seminal work of \citet{bc0ee6e0-c3d6-3130-9f09-cf95a8a3f3a4}, which laid the groundwork for formal approaches to maximizing seller revenue. More recently, a significant line of research has focused on designing mechanisms that do not rely on prior knowledge of buyers’ value distributions. This direction emerged in part because requiring prior information is often unrealistic in practice and poses technical challenges in analysis. We summarize several strands of work that have developed within this broader effort.

%\lny{brief; background of prior-independent}
%Several works have studied auction settings where the seller lacks direct access to bidders' value distributions and must instead rely on a learning process to estimate them. This learning dynamic introduces strategic considerations, as bidders may strategically misreport their values early on to manipulate the seller’s belief and secure better outcomes in future rounds. Additionally, some of these works establish equivalence results, showing that under the induced game setting, several classical auction models become strategically equivalent, leading to either the same bidder behaviors or equivalent revenue outcomes. The works below contribute to this line of research.
One strand of work studies the incentive for bidders to strategically manipulate their bid data to improve long-term utility under classical prior-dependent mechanisms, and characterize the impact of strategic manipulation on the mechanism outcomes. 
% settings where the seller learns bidders’ value distributions over time. This creates strategic incentives for bidders to misreport early on to influence future outcomes, posing challenges for incentive-compatible design.
%\lny{beginning}
The distribution-reporting model introduced by \citet{tang2018price} has been adapted to investigate the strategic  manipulation of private distribution in various settings. %introduce the distribution-reporting model and establish strategic equivalence and revenue reduction results of classical prior-dependent mechanisms.
% investigate how bidders can submit fake distributions in single-item auctions, transforming the auction into a distribution-reporting game. They analyze the strategic equivalence of this game to first-price auctions and study the revenue of known mechanisms. 
\citet{deng2020www}
% examine sponsored search auctions as repeated interactions where sellers learn and exploit buyers’ private value distributions. They introduce the Private Data Manipulation (PDM) game, where buyers strategically submit distorted distributions to affect outcomes, highlighting the challenge strategic behavior poses to revenue optimization.
adapts this model to study private data manipulation in sponsored search auctions, showing how strategic buyers can distort reported distributions to influence outcomes, complicating revenue optimization.
\citet{ChenZH2024www} extends this model to a budget-constrained auction setting, and establish strategic equivalence and revenue dominance results between the Bayesian revenue-optimal mechanism and budget-constrained variations of first-price and second-price auctions.
\citet{deng2020neurips} % study the Empirical Revenue Maximization (ERM) algorithm, commonly used to learn reserve prices in repeated and uniform-price auctions. They show that bidders may strategically manipulate ERM inputs to lower future prices, and propose an incentive-awareness measure to quantify this vulnerability. Building on this, they develop a learning algorithm that achieves approximate incentive compatibility and near-optimal revenue, even in the presence of strategic bidders, and prove approximate group incentive compatibility in uniform-price auctions.
% study the Empirical Revenue Maximization (ERM) algorithm for learning reserve prices in repeated uniform-price auctions. They show that bidders can strategically manipulate ERM inputs to reduce future prices and propose an incentive-awareness measure to quantify this. Building on this, they design a learning algorithm achieving approximate incentive compatibility and near-optimal revenue, proving group IC in uniform-price settings.
study a different model where bidders strategically manipulate a limited number of samples to exploit the ERM algorithm, showing upper bounds on the impact of strategic manipulation.
To the best of our knowledge, our work is the first to formally study the design of an ex-ante IC mechanism within the distribution-reporting model.

Another strand of work is on prior-independent auction design, exploring worst-case revenue maximization facing unknown value distributions. \citet{pimd-Dhangwatnotai15} introduce the goal to design a prior-independent auction whose expected revenue under every distribution approximates the optimal auction for that distribution, and show that the second-price auction achieves at least $(n-1)/n\geq\frac12$ approximation of the revenue of Myerson auction. 
\citet{pimd-DBLP:conf/sigecom/FuILS15} propose a randomized prior-independent auction that achieves strictly better approximation guarantees. 
\citet{pimd-Allouah20} improve the upper and lower bounds on the approximation ratios, and show that second-price auction achieves optimal approximation ratio for MHR distributions.
\citet{pimd-DBLP:conf/focs/HartlineJL20} close the gap for two bidders with i.i.d. regular distributions, showing a tight approximation factor of 0.524.
Diverging from the i.i.d. setting, \citet{GuruganeshMWW24} study the setting with non-identical distributions, introducing the either-or approximation benchmark to bypass the inapproximability of canonical benchmarks like Myerson revenue.

Another line of work focuses on optimizing the auction rule through a dynamic learning process and its connection to no-regret learning. This research explores how to adapt reserve prices over time using observed bids and how to maintain long-term revenue despite strategic bidder behavior.
\citet{Amin13neurips} % study how to learn a buyer’s value distribution in repeated posted-price settings, motivated by online ad auctions. They model the buyer as a forward-looking agent seeking to maximize long-term surplus. The authors design seller strategies that achieve no-regret when the buyer discounts future utility, and they show that lower discounting leads to higher regret, up to linear regret when no discounting is assumed.
study learning a buyer’s value distribution in repeated posted-price settings, modeling the buyer as forward-looking and maximizing long-term surplus. They design seller strategies with no-regret guarantees under discounted utility, showing that lower discounting leads to higher regret, up to linear regret when no discounting is assumed.
\citet{kanoria14wine}
% investigate whether dynamically updating reserve prices in repeated second-price auctions improves revenue when bidders behave strategically. They show that, under known regular distributions, a constant reserve remains optimal. However, with unknown distributions, using historical bid data to adapt reserves can be beneficial. They propose a dynamic reserve mechanism that is approximately incentive-compatible and asymptotically optimal, outperforming any static reserve in terms of revenue.
investigate dynamic reserve pricing in repeated second-price auctions and show that adapting reserves using bid history improves revenue under unknown distributions, proposing an approximately incentive-compatible and asymptotically optimal mechanism.
\citet{Abernethy19neurips}
% address the challenge of learning Bayesian-optimal mechanisms in repeated auctions when bidders may strategically manipulate the learning process. Early misreports can distort demand estimation and lead to reduced future revenue. To counter this, they
design a mechanism that incorporates differential privacy into the learning process, ensuring approximate revenue optimality while maintaining strategic robustness and discouraging manipulation.
\citet{Drutsa20icml}
% study repeated second-price auctions with reserve prices, where a seller interacts with multiple strategic bidders who aim to maximize their long-term utility. They propose a learning algorithm that achieves a strategic regret of $O(\log\log T)$, extending earlier single-buyer pricing algorithms to the multi-buyer setting. Their approach offers strong theoretical guarantees even under strategic manipulation and bidder competition.
study repeated second-price auctions with reserve prices and propose a learning algorithm with $O(\log\log T)$ strategic regret against manipulation.

\section{Model and Preliminary}\label{sec:model}
\subsection{Distribution-reporting mechanism design}
There are $n$ buyers, each having a value distribution $F_i$. Each buyer's distribution is private, but lies in a publicly known distribution class $\calF_i$. Let $\bm{\calF}:=\calF_1\times\cdots\times\calF_n$ denote the Cartesian product of these distribution classes, and we call $\bm{F}=(F_1,\cdots,F_n)\in\bm{\calF}$ a distribution profile. Specifically, we also refer to a distribution profile of true value distributions as an \textit{instance}.

We denote the set of non-negative real numbers as $\bbR_+=[0,+\infty)$. Let $\mathcal{A}$ denote the class of all distributions on $\bbR_+$, and $\mathcal{A}^R$ denote the class of all regular distributions on $\bbR_+$. Formally, a distribution is regular if its virtual value function $\varphi(x)=\frac{1-F(x)}{f(x)}$ is non-decreasing, where $F(x)$ and $f(x)$ are its cumulative distribution function and probability density function, respectively.

We study the mechanism design problem under the distribution-reporting model adapted from \cite{tang2018price}, where each buyer submits a distribution at the ex-ante stage, and commits to bid according to this distribution. %It can be viewed as a simplification of a repeated auction model where the seller enforces that the bid distribution of each buyer is time-invariant.

% \lny{$s_i$ is used later, change to $\beta_i$}
In the distribution-reporting model, the strategic action of each buyer is submitting a bid distribution $B_i\in\calF_i$, and privately deciding a (possibly randomized) mapping $\beta_i$ as the bidding function, which maps $F_i$ to $B_i$. That is, when $v_i$ follows $F_i$, the distribution of $\beta_i(v_i)$ is $B_i$.
% \footnote{To be rigorous, $\beta_i$ should be allowed to be randomized, i.e., $\beta_i:\mathbb{R}\to\Delta(\mathbb{R})$.} 
We call $\bm{B}=(B_1,\cdots,B_n)\in\bm{\calF}$ a bid distribution profile. The decision of $\beta_i$ is made ex-ante, i.e., before knowing the realization of the value $v_i$.

% In the interim stage, after knowing $v_i$, each buyer submits the bid according to the strategy $b_i=\beta_i(v_i)$. Then the mechanism executes an auction whose allocation and payment rule depends on the reported bid distribution profile $\bm{B}$.

Throughout the paper, we use a \textit{bid-reporting mechanism} to refer to a prior-independent mechanism, %in the traditional\lny{or standard?} auction setting, 
which consists of an allocation rule and a payment rule mapping a bid profile to the allocation and payment outcomes.
In contrast, a \textit{distribution-reporting mechanism} takes the bid distributions as input in ex-ante stage to decide the allocation and payment rules.

\begin{definition}[Distribution-reporting mechanism]
A distribution-reporting mechanism $M=(x,p)$ consists of the allocation rule $x_i:\bbR_+^n\times \bm{\calF}\to[0,1]$ and the payment rule $p_i:\bbR_+^n\times \bm{\calF}\to\bbR$ for all $i\in[n]$.

Specifically, given the submitted bid distribution profile $\bm{B}$, the mechanism $M$ executes a bid-reporting mechanism which maps a bid profile $\bm{b}=(b_1,\cdots,b_n)\in\bbR_+^n$ to the allocation and payment outcomes, i.e., allocating the item to buyer $i$ with probability $x_i(\bm{b};\bm{B})$, while charging payment $p_i(\bm{b};\bm{B})$.

We say $M$ is feasible (in the single-item setting) if it holds that
$\sum_{i\in[n]}x_i(\bm{b};\bm B)\leq 1$ for any $\bm{b}\in\bbR_+$ and $\bm{B}\in \bm{\calF}$. 
Note that we will generalize the definition of feasibility in Section \ref{sec:generalization}.

We say $M$ is allocation-monotone if $x_i(b_1,\cdots,b_n;\bm B)$ is always weakly increasing in $b_i$.
\end{definition}

The timeline of the distribution-reporting model is as follows:
\begin{enumerate}
    \item \textbf{Ex-ante Stage}
    \begin{enumerate}
        \item The distribution classes $\bm{\calF}=(\calF_1,\cdots,\calF_n)$ are publicly known, while each buyer's true value distribution $F_i$ is private, and the values $v_i$ is unknown.
        \item The seller decides and announces a distribution-reporting mechanism $M=(x,p)$.
        \item Each buyer $i\in[n]$ decides and submits a bid distribution $B_i\in\calF_i$ to the mechanism, while privately deciding and committing to a mapping $\beta_i$ that maps $F_i$ to $B_i$.
    \end{enumerate}
    \item \textbf{Interim Stage}
    \begin{enumerate}
        \item For each $i\in[n]$, $v_i$ is drawn from $F_i$ and become privately known to buyer $i$.
        \item Each buyer $i$ submits $b_i=\beta_i(v_i)$ to the mechanism.
    \end{enumerate}
    \item \textbf{Ex-post Stage}
    \begin{enumerate}
        \item The mechanism allocates the item such that each buyer $i\in[n]$ is allocated with probability $x_i(\bm{b};\bm{B})$.
        \item Each buyer $i\in[n]$ is charged payment $p_i(\bm{b};\bm{B})$ by the mechanism.
    \end{enumerate}
\end{enumerate}

For convenience, we sometimes represent a distribution by its corresponding quantile function. For a random variable $v$ following some distribution $F$, its quantile is defined as $q=1-F(v)$, where $F(v)$ is the CDF. Viewing $v$ as a function of $q$, the distribution $F$ can be uniquely represented by the quantile function $v(\cdot)$ defined as $v(q)=F^{-1}(1-q)$. Notably, for any distribution $F$, when $q\sim U[0,1]$, the distribution of $v(q)$ is exactly $F$. In particular, we often denote the quantile functions of $F_i$ and $B_i$ by $v_i(\cdot)$ and $b_i(\cdot)$, respectively.

Using the notation of quantiles, when the bid distribution $B_i$ is given, the mapping $\beta_i$ can be equivalently represented by an arrangement function $\phi_i:[0,1] \rightarrow [0,1]$ that maps each quantile in $F_i$ to a quantile in $B_i$. That is, we can assume each buyer is assigned a quantile $q_i$ drawn from $U[0,1]$ in the interm stage, so that her value is $v_i(q_i)$, and she submits bid $b_i(\phi_i(q_i))$, where $v_i(\cdot)$ and $b_i(\cdot)$ denote the quantile functions of $F_i$ and $B_i$. It is necessary that $\phi_i$ preserves $U[0,1]$, i.e. the distribution of $\phi_i(q_i)$ is $U[0,1]$ when $q_i\sim U[0,1]$. 
% Later we will see that a buyer's dominant strategy regarding $\beta_i$ corresponds to the identity mapping in quantile space, i.e., $\phi_i(q_i)=q_i$.

Given a distribution-reporting mechanism $M$ and an instance of true value distributions $\bm{F}$, each buyer strategically decide her bid distribution $B_i$ and arrangement function $\phi_i$ (representing $\beta_i$) so as to maximize her ex-ante utility.
\begin{definition}[Ex-ante utility]\label{def:ex-ante utility}
Given a distribution-reporting mechanism $M$, a bid distribution profile $\bm B=(B_1,\cdots,B_n)$, the ex-ante utility of a buyer $i$ with true distribution $F_i$ and arrangement function $\phi_i$ is
\begin{align*}
   U^M_i(\bm B,\phi_i;F_i)=   &\E_{q_i\sim U[0,1],\bm{b}_{-i}\sim \bm{B}_{-i}}[x_i(b_i(\phi_i(q_i)),\bm{b}_{-i};\bm B)\cdot v_i(q_i)-p_i(b_i(\phi_i(q_i)),\bm{b}_{-i};\bm B)]\\
   =&\E_{q_i\sim U[0,1],\bm{b}_{-i}\sim \bm{B}_{-i}}[x_i(b_i(\phi_i(q_i)),\bm{b}_{-i};\bm B)\cdot v_i(q_i)]-\E_{\bm{b}\sim \bm{B}}[p_i(\bm{b};\bm B)],
\end{align*}
where $v_i(\cdot)$ and $b_i(\cdot)$  are the quantile function of $F_i$ and $B_i$, respectively.
% Given a distribution-reporting mechanism $M$, a bid distribution profile $\bm B=(B_1,\cdots,B_n)$, bidding strategies $\bm \phi=(\phi_1,\phi_2,\cdots,\phi_n)$, the ex-ante utility of a bidder $i$ with true distribution $F_i$ is
% \begin{align*}
%    & U^M_i(\bm B, \bm \phi;F_i)=\\
%    &\E_{q_i\sim U[0,1], \forall i}[x_i(b_1(\phi_1(q_1)),\cdots,b_n(\phi_n(q_n));\bm B)\cdot v_i(q_i)-p_i(b_1(\phi_1(q_1)),\cdots,b_n(\phi_n(q_n));\bm B)]
% \end{align*}
% where $v_i(\cdot)$ and $b_i(\cdot)$  are the quantile function of $F_i$ and $B_i$, respectively.
\end{definition}

Notably, as the buyers aim to maximize the ex-ante utility, they only care about the expected payment $\E_{\bm{b}\sim \bm{B}}[p_i(\bm{b};\bm B)]$, which only depends on the bid distribution profile. Therefore, without loss of generality, we assume the payment rule is independent of the realized bids, and simplify the payment rule to $p_i(\bm{B})$ throughout the paper.

% \begin{definition}[Induced game \cite{tang2018price}]
%     The distribution-reporting game induced by distribution-reporting mechanism $M=(x,p)$ and the true distributions $F_1,\cdots,F_n$ is a normal-form game $G^M=(N,A,U^M)$, where
%     \begin{enumerate}
%         \item $N=[n]$ is the set of buyers,
%         \item $A=A_1\times \cdots \times A_n$ where $A_i=\{(B_i,s_i):B_i\in\calF_i,s_i(F_i)=B_i\}$ is the action space of buyer $i$,
%         \item $U^M=(U^M_1,\cdots,U^M_n)$ denotes the utility function for buyers, where 
%         $U^M_i((B_1,s_1),\cdots,(B_n,s_n))$ is defined as 
%         $$\E_{v_1,\cdots,v_n\sim F_1,\cdots,F_n}[x_i(s_1(v_1),\cdots,s_n(v_n);B_1,\cdots,B_n)\cdot v_i-p_i(s_1(v_1),\cdots,s_n(v_n);B_1,\cdots,B_n)].$$
%     \end{enumerate}
% \end{definition}

Without loss of generality, we always assume the distribution-reporting mechanism is allocation-monotone. Under this assumption, by \Cref{lemma:lemma3.1}, the dominant strategy of $\phi_i$ is always the weakly monotone function mapping $F_i$ to $B_i$, or equivalently, the identity mapping in the quantile space.
\begin{lemma}[Lemma 3.1 in \cite{tang2018price}]\label{lemma:lemma3.1}
    For any allocation-monotone distribution-reporting mechanism $M$ and fixed bid distribution profile $\bm B$, for any buyer $i\in[n]$ with value distribution $F_i$, the ex-ante utility $U^M_i(\bm B,\phi_i;F_i)$ is maximized when the arrangement function is the identity mapping $\phi_i(q)=q, ~\forall q\in [0,1]$.
\end{lemma}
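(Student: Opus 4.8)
The plan is to isolate the part of the ex-ante utility that depends on the arrangement function and then settle the maximization with a rearrangement (``bathtub'') argument. By the second line of \Cref{def:ex-ante utility}, the expected payment $\E_{\bm{b}\sim\bm B}[p_i(\bm{b};\bm B)]$ does not involve $\phi_i$, so it suffices to prove that
\[
G(\phi_i):=\E_{q\sim U[0,1]}\bigl[\bar x_i(\phi_i(q))\cdot v_i(q)\bigr],\qquad
\bar x_i(t):=\E_{\bm{b}_{-i}\sim\bm B_{-i}}\bigl[x_i(b_i(t),\bm{b}_{-i};\bm B)\bigr],
\]
is maximized at $\phi_i=\mathrm{id}$. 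It is convenient to phrase this in terms of the joint law $\mu$ of $(q,\phi_i(q))$ for $q\sim U[0,1]$: the constraint that $\phi_i$ (possibly randomized) preserves $U[0,1]$ says exactly that $\mu$ is a coupling on $[0,1]^2$ both of whose marginals are uniform, and writing $G$ against $\mu$ subsumes randomized arrangements with no extra work.

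First I would record two monotonicities. The quantile function $v_i(q)=F_i^{-1}(1-q)$ is weakly decreasing in $q$ by definition. For $\bar x_i$: the bid quantile function $b_i(\cdot)$ is weakly decreasing, and allocation-monotonicity of $M$ makes $x_i(b_i(t),\bm{b}_{-i};\bm B)$ weakly decreasing in $t$ for every fixed $\bm{b}_{-i}$; averaging over $\bm{b}_{-i}$ preserves this, so $\bar x_i$ is weakly decreasing on $[0,1]$. Hence for every $s\ge 0$ the super-level set $A_s:=\{t:\bar x_i(t)>s\}$ is a left-interval, of the form $[0,\ell_s)$ or $[0,\ell_s]$, with Lebesgue measure $\ell_s$.

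Then I would apply the layer-cake identity $\bar x_i(t)=\int_0^1\mathbf{1}[\bar x_i(t)>s]\,ds$ together with Fubini to get
\[
G(\phi_i)=\int_0^1\Bigl(\int_{[0,1]\times A_s}v_i(q)\,\mu(dq,dt)\Bigr)\,ds.
\]
Since $\mu$ has uniform second marginal, $\mu([0,1]\times A_s)=\ell_s$; since $\mu$ has uniform first marginal, the $q$-marginal of the restriction $\mu|_{[0,1]\times A_s}$ is a measure of total mass $\ell_s$ that is dominated by Lebesgue measure on $[0,1]$. As $v_i$ is weakly decreasing, integrating it against any such measure is at most $\int_0^{\ell_s}v_i(q)\,dq$ (the bathtub principle, equivalently the Hardy--Littlewood rearrangement inequality). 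When $\phi_i=\mathrm{id}$, $\mu$ sits on the diagonal and this restricted $q$-marginal equals Lebesgue measure on $[0,\ell_s)$, so the bound is attained for every $s$ simultaneously; integrating over $s$ yields $G(\phi_i)\le G(\mathrm{id})$, which is the claim.

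I expect the genuine content of the lemma to be exactly the two monotonicity observations together with the rearrangement step above; the only ``obstacle'' is bookkeeping around degeneracies. One must phrase the feasibility constraint on $\phi_i$ at the level of the coupling $\mu$, so that neither injectivity of $b_i$ nor determinism of $\beta_i$ needs to be assumed, and one should dispose of the trivial edge case in which $\E_{v\sim F_i}[v]=\infty$ while $\bar x_i$ is positive near $q=0$: there $G\equiv+\infty$ and the statement holds vacuously, whereas otherwise every integral above is finite and the displayed inequality is a genuine one.
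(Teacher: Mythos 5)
Your proposal is correct. Note that the paper itself does not prove this statement---it is imported verbatim as Lemma~3.1 of \cite{tang2018price}---so there is no in-paper argument to compare against; judged on its own, your coupling-plus-bathtub proof is a valid, self-contained rearrangement argument and captures exactly the content the lemma rests on: $v_i(\cdot)$ is weakly decreasing, $\bar x_i(\cdot)$ is weakly decreasing because the bid quantile function is decreasing and $M$ is allocation-monotone, the payment term is $\phi_i$-free since any admissible arrangement pushes $U[0,1]$ forward to $B_i$, and among couplings of two uniform marginals the expectation of a product of two similarly ordered decreasing functions is maximized on the diagonal (layer-cake over $s$, with $A_s$ a left interval, plus Hardy--Littlewood/bathtub for each layer). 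One small imprecision: in the infinite-mean edge case it is not true that $G\equiv+\infty$ for every arrangement (an arrangement could shift the mass near $q=0$ onto a region where $\bar x_i=0$, making $G(\phi_i)$ finite); what holds is that $G(\mathrm{id})=+\infty$ whenever $\bar x_i$ is not identically zero, which still gives the lemma---and in fact your main chain of inequalities already covers this case without any separate treatment, since every step is an inequality between nonnegative (possibly infinite) integrals justified by Tonelli.
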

By \Cref{lemma:lemma3.1}, we always assume that $\phi_i$ is the identity mapping, and simplify the buyer's strategy to solely deciding a bid distribution $B_i$.
In particular, the ex-ante utility is simplified to
\begin{align*}
   U^M_i(\bm B;F_i)=\E_{q_i\sim U[0,1],\bm{b}_{-i}\sim \bm{B}_{-i}}[x_i(b_i(q_i),\bm{b}_{-i};\bm B)\cdot v_i(q_i)]-p_i(\bm B).
\end{align*}

\begin{definition}
A distribution-reporting mechanism $M$ is ex-ante incentive compatible (ex-ante IC), if for any buyer $i\in[n]$, for any true distribution $F_i\in\calF_i$, for any $\mathbf{B}_{-i}$, the ex-ante utility $U^M_i(B_i,\bm{B}_{-i};F_i)$ is always maximized by truthfully reporting $B_i=F_i$, i.e., 
$U^M_i(F_i,\bm{B}_{-i};F_i)=\max_{B_i\in\calF_i}U^M_i(B_i,\bm{B}_{-i};F_i)$.

A distribution-reporting mechanism $M$ is ex-ante individual rational (ex-ante IR), if for any true distribution profile $\bm{F}\in\bm{\calF}$, for any buyer $i\in[n]$, the ex-ante utility obtained by truthful report is non-negative, i.e., $U^M_i(\bm{F};F_i)\geq 0$.
\end{definition}

% \begin{definition}
% A distribution-reporting mechanism $M=(x,p)$ is feasible in the single-item setting, if for any bid profile $\bm{b}$ and distribution profile $\bm{B}\in\bm{\calF}$, the total allocated probability of the item is at most $1$, i.e., $\sum_{i\in[n]}x_i(\bm{b};\bm{B})\leq 1$.

% Note that we may generalize the definition of feasibility in later sections.
% \end{definition}
% \lny{shorter expression? We say a distribution-reporting mechanism $M$ is valid, if it is ex-ante IC, ex-ante IR and feasible.}
We say a distribution-reporting mechanism $M$ is \textit{valid}, if it is ex-ante IC, ex-ante IR, and feasible.
When a distribution-reporting mechanism $M$ is valid, we assume all buyers truthfully report $B_i=F_i$, and we define its expected revenue as follows.
\begin{definition}
    For any valid distribution-reporting mechanism $M=(x,p)$, for any instance $\bm{F}\in\bm{\calF}$, we define the expected revenue of $M$ as
\begin{align*}
\mathrm{REV}(M,\bm{F})&=\sum_{i\in[n]} p_i(\bm{F}).
\end{align*}
\end{definition}

\subsection{Performance measurement}
In the distribution-reporting model, our goal is to design mechanisms that maximize revenue without knowing the buyers’ true value distributions. Because no single mechanism can guarantee optimal revenue across all possible distributions, we adopt a worst-case approximation approach from the prior-independent mechanism design framework. This allows us to evaluate a mechanism’s performance by comparing its revenue to  suitable benchmarks.

We adopt the either-or approximation framework inspired by \citet{GuruganeshMWW24}, though with a key difference in the choice of benchmarks: while we retain the second-price auction revenue $\SPA$ as one benchmark, we adopt the optimal social welfare $\WEL$ as the other benchmark. For any instance $\bm{F} \in \bm{\calF}$, these two benchmarks are defined as:
\begin{align*}
    \WEL(\bm{F})&=\E_{v_i\sim F_i,\forall i}[\max_{i\in[n]} v_i],\\
    \SPA(\bm{F})&=\E_{v_i\sim F_i,\forall i}[\secmax_{i\in[n]} v_i],
\end{align*}
where $\secmax_{i \in [n]} v_i$ denotes the second-highest value among the sampled value profile.

We now formally define the notion of an either-or approximation:
\begin{definition}
    A distribution-reporting mechanism $M$ is said to achieve an either-or approximation with respect to $\WEL$ and $\SPA$ with parameters $\delta_1 > 0$ and $\delta_2 > 0$ if, for every instance $\bm{F} \in \bm{\calF}$, it holds that
    $$
    \REV(M; \bm{F}) \geq \min\{\delta_1 \cdot \WEL(\bm{F}),\ (1 + \delta_2) \cdot \SPA(\bm{F})\}.
    $$
    In other words, for any instance, the expected revenue of $M$ either approximates the optimal social welfare up to a constant factor, or strictly outperforms the second-price auction revenue by a constant margin.
\end{definition}

In comparison, \citet{GuruganeshMWW24} study prior-independent mechanisms that achieve an either-or approximation with respect to Myerson revenue and second-price revenue under regularity assumptions, and establish matching upper bounds. Specifically, for any parameter $\tau>e$, they prove that there exists a prior-independent mechanism that guarantees $\min\{\Omega(\frac{1}{\ln\tau}) \cdot \Myer,\tau\cdot \SPA\}$ revenue for all regular distributions, where $\Myer$ denote the expected revenue of Myerson auction. They also show a matching upper bound up to constant factors.

It is well-known that $\WEL(\bm{F})$ always upper bounds $\Myer(\bm{F})$. Therefore, our choice of benchmark is generally more demanding than that in \cite{GuruganeshMWW24}.
% For completeness, we recall the definition of Myerson revenue:
% $$
% \Myer(\bm{F}) = \E_{\bm{v} \sim \bm{F}} \left[\max \left\{ \max_{i \in [n]} \bar{\varphi}_{F_i}(v_i),\ 0 \right\} \right],
% $$
% where $\bar{\varphi}_{F_i}(v_i)$ denotes the ironed virtual value function of distribution $F_i$, which coincides with the virtual value $v_i - \frac{1 - F_i(v_i)}{F_i'(v_i)}$ when $F_i$ is regular.

Our use of this either-or benchmark is motivated by two key considerations, which we outline below.
\begin{enumerate}
\item \textbf{No ex-ante IC distribution-reporting mechanism can guarantee an $\epsilon$-approximation to $\WEL$, nor can it guarantee a $(1+\epsilon)$-approximation to second-price revenue, for any $\epsilon>0$.}  
We prove these limitations in \Cref{lemma:impossible-WELandSPA}. The core obstacle arises from the ex-ante incentive compatibility constraint, which forces any distribution-reporting mechanism to behave like a bid-reporting IC mechanism when facing degenerate distributions.% one-point distributions\lny{name?}.

\item \textbf{No prior-independent mechanism can achieve an either-or approximation with respect to $\WEL$ and $\SPA$ using parameters $\delta_1 = \delta_2 = \epsilon$ for any $\epsilon > 0$, even when $n=2$, and all distributions are identical and regular.}
We establish this result in \Cref{lemma:impossible-prior-independent}. This demonstrates a separation between prior-independent and distribution-reporting mechanisms in terms of revenue performance. In contrast to this impossibility, our main result shows that distribution-reporting mechanisms can achieve either-or approximation with respect to $\WEL$ and $\SPA$ using constant parameters, for any fixed $n \geq 2$, without requiring regularity. The key advantage lies in their ability to utilize the reported distributions to extract additional revenue.

\end{enumerate}

We remark that existing literature  has extensively studied the i.i.d. setting of prior-independent bid-reporting mechanism design. However, designing a distribution-reporting mechanism in the identical distribution setting is simple, as the distribution of a buyer will be revealed by other buyers' reports, allowing the mechanism to extract the entire social welfare as revenue (see \Cref{thm:iid-REV=WEL}).
Therefore, our main results in this paper focus on the heterogeneous setting with non-identical value distributions, where the either-or approximation approach becomes necessary.

\subsection{Impossibility results}
Here we present some impossibility results for the revenue of bid-reporting mechanisms and distribution-reporting mechanisms.
\begin{lemma}\label{lemma:impossible-prior-independent}
    For $n\geq 2$ buyers and any $\epsilon\in(0,1)$, for any IC bid-reporting prior-independent mechanism $M$, there exists an instance $\bm{F}\in\calA^{R}$ of identical and regular distributions, such that $\REV(M;\bm{F})<\min\{\epsilon \cdot \WEL,(1+\epsilon)\cdot \SPA\}$.
\end{lemma}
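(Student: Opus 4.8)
The plan is to exhibit, for each prior-independent IC mechanism $M$, a single-parameter family of identical regular distributions indexed by a scale, and argue that $M$ cannot simultaneously capture a large fraction of $\WEL$ on the ``spread-out'' members of the family and beat $\SPA$ by a constant factor on the ``concentrated'' members. The natural candidate is the equal-revenue distribution truncated at some large value $H$: that is, $v(q) = \min\{1/q, H\}$, so that $F(x) = 1 - 1/x$ on $[1,H]$ with an atom at $H$. This distribution is regular, has the feature that $\WEL$ grows like $\log H$ (since $\E[\max_i v_i]$ over $n$ i.i.d.\ copies is $\Theta(\log(nH))$), while the expected \emph{revenue-relevant} quantity — the price one can charge at any fixed quantile — is bounded by a constant. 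Scaling this distribution by a geometric factor $2^j$ for $j \in \{0,1,\dots,T\}$ gives a family; the point is that a prior-independent mechanism sees only bids, not the scale, so on instance $\bm F^{(j)}$ its behavior is just a fixed IC allocation/payment rule applied to bids drawn from $2^j$ times the base distribution.

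The key steps, in order: (1) Fix the base truncated equal-revenue distribution $G$ on $[1,H]$ and let $\bm F^{(0)}$ be $n$ i.i.d.\ copies of $G$; compute $\WEL(\bm F^{(0)}) = \Theta(\log(nH))$ and $\SPA(\bm F^{(0)}) = \Theta(\log(nH))$ as well (the second-highest of $n$ equal-revenue draws is also $\Theta(\log(nH))$ in expectation for fixed $n$) — actually for the argument I only need $\SPA = O(1)\cdot\WEL$ and, more importantly, a \emph{pointwise} revenue bound. (2) Invoke Myerson/revenue-equivalence in the bid-reporting world: since $M$ is IC, its expected revenue on any i.i.d.\ instance is at most the optimal (Myerson) revenue for that instance, and for the equal-revenue distribution the Myerson revenue of selling one item to $n$ i.i.d.\ bidders is $O(\log n)$ — in fact the virtual value is constant (equal to $1$ below the cap), so Myerson revenue is bounded and in particular $\REV(M;\bm F^{(0)}) \le C$ for an absolute constant $C$ independent of $H$. (3) Now take $H$ enormous compared to $1/\epsilon$, $n$, so that $\epsilon \cdot \WEL(\bm F^{(0)}) = \epsilon\cdot\Theta(\log(nH)) \gg C \ge \REV(M;\bm F^{(0)})$; simultaneously check $(1+\epsilon)\cdot\SPA(\bm F^{(0)}) \ge \SPA \ge$ something — here I need $\REV(M;\bm F^{(0)})$ to also fall below $(1+\epsilon)\SPA$. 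Since $\SPA = \Theta(\log(nH))$ is large while $\REV \le C$, both terms of the min exceed $\REV(M;\bm F^{(0)})$ once $H$ is large enough, and $\bm F^{(0)}$ itself already is the desired instance — no family/scaling is needed if the truncated equal-revenue distribution alone does the job.

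So the cleaner route is: the truncated equal-revenue distribution $G_H$ has $\WEL(\bm F)$ and $\SPA(\bm F)$ both $\Theta_n(\log H) \to \infty$, but \emph{any} IC prior-independent (indeed any DSIC) mechanism has revenue bounded by the Myerson revenue of $G_H^{\otimes n}$, which is $O(\log n)$, a constant in $H$. Choosing $H$ large enough that $\epsilon\log H$ exceeds this constant, and also $(1+\epsilon)\SPA \ge \SPA = \Theta(\log H)$ exceeds it, gives $\REV(M;\bm F) < \min\{\epsilon\,\WEL,(1+\epsilon)\SPA\}$. The main obstacle — and the step to state carefully — is the uniform revenue bound: I must argue that $\REV(M;\bm F) \le \Myer(\bm F)$ for every IC prior-independent $M$ (this is standard: a prior-independent IC mechanism is in particular a Bayesian IC mechanism for the instance $\bm F$, and Myerson's characterization bounds its revenue by $\Myer(\bm F)$), and then that $\Myer(G_H^{\otimes n})$ stays bounded as $H \to \infty$. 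The latter follows because $G_H$'s ironed virtual value equals $1$ everywhere on its support below the atom (the equal-revenue property) and at most $H$ with probability $\le 1/H$ at the atom, so Myerson revenue is at most $1 + 1 = O(1)$ for any fixed $n$ — more precisely one shows $\Myer(G_H^{\otimes n}) \le H_n := \sum_{k=1}^n 1/k = O(\log n)$ by the standard argument that selling to $n$ i.i.d.\ equal-revenue bidders yields at most $\sum_k 1/k$. Everything else is bookkeeping: computing $\WEL = \E[\max] \ge \E[\max]$ lower bound of order $\log H$ via $\Pr[\max \le t] = (1-1/t)^n$, and the analogous $\Theta(\log H)$ estimate for $\SPA$.
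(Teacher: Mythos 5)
Your overall construction (i.i.d.\ truncated equal-revenue distributions with a large cap $H$, together with the bound $\REV(M;\bm F)\le \Myer(\bm F)$ for IC bid-reporting mechanisms) is the same as the paper's, but your argument for the $(1+\epsilon)\cdot\SPA$ half of the minimum has a genuine gap. You assert $\SPA(\bm F)=\Theta(\log (nH))$, growing with $H$; this is false. The second-highest of $n$ i.i.d.\ equal-revenue draws has tail $\Pr[\secmax_i v_i\ge t]\le\binom{n}{2}t^{-2}$, so its expectation is bounded independently of $H$; in fact for this instance $\SPA(\bm F)=\Myer(\bm F)=H\bigl(1-(1-\tfrac1H)^n\bigr)\le n$. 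Since your route to $\REV(M;\bm F)<(1+\epsilon)\SPA(\bm F)$ is ``$\SPA$ tends to infinity while the revenue stays below a constant $C$,'' that step collapses. It also cannot be rescued by the generic facts you cite: the only general inequality you use is $\REV\le\Myer$, but $\Myer\ge\SPA$ on every instance (SPA is itself IC), and the best general i.i.d.-regular bound in the other direction is $\SPA\ge\frac{n-1}{n}\Myer$, which does not yield a $(1+\epsilon)$ factor for small $\epsilon$. What is needed---and what the paper uses---is the instance-specific identity that the truncated equal-revenue distribution has zero virtual value below its cap, so $\Myer(\bm F)=\SPA(\bm F)$ exactly, whence $\REV(M;\bm F)\le\Myer(\bm F)=\SPA(\bm F)<(1+\epsilon)\SPA(\bm F)$ strictly.

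A secondary inaccuracy: your claim $\Myer(G_H^{\otimes n})\le\sum_{k=1}^{n}1/k=O(\log n)$ (or ``$O(1)$'') is also wrong; the exact value $H\bigl(1-(1-\tfrac1H)^n\bigr)$ tends to $n$ as $H\to\infty$. This does not endanger the $\epsilon\cdot\WEL$ half, since any bound independent of $H$ suffices once $H$ is chosen large enough (the paper takes $H=e^{n/\epsilon}$, giving $\Myer\le n<\epsilon(1+\ln H)\le\epsilon\cdot\WEL$), but the constants should be corrected. Once you replace the growth-of-$\SPA$ argument by the $\Myer=\SPA$ identity, your outline coincides with the paper's proof.
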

\begin{proof}
Let $H=e^{\frac{n}{\epsilon}}$.
Consider the instance $\bm{F}$ where $F_1,\cdots,F_n$ are identically the truncated equal-revenue distribution defined by quantile function $v(q)=\frac1{\max\{q,\frac1{H}\}}$. We have $\WEL(\bm{F})\geq\E_{v_1\sim F_1}[v_1]=1+\ln H$, and $\SPA(\bm{F})=\Myer(\bm{F})=H(1-(1-\frac1{H})^n)\leq H\cdot n\cdot\frac1{H}=n$ by the property of truncated equal-revenue distribution. It follows that $\Myer(\bm{F})\leq \min\{\frac{n}{1+\ln H}\WEL(\bm{F}),\SPA(\bm{F})\}<\min\{\epsilon\cdot\WEL(\bm{F}),(1+\epsilon)\cdot\SPA(\bm{F})\}$.  It remains to recall that the revenue of any bid-reporting mechanism $M$ is upper bounded by $\Myer(\bm{F})$, which completes the proof.\qed
% Take $H=e^{\frac{2}{\epsilon}}$.
% Consider the instance $\bm{F}$ where $F_1$ and $F_2$ are identically the truncated equal-revenue distribution given by quantile functions $v_1(q)=v_2(q)=\frac1{\max\{q,\frac1{H}\}}$, and each of $F_3,\cdots,F_n$ is the one-point distribution at $0$. We have $\WEL(\bm{F})\geq\E_{q_1}[v_1(q_1)]=1+\ln H$, and $\Myer(\bm{F})=\SPA(\bm{F})=2-\frac1{H}$ by the property of truncated equal-revenue distribution. It follows that $\Myer(\bm{F})=\min\{\frac{2}{1+\ln H}\WEL(\bm{F}),\SPA(\bm{F})\}\leq \min\{\epsilon\cdot\WEL(\bm{F}),(1+\epsilon)\cdot\SPA(\bm{F})\}$.  It remains to recall that the revenue of any bid-reporting mechanism $M$ is upper bounded by $\Myer(\bm{F})$, which completes the proof.\qed
\end{proof}

To prove the upper bounds on the revenue of distribution-reporting mechanisms, we frequently utilize the following technical observation: When we consider a distribution family that can be viewed as a base distribution multiplied by different factors, any ex-ante IC distribution-reporting mechanism behaves like an IC bid-reporting mechanism, because reporting a distribution in this family is essentially reporting the multiplicative factor. Formally, we have the following lemma:
\begin{lemma}\label{lemma:reduce-to-bid-report}
Let $M=(x,p)$ be any ex-ante IC distribution-reporting mechanism. Consider a base distribution $\hat{F}$ represented by quantile function $\hat{v}(\cdot)$, and a set of multiplicative factors $A\subseteq \bbR_{+}$. For each $a\in A$, let $F^{a}$ denote the distribution obtained by scaling $\hat{F}$ with multiplicative factor $a$, represented by the quantile function $v^a(q)=a\cdot\hat{v}(q)$. 

% For any $i\in[n]$ such that $\{F^a:a\in A\}\subseteq \calF_i$, given any fixed $\bm{B}_{-i}\in\bm{\calF}_{-i}$, define $\tilde{x}_i:A\to\bbR$ and $\tilde{p}_i:A\to\bbR$ as follows:
% \begin{align*}
% \tilde{x}_i(a)&:=\E_{q_i\sim U[0,1],\bm{b}_{-i}\sim \bm{B}_{-i}}[\hat{v}(q_i)x_i(v^a(q_i),\bm{b}_{-i};F^a,\bm{B}_{-i})],\\
% \tilde{p}_i(a)&:=p_i(F^a,\bm{B}_{-i}),
% \end{align*}
% which can be viewed as the allocation and payment rules of a bid-reporting mechanism over bid space $A$.
% Then the following incentive compatibility condition holds: For any $a,a'\in A$, it holds that 
% $$a\cdot\tilde{x}_i(a)-\tilde{p}_i(a)\geq a\cdot\tilde{x}_i(a')-\tilde{p}_i(a').$$

% Moreover, assuming the allocation rule of $M$ is feasible in the single-item setting, and that $\{F^a:a\in A\}\subseteq \calF_i$ holds for all $i\in[n]$, then the following feasibility condition holds:
% For any $a_1,\cdots,a_n\in A$, it holds that
% $$\sum_{i\in[n]}\tilde{x}_i(a_i)\leq\E_{q_1,\cdots,q_n\sim U[0,1]}[\max_{i\in[n]}\hat{v}(q_i)].$$
Assume that $\{F^a:a\in A\}\subseteq \calF_i$ holds for all $i\in[n]$. Then the distribution-reporting mechanism $M$ induces a bid-reporting mechanism $(\tilde{x},\tilde{p})$ on bid space $A$, with allocation and payment rules $\tilde{x}_i:A^n\to\bbR$ and $\tilde{p}_i:A^n\to\bbR$ defined as follows:
\begin{align*}
\tilde{x}_i(a_1,\cdots,a_n)&:=\E_{q_1,\cdots,q_n\sim U[0,1]}[\hat{v}(q_i)x_i(v^{a_1}(q_1),\cdots,v^{a_n}(q_n);F^{a_1},\cdots,F^{a_n})],\\
\tilde{p}_i(a_1,\cdots,a_n)&:=p_i(F^{a_1},\cdots,F^{a_n}),
\end{align*}

For $(\tilde{x},\tilde{p})$, the following incentive compatibility condition holds: For any $\bm{a}_{-i}\in A^{n-1}$, $a_i,a_i'\in A$, it holds that 
$$a_i\cdot\tilde{x}_i(a_i,\bm{a}_{-i})-\tilde{p}_i(a_i,\bm{a}_{-i})\geq a_i\cdot\tilde{x}_i(a_i',\bm{a}_{-i})-\tilde{p}_i(a_i',\bm{a}_{-i}).$$

Moreover, assuming that $M$ is feasible in the single-item setting, then the following feasibility condition for $\tilde{x}$ holds:
For any $a_1,\cdots,a_n\in A$, for each $i\in[n]$, $\tilde{x}_i(a_1,\cdots,a_n)\leq \int_0^1\hat{v}(q_i)dq_i$.
% For any $a_1,\cdots,a_n\in A$, it holds that
% $$\sum_{i\in[n]}\tilde{x}_i(a_i)\leq\E_{q_1,\cdots,q_n\sim U[0,1]}[\max_{i\in[n]}\hat{v}(q_i)].$$
\end{lemma}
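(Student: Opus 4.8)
The plan is to observe that, for any reports $a_i'$ and $\bm{a}_{-i}$, the quantity $a_i\cdot\tilde{x}_i(a_i',\bm{a}_{-i})-\tilde{p}_i(a_i',\bm{a}_{-i})$ is precisely the ex-ante utility of buyer $i$ under $M$ when her true value distribution is $F^{a_i}$, she reports the bid distribution $F^{a_i'}$, and every other buyer $j$ reports $F^{a_j}$. Once this identification is made, the incentive-compatibility claim reduces to the ex-ante IC of $M$, and the feasibility claim reduces to the pointwise bound $x_i\le 1$.

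For the key identity, fix $i\in[n]$, $\bm{a}_{-i}\in A^{n-1}$, and $a_i,a_i'\in A$, and let $\bm{B}$ be the bid-distribution profile with $B_j=F^{a_j}$ for $j\neq i$ and $B_i=F^{a_i'}$. Because $M$ is allocation-monotone, \Cref{lemma:lemma3.1} lets us use the identity arrangement function, so the ex-ante utility of buyer $i$ with true distribution $F^{a_i}$ is given by the simplified formula in \Cref{def:ex-ante utility}. Into that formula I substitute the true quantile function $v_i(q)=v^{a_i}(q)=a_i\hat{v}(q)$, buyer $i$'s bid quantile function $b_i(q)=v^{a_i'}(q)=a_i'\hat{v}(q)$, and the realized bids $b_j=v^{a_j}(q_j)=a_j\hat{v}(q_j)$ of the other buyers with $q_j\sim U[0,1]$. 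Pulling the constant $a_i$ out of $v_i(q_i)$ and comparing with the definitions of $\tilde{x}_i$ and $\tilde{p}_i$ gives
$$U^M_i(B_i,\bm{B}_{-i};F^{a_i})=a_i\cdot\tilde{x}_i(a_i',\bm{a}_{-i})-\tilde{p}_i(a_i',\bm{a}_{-i}),$$
and the same computation with $a_i'$ replaced by $a_i$ yields $U^M_i(F^{a_i},\bm{B}_{-i};F^{a_i})=a_i\cdot\tilde{x}_i(a_i,\bm{a}_{-i})-\tilde{p}_i(a_i,\bm{a}_{-i})$.

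Now I would invoke ex-ante IC: the hypothesis $\{F^a:a\in A\}\subseteq\calF_j$ for every $j$ ensures that $F^{a_i}\in\calF_i$ is an admissible true distribution, that $F^{a_i'}\in\calF_i$ is an admissible report for buyer $i$, and that each $F^{a_j}\in\calF_j$ is an admissible report for $j\neq i$. Hence ex-ante IC of $M$ gives $U^M_i(F^{a_i},\bm{B}_{-i};F^{a_i})\geq U^M_i(F^{a_i'},\bm{B}_{-i};F^{a_i})$, which by the identities above is exactly $a_i\cdot\tilde{x}_i(a_i,\bm{a}_{-i})-\tilde{p}_i(a_i,\bm{a}_{-i})\geq a_i\cdot\tilde{x}_i(a_i',\bm{a}_{-i})-\tilde{p}_i(a_i',\bm{a}_{-i})$. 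For the feasibility claim, note that $x_i$ takes values in $[0,1]$ and that the quantile function $\hat{v}$ of a distribution supported on $\bbR_+$ is nonnegative, so $\hat{v}(q_i)\cdot x_i(\cdots)\leq\hat{v}(q_i)$ pointwise; taking expectations yields $\tilde{x}_i(a_1,\dots,a_n)\leq\E_{q_i\sim U[0,1]}[\hat{v}(q_i)]=\int_0^1\hat{v}(q_i)\,dq_i$.

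I expect the only step needing care to be the substitution in the second paragraph — keeping straight which scaling factor governs buyer $i$'s true value ($a_i$) versus her bid ($a_i'$), and recalling that the identity arrangement function is without loss of generality thanks to allocation-monotonicity and \Cref{lemma:lemma3.1}. Everything else follows directly from the definitions and from the ex-ante IC of $M$.
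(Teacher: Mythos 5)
Your proposal is correct and follows essentially the same route as the paper's proof: identify $a_i\cdot\tilde{x}_i(a_i',\bm{a}_{-i})-\tilde{p}_i(a_i',\bm{a}_{-i})$ with the ex-ante utility $U^M_i(F^{a_i'},\bm{F}^{\bm{a}_{-i}};F^{a_i})$ by substituting the scaled quantile functions and pulling out the factor $a_i$, then invoke ex-ante IC of $M$, and obtain feasibility from $x_i\le 1$ together with $\hat{v}\ge 0$. The only cosmetic difference is that you explicitly cite allocation-monotonicity and \Cref{lemma:lemma3.1} to justify the identity arrangement, which the paper's proof uses implicitly via the simplified utility formula.
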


\begin{lemma}\label{lemma:impossible-WELandSPA}
    For $n\geq 2$ buyers and $\epsilon>0$, for any valid distribution-reporting mechanism $M$, the following two statements hold:
    \begin{enumerate}
        \item There is an instance $\bm{F}\in\calA^{R}$ such that $\REV(M;\bm{F})<\epsilon\cdot\Myer(\bm{F})\leq\epsilon\cdot\WEL(\bm{F})$.
        \item There is an instance $\bm{F}\in\calA^{R}$ such that $\REV(M;\bm{F})<(1+\epsilon)\cdot\SPA(\bm{F})$.
    \end{enumerate}
    Specifically, the instance $\bm{F}$ can only consist of degenerate distributions, i.e., deterministic values.
\end{lemma}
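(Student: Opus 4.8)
\textbf{Proof plan for \Cref{lemma:impossible-WELandSPA}.}

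The plan is to exploit the ex-ante IC constraint by confronting the mechanism with distribution classes $\calF_i$ consisting \emph{only} of degenerate (point-mass) distributions. A degenerate distribution placing all mass on the value $c$ is exactly the scaled base distribution $F^c$ where $\hat F$ is the point mass at $1$ and the multiplicative factor is $c$; hence \Cref{lemma:reduce-to-bid-report} applies with $A$ an arbitrary (finite or countable) set of positive reals, $\hat v(q)\equiv 1$. The lemma then tells us that $M$ induces a bid-reporting mechanism $(\tilde x,\tilde p)$ on bid space $A$ satisfying the standard single-parameter IC inequality and, in the single-item case, the feasibility bound $\sum_i \tilde x_i(\bm a)\le 1$ (since $\int_0^1 \hat v(q_i)\,dq_i = 1$, and feasibility of $M$ transfers: one needs to also observe $\sum_i \tilde x_i \le 1$, which follows by taking expectations of $\sum_i x_i(\bm b;\bm B)\le 1$). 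Crucially, on these degenerate instances $\REV(M;\bm F) = \sum_i \tilde p_i(\bm c)$ where $\bm F$ is the profile of point masses at $\bm c$, $\WEL(\bm F) = \max_i c_i$, and $\SPA(\bm F) = \Myer(\bm F) = \secmax_i c_i$. So both statements reduce to: a deterministic-value IC auction cannot extract more than a vanishing fraction of $\max_i c_i$ relative to $\secmax_i c_i$ on a worst-case choice of $\bm c$ — which is precisely the classical fact that a prior-independent (here, plain dominant-strategy IC) single-item auction on deterministic bids has revenue at most the second-highest bid.

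For \textbf{statement 2}, I would choose $A$ so that the induced bid-reporting mechanism must, on some profile, charge strictly less than $(1+\epsilon)\secmax$. The clean route: fix $n$ buyers, let buyers $3,\dots,n$ report $0$ (so $F_i$ is the point mass at $0$) and consider buyers $1,2$ with bids in $A=\{1,T\}$ for large $T$. By single-parameter IC applied to $(\tilde x_1,\tilde p_1)$ (Myerson's lemma / cycle-monotonicity on a two-point domain), $\tilde p_1(T,\cdot)-\tilde p_1(1,\cdot) \le T\cdot(\tilde x_1(T,\cdot)-\tilde x_1(1,\cdot)) \le T$, and similarly the total payment on the profile $(1,1)$ is bounded in terms of allocations. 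A standard averaging/telescoping argument over the four profiles in $\{1,T\}^2$, using feasibility $\tilde x_1+\tilde x_2\le 1$, forces that on at least one profile the revenue is at most a bounded multiple of the second-highest bid, in fact one can push the bound to revenue $< (1+\epsilon)\secmax$ on the profile $(T,T)$ (the second-highest bid is $T$) by choosing $T=T(\epsilon)$ large — because the extra revenue above $2$ extractable from $(T,T)$ over $(1,1)$ is bounded by a constant independent of $T$ via IC, while $\SPA=T\to\infty$. The instance $\bm F$ is then the profile of degenerate distributions at $(T,T,0,\dots,0)$.

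For \textbf{statement 1}, I would use a richer geometric bid set $A=\{1,2,4,\dots,2^L\}$, again with buyers $3,\dots,n$ reporting $0$, and argue that the two-buyer IC mechanism $(\tilde x,\tilde p)$ cannot have $\tilde p_1(a_1,a_2)+\tilde p_2(a_1,a_2) \ge \epsilon\cdot\Myer$ for \emph{all} of the $(L+1)^2$ profiles $\bm a$. Here $\Myer(\bm F)$ for the degenerate instance at $(a_1,a_2,0,\dots)$ equals $\secmax = \min(a_1,a_2)$. I would set up a potential/telescoping inequality: by Myerson's lemma on the discrete domain, $\tilde p_i$ is pinned down by $\tilde x_i$ up to the value-$0$ normalization, and $\tilde p_i(a_i,a_{-i}) \le a_i\,\tilde x_i(a_i,a_{-i})$ by IR-at-the-bottom (taking $a_i'$ to be the smallest element and using $\tilde x_i,\tilde p_i\ge 0$ there — in fact IR follows from ex-ante IR of $M$). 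Then the total revenue on profile $(2^j,2^j)$ is at most $2^j(\tilde x_1 + \tilde x_2) \le 2^j$, matching $\Myer=2^j$; the point is the \emph{reverse}: summing the IC inequalities along the diagonal shows the sum of revenues $\sum_{j} \tilde p(2^j,2^j)$ telescopes against a single term of size $O(2^L)$, so the \emph{average} revenue-to-$\Myer$ ratio over the $L+1$ diagonal profiles is $O(1/L)$, hence some profile achieves ratio $< \epsilon$ once $L > L(\epsilon)$. Concretely: IC gives $\tilde p_i(2^{j+1},\cdot) - \tilde p_i(2^j,\cdot) \le 2^{j+1}(\tilde x_i(2^{j+1},\cdot)-\tilde x_i(2^j,\cdot))$ and $\tilde p_i(2^{j},\cdot)-\tilde p_i(2^{j+1},\cdot)\le -2^{j}(\tilde x_i(2^{j},\cdot)-\tilde x_i(2^{j+1},\cdot))$; combining yields $2^j(\tilde x_i(2^j,\cdot)-\tilde x_i(2^{j+1},\cdot)) \le \tilde p_i(2^{j+1},\cdot)-\tilde p_i(2^j,\cdot)\le 2^{j+1}(\cdot)$, and an Abel-summation over the diagonal bounds $\sum_j \tilde p(2^j,2^j)/2^j$ by a constant, giving the claim.

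The \textbf{main obstacle} I anticipate is statement 1: getting the telescoping/Abel-summation bookkeeping right so that the diagonal sum $\sum_{j=0}^{L}\big(\tilde p_1(2^j,2^j)+\tilde p_2(2^j,2^j)\big)/2^j$ is genuinely $O(1)$ (independent of $L$), because the allocations $\tilde x_i(2^j,2^j)$ shift as \emph{both} arguments move along the diagonal — the single-parameter IC of \Cref{lemma:reduce-to-bid-report} only controls deviations in one coordinate at a time, so one must chain through intermediate profiles $(2^j,2^j)\to(2^{j+1},2^j)\to(2^{j+1},2^{j+1})$ and carefully track the feasibility constraint $\tilde x_1+\tilde x_2\le 1$ to keep the bound from accumulating. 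The analogous two-point argument for statement 2 is routine by comparison. Both arguments deliver instances $\bm F$ supported entirely on degenerate distributions, as required.
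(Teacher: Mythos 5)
Your first step --- using \Cref{lemma:reduce-to-bid-report} with the point mass at $1$ as base distribution so that $M$ induces an IC bid-reporting mechanism on degenerate instances --- is exactly the paper's reduction. But both of your subsequent arguments rest on claims that are false, and the one for statement 1 cannot be repaired along the route you describe. First, a factual slip: for degenerate distributions $\Myer(\delta_{a_1},\dots,\delta_{a_n})=\max_i a_i=\WEL$, not $\secmax_i a_i$ (the seller knows the values and posts the top value to the top buyer); the paper notes this explicitly. Second, and fatally, your key telescoping claim --- that IC forces $\sum_j\bigl(\tilde p_1(2^j,2^j)+\tilde p_2(2^j,2^j)\bigr)/2^j=O(1)$ --- is contradicted by the second-price auction itself, which is IC, feasible, and arises as the induced mechanism of a valid distribution-reporting mechanism (a threshold-augmented mechanism with zero thresholds): it earns revenue exactly $2^j$ on every diagonal profile $(2^j,2^j,0,\dots,0)$, so the normalized sum is $L+1$, and the ratio to $\Myer$ is $1$ on every such instance. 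The same example refutes your statement-2 claim that IC bounds the extra revenue at $(T,T)$ over $(1,1)$ by a constant (SPA's gap is $T-1$). The structural reason your statement-1 plan fails is that no instance whose top two values coincide can witness the impossibility, since SPA already extracts full $\Myer=\WEL$ there; the hard instances must have one dominant value and everyone else at (essentially) zero, so that approximating $\Myer=\WEL$ means approximating a single buyer's unknown value over a wide range.

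That is precisely what the paper does: with $a_2=\cdots=a_n=0$ and $a_1$ drawn from a truncated equal-revenue distribution (quantile function $v(q)=1/\max\{q,1/H\}$, $H=e^{1/\epsilon}$), the induced IC bid-reporting mechanism's expected revenue is at most $1$ (single-buyer Myerson bound for the equal-revenue distribution), while $\E[\Myer]=\E[a_1]=1+\ln H>1/\epsilon$; an averaging argument then yields a specific degenerate instance with $\REV<\epsilon\cdot\Myer$. You could make your discrete geometric grid work, but only in this ``one high bidder, others zero'' configuration and with equal-revenue-style weights on the top value, i.e.\ by rediscovering the same averaging argument --- the diagonal chaining you propose cannot succeed. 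Statement 2, by contrast, needs no IC argument at all: on the degenerate instance $(1,1,0,\dots,0)$, ex-ante IR gives $\REV(M;\bm F)\le\WEL(\bm F)=\SPA(\bm F)<(1+\epsilon)\cdot\SPA(\bm F)$, which is the paper's one-line proof; your two-point construction at $(T,T)$ reaches the same conclusion only via the trivial IR bound, not via the constant-gap IC claim you state.
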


\section{Main Results: Single Item Setting}\label{sec:main-result}

\subsection{Threshold-Augmented Mechanisms}
In this subsection, we propose the general structure of threshold-augmented mechanisms, which guarantees ex-ante incentive compatibility.

\begin{definition}\label{def:threshold-augmented-mechanism}
A threshold-augmented mechanism $TAM_{M^S,\bm{\tau}}$ is specified by a bid-reporting  mechanism $M^S$ and $n$ threshold functions $\tau_1(\cdot),\cdots,\tau_n(\cdot)$, where
\begin{itemize}
    \item The bid-reporting mechanism is denoted by $M^S=(x^S,p^S)$, where $x_i^S:\bbR_+^n\to [0,1]^n$ and $p_i^S:\bbR_+^n\to \bbR_+^n$ are the allocation and payment rules for buyer $i$.
    \item Each $\tau_i:\prod_{j\neq i}\calF_j\to\mathbb{R}$ calculates a threshold price $\tau_i(\bm{B}_{-i})$ for buyer $i$ based on the distribution of other buyers.
\end{itemize}

% The distribution-reporting mechanism $TAM_{M^S,\bm{\tau}}$ sells $\tilde{x}_i(\cdot)$ to buyer $i$ as a bundle, with a posted price $\E_{\bm{v}\sim\bm{B}}[s_i(\bm{v})]+\tau_i(\bm{B}_{-i})$. 
Denote $u^S_i=U_i^{M^S}(\bm{B};B_i)=\E_{\bm{v}\sim \bm{B}}[v_i\cdot x^S_i(v_1,\cdots,v_n)-p^S_i(v_1,\cdots,v_n)]$, which is the expected utility  obtained by each buyer $i$ under $M^S$ supposing that the true value distributions are $\bm{B}$. The allocation and payment rules of $TAM_{\bm{\tau}}$ are defined as
$$x_i(\bm{v};\bm{B})=\begin{cases} x^S_i(\bm{v}),&\text{if }u^S_i\geq \tau_i(\bm{B}_{-i}),\\
0,&\text{if }u^S_i< \tau_i(\bm{B}_{-i}).    
\end{cases}$$
and $$p_i(\bm{B})=\begin{cases} \E_{\bm{v}\sim \bm{B}}[p^S_i(\bm{v})]+\tau_i(\bm{B}_{-i}),&\text{if }u^S_i\geq \tau_i(\bm{B}_{-i}),\\
0,&\text{if }u^S_i< \tau_i(\bm{B}_{-i}).    
\end{cases}$$
\end{definition}

In a threshold-augmented mechanism, the threshold for each buyer can be thought of as an ex-ante entry fee that must be paid before entering an SPA with all other buyers. If a buyer's expected utility obtained in the SPA is higher than this threshold, she will pay the threshold to enter. Otherwise, she will opt out, receiving no allocation or payment, but her bid is still simulated in the SPA in competition with other buyers. A key property of threshold-augmented mechanisms is that they guarantee ex-ante incentive compatibility as long as the underlying bid-reporting mechanism $M^S$ is IC in the standard setting.

\begin{theorem}\label{thm:TAM is ex-ante IC}
    If $M^S$ is an IC bid-reporting mechanism, then for any threshold functions $\bm{\tau}$, the threshold-augmented mechanism $TAM_{M^S,\bm{\tau}}$ is ex-ante IC and ex-ante IR.
\end{theorem}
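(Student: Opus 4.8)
The plan is to fix a buyer $i$ together with her true distribution $F_i\in\calF_i$ and an arbitrary report profile $\bm B_{-i}$ of the others, write $M:=TAM_{M^S,\bm\tau}$ and $\tau_i:=\tau_i(\bm B_{-i})$, and argue ex-ante IC and ex-ante IR simultaneously by putting buyer $i$'s ex-ante utility under $M$ into a transparent closed form as a function of her own report $B_i$. The structural fact that makes this work is that $\tau_i$ does not depend on $B_i$, so from $i$'s point of view it is a constant.

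First I would simplify the utility. Write $\hat u_i(B_i):=\E\bigl[x^S_i(b_i(q),\bm b_{-i})\,v_i(q)-p^S_i(b_i(q),\bm b_{-i})\bigr]$, where $q\sim U[0,1]$, $\bm b_{-i}\sim\bm B_{-i}$, and $v_i(\cdot),b_i(\cdot)$ are the quantile functions of $F_i,B_i$; this is the utility buyer $i$ extracts from the internal run of $M^S$ using her \emph{true} values but her \emph{reported} bid distribution. Substituting \Cref{def:threshold-augmented-mechanism} into the simplified ex-ante utility (which already invokes \Cref{lemma:lemma3.1}) and using that $\E_{\bm b\sim\bm B}[p^S_i(\bm b)]$ equals the matching quantile integral yields
\[
U^M_i(B_i,\bm B_{-i};F_i)=\begin{cases}\hat u_i(B_i)-\tau_i & \text{if } u^S_i\geq\tau_i,\\ 0 & \text{if } u^S_i<\tau_i,\end{cases}
\]
where $u^S_i$ is the gatekeeping quantity of \Cref{def:threshold-augmented-mechanism}, i.e.\ $\hat u_i$ with $v_i(q)$ replaced by $b_i(q)$. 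Taking $B_i=F_i$ forces $b_i\equiv v_i$, hence $\hat u_i(F_i)=u^S_i=U^{M^S}_i(F_i,\bm B_{-i};F_i)=:u^\star_i$, and so $U^M_i(F_i,\bm B_{-i};F_i)=\max\{u^\star_i-\tau_i,\,0\}\geq 0$, which already establishes ex-ante IR.

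The core of the argument is the single inequality $\hat u_i(B_i)\leq u^\star_i$ for every $B_i\in\calF_i$. I would obtain it from the incentive compatibility of $M^S$: for each realized quantile $q$, feeding the true value $v_i(q)$ into $M^S$ is weakly better than feeding $b_i(q)$ — pointwise in $\bm b_{-i}$ if $M^S$ is DSIC, or in expectation over $\bm b_{-i}\sim\bm B_{-i}$ otherwise — and integrating over $q\sim U[0,1]$ gives $u^\star_i\geq\hat u_i(B_i)$. Given this, ex-ante IC is a two-case check: a deviation $B_i$ with $u^S_i\geq\tau_i$ yields $\hat u_i(B_i)-\tau_i\leq u^\star_i-\tau_i\leq\max\{u^\star_i-\tau_i,0\}$, and a deviation with $u^S_i<\tau_i$ yields $0\leq\max\{u^\star_i-\tau_i,0\}$; in both cases the deviation is no better than truthful reporting.

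I expect the only real obstacle to be careful bookkeeping of three easily-conflated quantities: $u^\star_i$ (truthful play of $M^S$), the gatekeeping value $u^S_i$ (which evaluates $M^S$ as if buyer $i$'s own values were drawn from her \emph{reported} $B_i$), and $\hat u_i(B_i)$ (true values, reported bids). One must also observe that a deviation can flip which branch of the $TAM$ rule applies but can never push the realized utility above $\max\{u^\star_i-\tau_i,0\}$. It is worth remarking that no IR assumption on $M^S$ is required: exclusion caps utility at $0$, inclusion caps it at $u^\star_i-\tau_i$, and truthful reporting weakly attains whichever is larger.
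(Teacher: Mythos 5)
Your proposal is correct and follows essentially the same route as the paper's proof: write the TAM utility as $\hat u_i(B_i)-\tau_i$ or $0$ depending on the gatekeeping comparison, note that truthful reporting attains exactly $\max\{u^\star_i-\tau_i,0\}$ (giving ex-ante IR), and use the IC of $M^S$ to get $\hat u_i(B_i)\leq u^\star_i$, so any deviation is capped by $\max\{u^\star_i-\tau_i,0\}$. Your explicit bookkeeping of the three quantities $u^\star_i$, $u^S_i$, and $\hat u_i(B_i)$ is exactly the distinction the paper's argument relies on, so there is nothing substantively different to flag.
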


% \begin{lemma}
% The revenue of a threshold bundling mechanism $TAM_{\bm{\tau}}(\bm{B})$ under distribution profile $\bm{F}$ is given by
% $$TAM_{\bm{\tau}}(\bm{F})=\sum_{i\in[n]}\I[\tilde{u}_i(\bm{F})\geq\tau_i(\bm{F}_{-i})](s_i(\bm{F})+\tau_i(\bm{F}_{-i}))$$
% \end{lemma}

In this section, we adopt SPA as the bid-reporting mechanism in the construction of threshold-augmented mechanisms. 
In the remainder of this section, we assume $M^S=\SPA$ and omit it from the notation, i.e., we abbreviate $TAM_{\SPA,\bm{\tau}}$ as $TAM_{\bm{\tau}}$.
We write $x_i^{\SPA}:\bbR^n\to\{0,1\}$ and $p_i^{\SPA}:\bbR^n\to\bbR$ as the allocation and payment rules of SPA.\footnote{We assume an arbitrary but fixed tie-breaking rule for SPA.}

As a simple example of the threshold-augmented mechanism, we present the following result showing that when all buyers follow an identical but unknown value distribution, the seller can fully extract the optimal social welfare as her revenue by a simple threshold-augmented mechanism. Note that this setting is different from our main model, where the value distribution of different buyers are independent. %\lny{explain more?}
% In comparison with the literature on prior-independent mechanism design in identical distribution settings, this result demonstrates the advantage of a distribution-reporting mechanism over a bid-reporting (i.e. prior-independent) mechanism.

\begin{theorem}\label{thm:iid-REV=WEL}
Assume $n\geq 2$, and all buyers have identical distributions $F_1=\cdots=F_n=F\in\calA$. Consider the threshold-augmented mechanism $TAM_{\bm{\tau}}$, where the threshold function for each buyer $i\in[n]$ is specified as $\tau_i(\bm{B}_{-i})=U^{\SPA}_i(B_{i'},\bm{B}_{-i})$, where $i'$ is arbitrarily chosen in $[n]\setminus\{i\}$. Then, for any distribution $F\in\calA$, $TAM_{\bm{\tau}}$ extracts the full optimal welfare as revenue. Formally, it holds for all $F\in\calA$ that
$$\REV(TAM_{\bm{\tau}},F,\cdots,F)=\mathrm{WEL}(F,\cdots,F).$$
\end{theorem}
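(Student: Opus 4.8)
The plan is to show that, under the specified threshold rule, every buyer is retained in the simulated SPA, pays her SPA payment plus an entry fee equal exactly to her SPA utility, and that summing these payments telescopes to the full welfare. As a preliminary step I would invoke \Cref{thm:TAM is ex-ante IC}: since SPA is IC in the bid-reporting sense, $TAM_{\bm\tau}$ is ex-ante IC and ex-ante IR, and it is feasible because the SPA allocation is feasible and the threshold rule only ever removes allocation. Hence $TAM_{\bm\tau}$ is valid, so we may assume all buyers truthfully report $B_i=F$, giving $\bm B=(F,\dots,F)$ and $\REV(TAM_{\bm\tau},F,\dots,F)=\sum_{i\in[n]}p_i(\bm F)$.

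The crucial observation is that the threshold of buyer $i$ is $\tau_i(\bm F_{-i})=U_i^{\SPA}(B_{i'},\bm F_{-i})$ where $B_{i'}=F$ (all reports equal $F$), which is precisely $U_i^{\SPA}(F,\dots,F;F)=u_i^S$. Thus $u_i^S\ge\tau_i(\bm F_{-i})$ holds (with equality), so by \Cref{def:threshold-augmented-mechanism} each buyer $i$ receives the SPA allocation $x_i^{\SPA}$ and is charged $p_i(\bm F)=\E_{\bm v\sim\bm F}[p_i^{\SPA}(\bm v)]+\tau_i(\bm F_{-i})=\E_{\bm v\sim\bm F}[p_i^{\SPA}(\bm v)]+u_i^S$. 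Summing over $i$, the first terms give the SPA revenue $\sum_i\E_{\bm v}[p_i^{\SPA}(\bm v)]=\E_{\bm v}[\secmax_{i} v_i]=\SPA(\bm F)$, while for the second terms $\sum_i u_i^S=\E_{\bm v}\bigl[\sum_i\bigl(v_i x_i^{\SPA}(\bm v)-p_i^{\SPA}(\bm v)\bigr)\bigr]=\E_{\bm v}[\max_{i} v_i]-\E_{\bm v}[\secmax_{i} v_i]=\WEL(\bm F)-\SPA(\bm F)$, using that SPA awards the item to a highest bidder (so $\sum_i v_i x_i^{\SPA}(\bm v)=\max_i v_i$) and charges her the second-highest bid. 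Adding the two contributions yields $\REV(TAM_{\bm\tau},F,\dots,F)=\SPA(\bm F)+(\WEL(\bm F)-\SPA(\bm F))=\WEL(\bm F)$.

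There is no serious obstacle here; the argument is essentially bookkeeping. The one point that needs care is the exact identity $\tau_i(\bm F_{-i})=u_i^S$, which relies on two features of the construction — that $\tau_i$ replaces buyer $i$'s own reported distribution by that of another buyer, and that all buyers share the identical distribution $F$ — together with the weak inequality ``$u_i^S\ge\tau_i(\bm B_{-i})$'' in the retention rule of \Cref{def:threshold-augmented-mechanism}. If any of these failed, some buyer could be excluded and the telescoping $\SPA+(\WEL-\SPA)=\WEL$ would break down.
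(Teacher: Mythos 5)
Your proposal is correct and follows essentially the same argument as the paper: truthful reporting via the ex-ante IC of threshold-augmented mechanisms, the identity $\tau_i(\bm F_{-i})=u_i^S$ under identical distributions so every buyer is retained, and summing payments to recover the full welfare. The only cosmetic difference is that you split the total into $\SPA+(\WEL-\SPA)$ whereas the paper combines each buyer's SPA payment and threshold directly into $\E_{\bm v\sim\bm F}[v_i x_i^{\SPA}(\bm v)]$ before summing; the computation is the same.
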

\begin{proof}
Given any $F\in\calA$, denote the instance with identical distributions by $\bm{F}=(F,\cdots,F)$.
Since $TAM_{\bm{\tau}}$ is ex-ante IC, all buyers will truthfully report $B_i=F_i$. For each buyer $i\in[n]$, we have $B_{i'}=F=F_i$ regardless of the choice of $i'$, and therefore $\tau_i(\bm{B}_{-i})=U^{\SPA}_i(B_{i'},\bm{B}_{-i})=U^{\SPA}_i(\bm{F})=\E_{\bm{v}\sim \bm{F}}[v_ix_i^{\SPA}(\bm{v})-p_i^{\SPA}(\bm{v})]$.

Denote $TAM_{\bm{\tau}}=(x,p)$. By the definition of threshold-augmented mechanism, $p_i(\bm{B})=\E_{\bm{b}\sim \bm{B}}[p_i^{\SPA}(\bm{b})]+\tau_i(\bm{B}_{-i})=\E_{\bm{v}\sim \bm{F}}[v_ix_i^{\SPA}(\bm{v})]$. Summing over all buyers, we have  
$$\sum_{i\in[n]} p_i(\bm{B})=\E_{\bm{v}\sim \bm{F}}[\sum_{i\in[n]}v_ix_i^{\SPA}(\bm{v})]=\E_{\bm{v}\sim \bm{F}}[\max_{i\in[n]}v_i]=\mathrm{WEL}(\bm{F}).$$
That is, $\REV(TAM_{\bm{\tau}},\bm{F})=\mathrm{WEL}(\bm{F})$.
\qed
\end{proof}
Remarkably, this result demonstrates the advantage of a distribution-reporting mechanism over a bid-reporting mechanism.
By the proof of \Cref{lemma:impossible-WELandSPA}, there exists an instance of identical regular distributions under which no bid-reporting mechanism can achieve $\epsilon\cdot\WEL$ revenue. In contrast, \Cref{thm:iid-REV=WEL} shows that a distribution-reporting mechanism can achieve full welfare extraction under identical distributions.
This highlights how distribution-reporting improves the revenue guarantees in the ex-ante incentive model.

\subsection{Lower bound}
Next, we present our main result for multiple buyers with non-identical distributions. To achieve the either-or approximation, we employ a randomization over threshold-augmented mechanisms to improve the worst-case revenue guarantees. We introduce the Peer-Max mechanism, which sets the threshold for each buyer as the expected maximum value of all other buyers, scaled with a randomized multiplicative factor.
\begin{definition}\label{def:peer-max}
A Peer-Max mechanism $PM_{D_\alpha}$ specified by a distribution $D_{\alpha}$ for parameter $\alpha\in[0,+\infty)$ is defined as the mechanism that draws $\alpha\sim D_\alpha$, and executes the threshold-augmented mechanism $TAM_{\bm{\tau}^{\alpha}}$ with threshold function $\tau_i^{\alpha}(\bm{B}_{-i}):=\alpha\cdot \E_{\bm{v}_{-i}\sim\bm{B}_{-i}}[\max_{j\neq i}v_j]$.

We use $PM_{D_\alpha}(F_1,\cdots,F_n)$ to denote the expected revenue of $PM_{D_\alpha}$ under an instance $(F_1,\cdots,F_n)$.
\end{definition}
One can immediately see that a Peer-Max mechanism is ex-ante IC and ex-ante IR, since it is a randomization over threshold-augmented mechanisms.

% We have a weaker result for $n>2$:
\begin{theorem}{\label{theorem:lowerbound}}
Consider $n\geq 2$ buyers, with distribution classes $\mathcal{F}_1=\cdots=\mathcal{F}_n=\calA$ consisting of all distributions.
    
For any integer $K\geq 1$, there exists a distribution $D_\alpha$ for the parameter $\alpha$ of Peer-Max mechanism, such that for all instances $\bm{F}=(F_1,...,F_n)\in\calA^n$, 
$$PM_{D_\alpha}(F_1,\cdots,F_n)\geq \min\left\{\frac{1}{24(K+\log_2 n)}\cdot\mathrm{WEL}(F_1,\cdots,F_n),2^K\cdot\mathrm{SPA}(F_1,\cdots,F_n)\right\}.$$
\end{theorem}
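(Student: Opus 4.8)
The plan is to construct the distribution $D_\alpha$ as a geometric mixture of point masses and analyze the two regimes separately, exactly as sketched in the technical overview. Let $T := K + \log_2 n$ (up to rounding). I will take $D_\alpha$ to be supported on the geometrically separated set $\{2^{-K}, 2^{-K+1}, \ldots, 2^{T-1}\} = \{2^j : -K \le j \le \log_2 n - 1\}$, a set of roughly $T$ values, with $\alpha = 2^j$ chosen uniformly (or nearly so) among these. Thus with probability $\Theta(1/T)$ the Peer-Max mechanism runs $TAM_{\bm\tau^\alpha}$ with threshold $\tau_i^\alpha(\bm B_{-i}) = 2^j \cdot \E[\max_{k\ne i} v_k]$ for each fixed $j$. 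The key structural fact I will use repeatedly: under truthful reporting, buyer $i$ survives the threshold and contributes $\E_{\bm v\sim\bm F}[p_i^{\SPA}(\bm v)] + \tau_i^\alpha(\bm F_{-i})$ to the revenue precisely when her SPA utility $u_i^{\SPA} = \E_{\bm v\sim\bm F}[(v_i - \secmax_k v_k)^+]$ is at least $\tau_i^\alpha$. In particular, whenever $u_i^{\SPA} \ge 2^j \cdot \E[\max_{k\ne i}v_k]$, the revenue contribution from buyer $i$ in that scenario is at least $\tau_i^\alpha = 2^j \cdot \E[\max_{k\ne i}v_k]$.

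The case split will be on how concentrated $\WEL$ is among the buyers. Write $w_i := \E_{\bm v\sim\bm F}[v_i \cdot \mathbb{1}[v_i = \max_k v_k]]$ for buyer $i$'s contribution to $\WEL$ under SPA's tie-breaking, so $\sum_i w_i = \WEL(\bm F)$. \emph{Case 1: some buyer $i^\star$ has $w_{i^\star} \ge \frac12 \WEL$.} Then this buyer alone carries at least half the welfare, which should force either her SPA utility to be large (so she survives even a large threshold $2^j$ and I can lower-bound revenue by that threshold times its probability) or the second-highest value to be comparable to her contribution — in which case $\SPA$ itself is already a constant fraction of $\WEL$ and the ``$2^K\cdot\SPA$'' branch of the minimum wins. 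More precisely, I expect to compare $\E[\max_{k\ne i^\star} v_k]$ against $w_{i^\star}$: if the former is at least $w_{i^\star}/(2^{K+1})$ then $\SPA \ge \E[\secmax] \gtrsim \E[\max_{k\ne i^\star}v_k] \gtrsim w_{i^\star}/2^{K+1} \gtrsim \WEL/2^{K+2}$, giving the second benchmark up to the constant; otherwise $u_{i^\star}^{\SPA} = \E[(v_{i^\star} - \secmax)^+] \ge w_{i^\star} - \E[\max_{k\ne i^\star}v_k] \gtrsim w_{i^\star} \gtrsim \WEL$, so $i^\star$ survives the threshold for the \emph{largest} scale $2^j = \Theta(1)$ and contributes $\Theta(\WEL)$ with probability $\Theta(1/T)$, yielding $\Theta(\WEL/T)$.

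\emph{Case 2: every buyer has $w_i \le \frac12\WEL$ (welfare is spread out).} Here I want to aggregate contributions over a well-chosen subset of buyers and scales. For each buyer $i$, consider the ``right'' scale $j(i)$ — the largest $j$ in the support for which $2^j \cdot \E[\max_{k\ne i}v_k] \le u_i^{\SPA}$ (if $u_i^{\SPA}$ is positive and not too tiny); at that scale buyer $i$ survives and contributes at least $2^{j(i)}\E[\max_{k\ne i}v_k] = \Theta(u_i^{\SPA})$ up to a factor $2$. Summing the guaranteed contribution $\frac{1}{T}\sum_i \Theta(u_i^{\SPA})$ and relating $\sum_i u_i^{\SPA}$ to $\WEL - n\cdot\SPA$-type quantities should give the bound — the $\log_2 n$ in the denominator is exactly the price of covering buyers whose ``right scale'' ranges over $\Theta(\log n)$ values because $\E[\max_{k\ne i}v_k]/u_i^{\SPA}$ can vary polynomially in $n$. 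The buyers whose utility is so small that no scale in the support works, or so large that they'd survive every scale, need separate (easy) handling; the former contribute negligibly to $\WEL$ when aggregated, the latter fall back to a Case-1-style argument. The main obstacle I anticipate is the bookkeeping in Case 2: making the choice of per-buyer scale $j(i)$ consistent with the fixed support, ensuring the aggregated lower bound $\frac1T\sum_i u_i^{\SPA}$ (with the appropriate truncation of buyers) really is $\Omega(\WEL/T)$ or else $\SPA$ is large, and tracking all the constants so they collapse to $1/24$. I would first nail down the exact support and mixing weights of $D_\alpha$, then prove the two cases, being careful that the ``otherwise'' sub-branches in each case feed into the $2^K\cdot\SPA$ term cleanly.
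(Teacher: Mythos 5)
Your overall strategy (a geometric randomization of the Peer-Max threshold scale plus a case split on whether one buyer carries half the welfare) is the right one and matches the paper, but the construction and two key steps are genuinely broken. First, the support of $D_\alpha$ is mis-specified: you take scales from $2^{-K}$ up to roughly $n/2$, whereas the roles of $K$ and $\log_2 n$ must be the other way around. The small end has to go down to $\Theta(1/n)$ (and you also need $\alpha=0$ in the support): in the spread-out case a buyer with $w_i\approx \WEL/n$ typically has SPA utility $u_i^{\SPA}=w_i-s_i$ as small as a $\Theta(1/n)$ fraction of $r_i=\E[\max_{k\neq i}v_k]$, so with smallest scale $2^{-K}\gg 1/n$ (e.g.\ $K=1$, $n$ large, $n$ near-symmetric buyers each active with probability $1/n$) \emph{no} buyer ever clears any threshold and, lacking $\alpha=0$, your mechanism's revenue is $0$ while the required minimum is positive. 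Symmetrically, the large end must reach $\Theta(2^{K+1})$, not $\Theta(n)$: for $n=2$ and a dominant buyer with $w_{i^\star}/r_{i^\star}\approx 2^{K+1}$ (say $v_1=H$ deterministic, $v_2=H^2$ with probability $H^{-3}$, $H^2=2^{K+1}$), your largest scale extracts only $O(r_{i^\star})\ll\WEL$, and one can check neither arm of the minimum is met. Your footnote that such buyers ``fall back to a Case-1-style argument'' does not repair this, because the fallback in that regime is exactly the $2^K\cdot\SPA$ arm, which requires a scale of order $2^{K+1}$ to be in the support (the paper puts probability $\tfrac12$ on $\alpha=2^{K+1}$ and, on the event that the dominant buyer survives it, collects $2^{K+1}r_{i^\star}\geq 2^{K+1}\SPA$ using $r_i\geq\SPA$).

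Second, two steps of your Case 1 are wrong as stated. In branch (a) you claim $\E[\secmax_{k\in[n]}v_k]\gtrsim\E[\max_{k\neq i^\star}v_k]$; the inequality goes the other way ($\secmax_{k\in[n]}v_k\leq\max_{k\neq i^\star}v_k$ pointwise), and in the example above $\SPA$ is smaller than $r_{i^\star}$ by an arbitrarily large factor. Moreover, even if you had $\SPA\gtrsim\WEL/2^K$, that is a comparison between benchmarks and does not by itself lower-bound the mechanism's revenue; you still must exhibit a scale at which revenue is collected. In branch (b) you assert the dominant buyer ``contributes $\Theta(\WEL)$'' at a scale $\alpha=\Theta(1)$, but a surviving buyer pays $s_{i^\star}+\alpha r_{i^\star}$, not her utility or welfare share; when $r_{i^\star}\ll w_{i^\star}$ (the hypothesis of that branch) this payment is far below $\WEL$ unless $\alpha\approx (w_{i^\star}-s_{i^\star})/r_{i^\star}$, which is precisely why the support must extend to $\Theta(2^K)$ and why, beyond that, one settles for the $\SPA$ arm. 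Finally, in Case 2 your plan to aggregate $\tfrac1T\sum_i\Theta(u_i^{\SPA})$ needs repair: you must restrict to buyers with $w_i\geq\WEL/(2n)$ (so their ``right scale'' lies in a support reaching $\Theta(1/n)$), credit each survivor the full $s_i+\alpha r_i\geq w_i/2$ rather than $u_i^{\SPA}$ alone, and route buyers with $s_i\geq w_i/2$ through $\alpha=0$ (i.e., plain SPA revenue); with your support the buyers you dismiss as contributing ``negligibly'' can in fact carry essentially all of $\WEL$.
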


\begin{proofsketch}
We take a parameter $L=\Theta(\log n)$, and construct the distribution $D_{\alpha}$ of $\alpha$ for the Peer-Max mechanism as follows: with probability $\frac12$, let $\alpha=2^{K+1}$, and with the remaining $\frac12$ probability, let $\alpha$ be uniformly randomly drawn from the set $\{0,2^{-L},2^{-L+1},\cdots,\frac12,1,2,4,\cdots,2^K\}$.

Fix any instance $\bm{F}$. For convenience, we omit $\bm{F}$ from the notations. For each buyer $i\in[n]$, we define the following quantities
\begin{align*}
w_i=\E_{\bm{v}\sim \bm F}[v_i\cdot \I[i=\arg\max_{i'\in[n]}v_{i'}]],\ 
s_i=\E_{\bm{v}\sim \bm F}[\max_{j\neq i}v_{j}\cdot \I[i=\arg\max_{i'\in[n]}v_{i'}]],\ 
r_i=\E_{\bm{v}_{-i}\sim\bm{F}_{-i}}[\max_{j\neq i}v_j].
\end{align*}
Here $w_i$ and $s_i$ are buyer $i$'s expected contribution to welfare and expected payment under SPA respectively, and $r_i$ is the expected maximum value of all buyers except $i$. Note that $\alpha\cdot r_i$ is used as the threshold for buyer $i$ in $TAM_{\bm{\tau}^{\alpha}}$, by the definition of $PM_{D_{\alpha}}$. When $\alpha\cdot r_i\leq w_i-s_i$, $TAM_{\bm{\tau}^{\alpha}}$ can extract from buyer $i$ an extra payment of $\alpha\cdot r_i$ compared with SPA. However, when $\alpha\cdot r_i> w_i-s_i$, the buyer $i$ is excluded in $TAM_{\bm{\tau}^{\alpha}}$, causing zero payment.
An important observation is that $r_i=s_i+\sum_{j\neq i}w_i$ for every buyer $i$, that is, the maximum value among all other buyers either becomes the payment of buyer $i$ when buyer $i$ wins in SPA, or becomes the winner's contribution to welfare when buyer $i$ loses in SPA.

We can decompose the benchmarks as $\WEL=\sum_{i\in[n]}w_i$ and $\SPA=\sum_{i\in[n]}s_i$. 
Intuitively, to approximate $\WEL$, we aim to extract a large fraction of $w_i$ as payment from each buyer $i$. This is achieved if $s_i+\alpha\cdot r_i$ is close to but smaller than $w_i$. If $s_i$ is already near $w_i$, a good revenue is extracted when $\alpha=0$. Otherwise, when $s_i$ is small compared to $w_i$, since $r_i=s_i+\sum_{j\neq i}w_i$, we intend to select $\alpha$ so that $\alpha\cdot \sum_{j\neq i}w_i$ approximates $w_i$ from below. However, the relative ratio between $w_i$ and $\sum_{j\neq i}w_i$ may vary: In one extreme, $w_i$ is small compared to $\sum_{j\neq i}w_i$, and we must account for the cases where $w_i=\Omega(\frac1n)\sum_{j\neq i}w_i$, which require $\alpha=\Theta(\frac1n)$ in the smallest; In the other extreme, there is a single buyer $i$ whose $w_i$ is large compared with $\sum_{j\neq i}w_i$, we must account for the cases where $w_i=O(2^K)\sum_{j\neq i}w_i$ with $\alpha=\Theta(2^K)$ in the largest, above which we turn to extract a revenue of $2^K\cdot \SPA$. This motivates us to randomly spread the value of $\alpha$ exponentially over the interval from $\Theta(\frac1n)$ to $\Theta(K)$. Based on this construction, we prove the lower bound through detailed discussions.
\end{proofsketch}
% \begin{theorem}
%     Consider $n\geq 2$, and $\mathcal{F}_1,\cdots,\mathcal{F}_n$ are containing all distributions. 

%     Let $\hat{M}$ be any prior-independent mechanism, and let $\hat{x}$ denote the allocation rule of $\hat{M}$.
    
%     Consider the following mechanism:
    
%     Given integer $K\geq 1$, define $L=\lceil\log_2(2n)\rceil+1=\Theta(\log n)$. With probability $\frac12$, set $\alpha=2^{K+1}$, and with the remainging $\frac12$ probability, randomly draw $\alpha$ uniformly from $$A_{L,K}:=\{0,2^{-L},2^{-L+1},\cdots,\frac12,1,2,4,\cdots,2^K\}.$$
%     Let $\tau_i(v_{-i}(\cdot))=\alpha\E_{q_{-i}}[\max v_{-i}(q_{-i})]$, i.e., run the mechanism $TAM_{\alpha\vec{r}}$ where $r_i=\E_{q_{-i}}[\max v_{-i}(q_{-i})]$.
%     Then it holds for all distributions $v_1(\cdot),\cdots,v_n(\cdot)$ that 
%     $$\E_{\alpha}[TAM_{\alpha\vec{r}}(v_1(\cdot),\cdots,v_n(\cdot))]\geq \min\{\Theta(\frac{1}{K+\log n})\mathrm{WEL}(v_1(\cdot),\cdots,v_n(\cdot)),2^K\mathrm{SPA}(v_1(\cdot),\cdots,v_n(\cdot))\}.$$
% \end{theorem}

Remarkably, this result separates distribution-reporting mechanisms from prior-independent mechanisms in revenue guarantees under non-identical distribution settings.
% \citet{GuruganeshMWW24} provided an impossibility result that, as long as there are $n\geq 2$ buyers, for any $\epsilon>0$, no prior-independent DSIC mechanisms can guarantee a revenue of $\min\{\epsilon \cdot \Myer,(1+\epsilon)\cdot \SPA\}$ for general value distributions. 
As shown in \Cref{lemma:impossible-prior-independent}, for any $n\geq 2$ and $\epsilon>0$, no DSIC prior-independent mechanism can guarantee a revenue of $\min\{\epsilon \cdot \WEL,(1+\epsilon)\cdot \SPA\}$ even under regular and identical distributions.
In contrast, our positive result shows that distribution-reporting mechanisms can achieve the either-or approximation for general non-identical distributions with positive constant factors given any fixed $n\geq 2$, despite the logarithmic dependence on $n$.
%\lny{revise}
%\gyr{Should improve logic (e.g. refer to the 2.1 part)}

\subsection{Upper bound}
To complement our positive result, we prove a matching upper bound, showing that the either-or approximation result is tight up to constant factors.
\begin{theorem}\label{thm:upperbound}
 Assume $\mathcal{F}_1=\cdots=\mathcal{F}_n=\calA$. For any $n\geq 2$ and $K\geq6$, for any valid distribution-reporting mechanism $M$, 
there exists some instance $\bm{F}=(F_1,\cdots,F_n)\in\bm{\calF}$ such that
    $$\mathrm{REV}(M,F_1,\cdots, F_n)\leq \min\left\{\frac{64}{K+\log_2 n}\cdot\mathrm{WEL}(F_1,\cdots, F_n),2^K\cdot\mathrm{SPA}(F_1,\cdots, F_n)\right\}.$$
\end{theorem}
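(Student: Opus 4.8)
The plan is to construct a single hard distribution over instances, using \Cref{lemma:reduce-to-bid-report} to reduce any ex-ante IC distribution-reporting mechanism to an IC bid-reporting mechanism over a discrete bid space, and then to bound the revenue of the latter against the two benchmarks simultaneously. First I would fix the base distribution $\hat F$ to be a (truncated) equal-revenue type, and take the multiplicative factor set $A = \{2^0, 2^1, \ldots, 2^{T}\}$ for $T = \Theta(K + \log_2 n)$. I would place a distribution on $A$ whose tail decays like an equal-revenue law — i.e., $\Pr[a \ge 2^t]$ proportional to $2^{-t}$ (suitably truncated) — so that each buyer's scaling factor $a_i$ is i.i.d. from this law, and $F_i = F^{a_i}$. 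The point of the equal-revenue choice is that, conditioned on any realized profile of factors, all factors below the maximum are "revenue-interchangeable," which is exactly what forces a bid-reporting IC mechanism to leave money on the table.

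The key steps, in order: (1) Invoke \Cref{lemma:reduce-to-bid-report} to replace $M$ by the induced IC bid-reporting mechanism $(\tilde x, \tilde p)$ on bid space $A$, inheriting the single-item feasibility constraint $\tilde x_i(\bm a) \le \int_0^1 \hat v(q)\,dq$. (2) Compute the two benchmarks under the random instance: $\WEL = \Theta(T \cdot c)$ where $c = \int_0^1 \hat v$, because the expected maximum of $n$ i.i.d. equal-revenue-scaled variables picks up a factor logarithmic in the support size (this is where the $K + \log_2 n$ appears); and $\SPA = \Theta(c)$ up to constants, since the second-highest scaled equal-revenue value has bounded expectation. (3) Bound $\E[\REV(M)] = \E\big[\sum_i \tilde p_i(\bm a)\big]$ from above. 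Here I would use the standard payment-identity / revenue-equivalence argument for the one-dimensional IC family indexed by $a_i$: the expected payment of buyer $i$ is controlled by her allocation curve via $\tilde p_i(a_i) \le a_i \tilde x_i(a_i) - \int_{0}^{a_i}\tilde x_i(z)\,dz$ (in the discrete version, a telescoping sum over $A$), so $\sum_i \E[\tilde p_i]$ is a virtual-welfare-type expression. Combining with feasibility $\sum_i \tilde x_i \le c$ and the equal-revenue structure (which makes the "virtual value" of $F^a$ essentially constant in $a$ over the bulk), the total expected revenue is $O(c \cdot \log(\text{support}))$ but crucially \emph{also} $O(2^K \cdot c) = O(2^K \cdot \SPA)$ once one accounts correctly for the truncation level $2^T$ and how much the mechanism can charge the top buyer. (4) Average: since for the random instance $\E[\REV(M)] < \min\{\tfrac{64}{K+\log_2 n}\WEL,\ 2^K \SPA\}$ (after choosing constants and using $K \ge 6$), some instance in the support witnesses the bound.

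The main obstacle I expect is step (3): making the upper bound on $\sum_i \E[\tilde p_i]$ simultaneously beat \emph{both} $\tfrac{64}{K+\log_2 n}\WEL$ and $2^K \SPA$, rather than just one of them. Against $\WEL$ alone, the equal-revenue / revenue-equivalence bound gives roughly $O(c)$ per buyer "in expectation over who is the winner," but one must be careful that a mechanism cannot concentrate charges on the rare instances where one buyer's factor is very large — this is exactly what the $2^K\SPA$ side of the min is designed to catch, and reconciling the two requires splitting the analysis at the truncation threshold $2^T$ and arguing that charging above $\SPA$ by more than a $2^K$ factor forces IC violations or feasibility violations. A secondary technical point is handling the discreteness of $A$ cleanly in the payment identity (the telescoping inequality replacing the integral), and pinning down the constants so that the "$64$" and the "$K\ge 6$" hypotheses are actually what the computation needs; I would carry the constants symbolically until the end and only then verify $64$ suffices. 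Finally, I would note, as the paper remarks, that the same construction restricted to the regular subclass $\calA^R$ works with an extra constant loss, since the truncated equal-revenue distribution is itself regular.
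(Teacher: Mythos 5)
Your overall skeleton — reduce $M$ via \Cref{lemma:reduce-to-bid-report} to an IC bid-reporting mechanism over a scaled equal-revenue family, bound its revenue per buyer by a posted-price argument against the factor law, and average over a random instance — matches the paper's Steps (A) and (C) for the regime $K\le\log_2 n$, but the construction as you describe it has genuine gaps. First, the benchmark estimates in your step (2) drop the $n$-dependence: with $n$ i.i.d. heavy-tailed buyers the second-highest value does not have bounded expectation ($\SPA$ grows roughly linearly in $n$), and the expected maximum is not $\Theta(T\cdot c)$ but scales with $n$ and is dominated by rare large factors. More importantly, a \emph{dense} construction in which every buyer carries a full ER-type value hits a structural obstruction: the per-buyer posted-price cap is $\Theta(c)$, so the total revenue cap $\Theta(nc)$ \emph{adds} over buyers, while welfare is a \emph{maximum} and does not; on typical instances $\WEL/\REV$ is then $O(1)$ (or at best $O(K)$ via the factor spread), not $\Omega(K+\log_2 n)$. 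The paper avoids exactly this by making the scaled base distribution sparse — a two-point distribution with success probability $\epsilon=\frac{1}{4n}$ — so that welfare becomes additive across buyers ($\WEL=\Theta(n\delta L)=\Theta(\sqrt n\log n)$) while $\SPA$ and the extractable revenue are both $\Theta(\sqrt n)$. That sparsity is the key design element missing from your family; without it (or a substitute) the $\log_2 n$ part of the gap is not obtained.

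Second, your step (4) is not a valid averaging argument as stated: comparing $\E[\REV]$ to the minimum of expected benchmarks does not yield a witness instance. One needs $\E[\REV]\le C\cdot\E\bigl[\min\{\delta_1\WEL,\delta_2\SPA\}\bigr]$ and then absorbs $C$ using homogeneity of the min; lower-bounding $\E[\min\{\cdot,\cdot\}]$ requires either pointwise bounds (the paper's $\SPA$ bound holds for every instance in the support) or a concentration argument (the paper's Chernoff lemma for $\WEL$), and with your heavy-tailed instance distribution $\E[\WEL]$ is dominated by rare events, so this step is genuinely delicate rather than routine. Third, the regime $K\ge\log_2 n$ — where the hard instances must have tiny $\SPA$ so that the $2^K\SPA$ budget is the binding constraint — is not captured by your construction, in which $\SPA$ is large and that constraint is slack; the paper disposes of this regime by transporting the degenerate-instance impossibility of \citet{GuruganeshMWW24} (no IC bid-reporting mechanism guarantees $\min\{\frac{2.5}{\ln T}\Myer,\,T\cdot\SPA\}$ on point masses) through \Cref{lemma:reduce-to-bid-report}. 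Your remark that ``charging above $\SPA$ by more than a $2^K$ factor forces IC violations'' is asserted rather than argued, and some separate treatment of this regime appears necessary.
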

\begin{proofsketch}
We utilize \Cref{lemma:reduce-to-bid-report} to establish the upper bound on the revenue of any distribution-reporting mechanism $M$.

When $K\geq\log_2 n$, we need to prove that $\mathrm{REV}(M,\bm{F})\leq \min\{O(\frac{1}{K})\cdot\mathrm{WEL}(\bm{F}),2^K\cdot\mathrm{SPA}(\bm{F})\}$ for some instance $\bm{F}$. We utilize a result in \citep{GuruganeshMWW24} showing that for any $T\geq3$, no bid-reporting mechanism can guarantee $\min\{\frac{2.5}{\ln T}\WEL,T\cdot\mathrm{SPA}\}$ revenue in degenerate distribution instances. By \Cref{lemma:reduce-to-bid-report}, any distribution mechanism $M$ behaves essentially the same as some bid-reporting mechanism $M'$ on degenerate distribution instances, so the revenue upper bound is still applicable. 

When $K\leq \log_2 n$, we prove the existence of some instance $\bm{F}$ such that $\mathrm{REV}(M,\bm{F})\leq \min\{O(\frac{1}{\log n})\cdot\mathrm{WEL}(\bm{F}),\Theta(1)\cdot\mathrm{SPA}(\bm{F})\}$. We define a distribution $D_I$ of instance $\bm{F}$, where each $F_i$ is independently drawn from a distribution $D_{F}$ over a family of $L\approx\log_2(\sqrt{n})$ distributions $\{F^{1},\cdots,F^{L}\}$. Each $F^j$ is defined by the quantile function $v^j(q)=\frac{1}{2^j\epsilon}\I[q\leq\epsilon]$, and $D_{F}$ assigns $\delta\cdot 2^j$ probability to each $F^j$, where $\delta=\Theta(\frac{1}{\sqrt{n}})$ is the normalizing factor. 

For the left side, the construction of this distribution family allows us to apply \Cref{lemma:reduce-to-bid-report}, showing that the expected revenue of any distribution-reporting mechanism $M$ when $\bm{F}\sim D_I$ is equal to the expected revenue of some bid-reporting mechanism facing a discrete version of equal revenue distribution, which is upper bounded by $\Theta(\sqrt{n})$.

For the right side, note that $\E_{\bm{F}\sim D_I}[\WEL(\bm{F})]=\Theta(\sqrt{n}\log n)$ and $\E_{\bm{F}\sim D_I}[\SPA(\bm{F})]=\Theta(\sqrt{n})$. Through concentration inequalities, we show that $\Pr_{\bm{F}\sim D_I}[\min\{\Theta(\frac{1}{\log n})\cdot\mathrm{WEL}(\bm{F}),\Theta(1)\cdot\mathrm{SPA}(\bm{F})\}\geq \Theta(\sqrt{n})]=\Omega(1)$, which implies $\E_{\bm{F}\sim D_I}[\min\{\Theta(\frac{1}{\log n})\cdot\mathrm{WEL}(\bm{F}),\Theta(1)\cdot\mathrm{SPA}(\bm{F})\}]\geq\Theta(\sqrt{n})$.

In summary, we obtain $\E_{\bm{F}\sim D_I}[\REV(M,\bm{F})]\leq \E_{\bm{F}\sim D_I}[\min\{\Theta(\frac{1}{\log n})\cdot\mathrm{WEL}(\bm{F}),\Theta(1)\cdot\mathrm{SPA}(\bm{F})\}]$, which implies that there is some instance $\bm{F}$ such that
$\REV(M,\bm{F})\leq  \min\{\Theta(\frac{1}{\log n})\cdot\mathrm{WEL}(\bm{F}),\Theta(1)\cdot\mathrm{SPA}(\bm{F})\}$.
\end{proofsketch}

Additionally, this upper bound can be extended to show that our result remain tight up to constant factors even when all distributions are regular.
\begin{theorem}\label{thm:upperbound-regular}
 Assume $\mathcal{F}_1=\cdots=\mathcal{F}_n=\mathcal{A}^{R}$, i.e. the class of regular distributions. For any $n\geq 2$ and any $K\geq8$, for any valid distribution-reporting mechanism $M$, 
there exists some instance $\bm{F}=(F_1,\cdots,F_n)\in\bm{\calF}$ such that
    $$\REV(M,F_1,\cdots, F_n)\leq \min\left\{\frac{1382.4}{K+\log_2 n}\cdot\mathrm{WEL}(F_1,\cdots, F_n),2^K\cdot\mathrm{SPA}(F_1,\cdots, F_n)\right\}.$$
\end{theorem}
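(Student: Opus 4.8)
The plan is to reduce the regular case to the general (non-regular) construction used in Theorem~\ref{thm:upperbound}, at the cost of a bounded constant blow-up, by replacing each point-mass-like distribution $F^j$ with a \emph{regular} distribution that mimics it closely enough. Recall that in the proof of Theorem~\ref{thm:upperbound} the adversarial family consisted of distributions with quantile functions $v^j(q)=\frac{1}{2^j\epsilon}\,\I[q\le\epsilon]$ (a scaled two-point distribution) in the regime $K\le\log_2 n$, and of degenerate distributions in the regime $K\ge\log_2 n$ (via the \citet{GuruganeshMWW24} bound). Neither of these is regular. The remedy is to use truncated equal-revenue distributions as the base: for each scale $j$ set $\hat v^j(q)=\frac{c}{2^j\max\{q,\epsilon\}}$ for an appropriate constant $c$, which is regular (the equal-revenue distribution is regular, and truncation preserves regularity), and which is a scaled copy of the single base distribution $\hat v(q)=\frac{c}{\max\{q,\epsilon\}}$. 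This is exactly the form required by \Cref{lemma:reduce-to-bid-report}, with multiplicative factors $a\in A=\{2^{-j}:1\le j\le L\}$ up to the constant $c$, so any valid distribution-reporting mechanism again induces a bid-reporting IC mechanism on the factor space, and the same revenue upper bound machinery applies.

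The steps, in order, would be: (i) In the regime $K\ge\log_2 n$, replace the degenerate instances by truncated equal-revenue instances all sharing a common (unknown) scale drawn geometrically, invoke \Cref{lemma:reduce-to-bid-report} to pass to a bid-reporting mechanism, and bound its revenue on a discretized equal-revenue prior; the key quantitative inputs are $\WEL=\Theta(\log H)$ and $\SPA=\Theta(1)$ for a truncation at $H$, just as in \Cref{lemma:impossible-prior-independent}, giving the $\Theta(1/K)$ factor. (ii) In the regime $K\le\log_2 n$, redo the construction of $D_I$ with each $F_i$ drawn i.i.d.\ from a distribution $D_F$ over the regular family $\{\hat F^1,\dots,\hat F^L\}$ with $L\approx\log_2\sqrt n$, probabilities $\propto 2^j$ so that the induced distribution over scales again approximates a discrete equal-revenue law. (iii) Recompute $\E_{\bm F\sim D_I}[\WEL(\bm F)]$ and $\E_{\bm F\sim D_I}[\SPA(\bm F)]$ for the new regular family — these will differ from the point-mass case only by constant factors (an extra $\log$-type correction from truncation that is absorbed into constants) — and redo the concentration argument showing $\min\{\Theta(1/\log n)\WEL,\Theta(1)\SPA\}\ge\Theta(\sqrt n)$ with constant probability. (iv) Combine the two regimes and track the constants to arrive at the stated bound $\min\{\frac{1382.4}{K+\log_2 n}\WEL,\,2^K\,\SPA\}$; the constant $1382.4$ is just the product of the constant losses incurred by the regularization (the ratio between the equal-revenue moments and the point-mass moments) and the constants already present in Theorem~\ref{thm:upperbound}, and $K\ge 8$ rather than $K\ge 6$ is the slack needed to keep these estimates valid.

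The main obstacle is step (iii): controlling $\WEL$ and $\SPA$ for i.i.d.\ draws from a mixture of \emph{truncated equal-revenue} distributions with geometrically spaced scales. Unlike the two-point family, where the maximum of $n$ samples is essentially determined by whether any sample lands in the $\epsilon$-mass and at what scale, the equal-revenue tails make $\max_i v_i$ and $\secmax_i v_i$ heavier and more spread out, so the elementary second-moment/Chebyshev concentration used before may need to be replaced by a more careful truncation-and-union-bound argument, or by directly estimating the quantile of the maximum. One has to verify that after truncation at an appropriate level $H=\mathrm{poly}(n)$ the welfare still concentrates around $\Theta(\sqrt n\log n)$ and the second-price value around $\Theta(\sqrt n)$ tightly enough that the ``min'' of the two benchmarks exceeds the $\Theta(\sqrt n)$ revenue bound with $\Omega(1)$ probability. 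I expect this to go through — regularity is a mild structural constraint and truncated equal-revenue distributions are the canonical worst cases — but it is where the real work and the loss of constants lie; the reduction via \Cref{lemma:reduce-to-bid-report} and the bookkeeping of constants in steps (i), (ii), (iv) are routine by comparison.
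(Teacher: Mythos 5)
Your overall plan for the main regime — replace the two-point family by scaled truncated equal-revenue distributions (which are regular and are scalings of one base distribution), invoke \Cref{lemma:reduce-to-bid-report}, and redo the benchmark and concentration estimates — is exactly the route the paper takes, so the core idea is right. However, there are two concrete problems in your sketch. First, your step (i) is both unnecessary and, as stated, broken. Degenerate distributions are regular, so the paper simply reuses the \Cref{thm:upperbound} argument (via \citet{GuruganeshMWW24} and \Cref{lemma:reduce-to-bid-report}) verbatim in the regime $\log_2 n \lesssim K$; no smoothing is needed. Worse, your proposed replacement — truncated equal-revenue instances ``all sharing a common (unknown) scale'' — creates a prior under which all buyers have \emph{identical} distributions, and \Cref{thm:iid-REV=WEL} shows a valid distribution-reporting mechanism can then extract the full welfare by cross-checking reports. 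No $O(1/K)\cdot\WEL$ upper bound can be extracted from such a correlated prior; any construction here must keep the buyers' distributions independent (a product prior over scales), or just stick with degenerate profiles.

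Second, your quantitative expectations in step (iii) are off in a way that suggests the computation has not actually been carried out. For the regular family $v^j(q)=\frac{1}{2^j\max\{q,\epsilon\}}$ with $\epsilon=\Theta(1/n)$, the base distribution has mean $1+\ln(1/\epsilon)=\Theta(\log n)$, so the feasibility bound in \Cref{lemma:reduce-to-bid-report} and hence the posted-price revenue bound become $\Theta(n\delta\log n)=\Theta(\sqrt n\log n)$, not $\Theta(\sqrt n)$; correspondingly $\SPA=\Theta(\sqrt n\log n)$ and $\WEL=\Theta(\sqrt n\log^2 n)$ under the prior. This extra factor is $\Theta(\log n)$, not a constant ``absorbed'' anywhere — the argument closes only because the same factor appears on both the revenue side and the $\SPA$ side (and an extra $\log n$ on the $\WEL$ side supplies the $1/\log n$ gap); if one believed $\SPA=\Theta(\sqrt n)$ against a $\Theta(\sqrt n\log n)$ revenue bound, the $2^K\cdot\SPA$ side would fail for large $n$ and fixed $K$. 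Relatedly, your worry about concentrating heavy-tailed maxima is misplaced: $\WEL(\bm F)$ and $\SPA(\bm F)$ are already expectations over values for a fixed instance, so the only randomness to control is the assignment of scales to buyers; Hoeffding applied to the bounded sums $\sum_j n_j 2^{-j}$ (the paper's events $E_1,E_2$) suffices. The genuinely new work in the regular case is the deterministic lower bounds on $\WEL(\bm F)$ and $\SPA(\bm F)$ given the scale counts — in particular the paper lower-bounds $\SPA$ through the Myerson virtual values of the truncated equal-revenue distributions — and this is absent from your sketch.
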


\section{Generalization: Multi-Unit Multi-Demand Setting}\label{sec:generalization}
In this section, we extend our main positive result (\Cref{theorem:lowerbound}) from the single-item setting to a more general multi-unit multi-demand setting, achieving an either-or approximation with respect to optimal welfare $\WEL$ and VCG revenue $\VCG$. %To achieve this, we replace SPA by VCG as the bid-reporting mechanism in the construction of Peer-Max mechanism. 

\begin{definition}
    In a multi-unit multi-demand setting, $m$ units of identical items are sold to $n$ buyers, and each buyer $i\in[n]$ has a demand capacity $d_i$ known publicly. For a distribution-reporting mechanism $M=(x,p)$, the allocation rule $x_i(\bm{b};\bm{B})$ now represents the quantity of items allocated to buyer $i$, and the feasibility requires that
\begin{align*}
    \sum_{i\in[n]}x_i(\bm{b};\bm{B})\leq m, \text{\quad and\quad}
    x_i(\bm{b};\bm{B})\in[0,d_i],~\forall i\in[n].
\end{align*}
We denote the space of feasible allocation by $X_{m,\bm{d}}=\{\bm{z}=(z_1,\cdots,z_n):\sum_{i\in[n]}z_i\leq m\land \forall i\in[n],z_i\in[0,d_i]\}$.
\end{definition}

In the multi-unit multi-demand setting, the utility of each buyer remains as given by \Cref{def:ex-ante utility}, i.e.,
\begin{align*}
   U^M_i(\bm{B};F_i)=\E_{\bm{b}_{-i}\sim \bm{B}_{-i},q_i\sim U[0,1]}[x_i(b_i(q_i),\bm{b}_{-i};\bm B)\cdot v_i(q_i)]-\E_{\bm{b}\sim \bm{B}}[p_i(\bm B)],
\end{align*}
where $v_i(\cdot)$ and $b_i(\cdot)$ denote the quantile function of $F_i$ and $B_i$, respectively.

Denote the allocation and payment rules of VCG by $x_i^{\VCG}(\bm{v})=\arg\max_{\bm{z}\in X_{m,\bm{d}}}\sum_{i\in[n]}z_iv_i$ and $p_i^{\VCG}(\bm{v})=\max_{\bm{z}\in X_{m,\bm{d}}}\sum_{i'\in[n]\setminus\{i\}}z_{i'}v_{i'}-\sum_{i'\in[n]\setminus\{i\}}x_{i'}^{\VCG}(\bm{v})v_{i'}$.\footnote{We assume an arbitrary but fixed tie-breaking rule for the allocation rule of VCG.} Given any instance $\bm{F}$, we define the optimal welfare as $\mathrm{WEL}(\bm{F})=\E_{\bm{v}\sim\bm{F}}[\max_{\bm{z}\in X_{m,\bm{d}}}\sum_{i\in[n]}z_iv_i]$ with a slight abuse of notation, and define the VCG revenue as $\mathrm{VCG}(\bm{F})=\E_{\bm{v}\sim\bm{F}}[p_i^{\VCG}(\bm{v})]$.

To extend our main positive result to the multi-unit multi-demand setting, we introduce the Peer-Welfare Mechanism, which generalizes the Peer-Max Mechanism by replacing SPA with VCG as the underlying mechanism. The threshold function for each buyer $i$ is based on the optimal expected welfare of other buyers ignoring the existence of buyer $i$, which is a generalization of the expected maximum value of other buyers adopted in the Peer-Max Mechanism.
% This is motivated by the VCG payment rule, which charges each buyer the welfare loss her participation inflicts upon the remaining buyers.

\begin{definition}\label{def:peer-max-vcg}
A Peer-Welfare mechanism $PW_{D_\alpha}$ specified by a distribution $D_{\alpha}$ for parameter $\alpha\in[0,+\infty)$ is defined as the mechanism that draws $\alpha\sim D_\alpha$, and executes the threshold-augmented mechanism $TAM_{\VCG,\bm{\tau}^{\alpha}}$ with threshold function $\tau_i^{\alpha}(\bm{B}_{-i}):=\alpha\cdot \E_{\bm{v}_{-i}\sim\bm{B}_{-i}}[\max_{(z_1,\cdots,z_n)\in X_{m,\bm{d}}}\sum_{i'\in[n]\setminus\{i\}}z_{i'}v_{i'}]$.

We use $PW_{D_\alpha}(F_1,\cdots,F_n)$ to denote the expected revenue of $PW_{D_\alpha}$ under an instance $(F_1,\cdots,F_n)$.
% Here $X=\{(z_1,\cdots,z_n):\sum_{i\in[n]}z_i\leq m\land \forall i\in[n],~z_i\in[0,d_i]\}$ denotes the set of feasible allocations.
\end{definition}

% We have a weaker result for $n>2$:
\begin{theorem}{\label{theorem:lowerbound-vcg}}
Consider $n\geq 2$ buyers, with distribution classes $\mathcal{F}_1=\cdots=\mathcal{F}_n=\calA$ consisting of all distributions.
Given any integer $K\geq 1$, there exists a distribution $D_{\alpha}$ of parameter $\alpha$, such that for all instances $\bm{F}=(F_1,\cdots,F_n)\in\calA^n$,
    $$PW_{D_\alpha}(F_1,\cdots,F_n)\geq \min\{\frac{1}{24(K+\log_2 n)}\cdot\mathrm{WEL}(F_1,\cdots,F_n),2^K\cdot \mathrm{VCG}(F_1,\cdots,F_n)\}.$$
    % \lny{$PW_{D_\alpha}(F_1,\cdots,F_n)$ denotes revenue with a slight abuse of notation...}
\end{theorem}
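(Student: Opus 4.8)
The plan is to mirror the proof of \Cref{theorem:lowerbound} (the single-item case) as closely as possible, replacing SPA with VCG and the ``expected maximum of other buyers'' with the ``optimal expected welfare of other buyers.'' First I would set up the analogous per-buyer quantities: for each $i\in[n]$ let $w_i=\E_{\bm{v}\sim\bm{F}}[x_i^{\VCG}(\bm{v})\,v_i]$ be buyer $i$'s expected contribution to the optimal welfare, $s_i=\E_{\bm{v}\sim\bm{F}}[p_i^{\VCG}(\bm{v})]$ be buyer $i$'s expected VCG payment, and $r_i=\E_{\bm{v}_{-i}\sim\bm{F}_{-i}}[\max_{\bm{z}\in X_{m,\bm{d}}}\sum_{i'\neq i}z_{i'}v_{i'}]$ be the optimal expected welfare of the others, so that $\alpha\cdot r_i$ is buyer $i$'s threshold in $TAM_{\VCG,\bm{\tau}^\alpha}$. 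As in the single-item proof, $\WEL=\sum_i w_i$ and $\VCG=\sum_i s_i$. The crucial structural identity to re-establish is the VCG analogue of $r_i=s_i+\sum_{j\neq i}w_i$; this should follow from the definition of VCG payments, since $p_i^{\VCG}(\bm{v})=r_i(\bm{v})-\sum_{i'\neq i}x_{i'}^{\VCG}(\bm{v})v_{i'}$ pointwise, where $r_i(\bm{v})$ is the realized optimal welfare of the others, and taking expectations gives $s_i=r_i-\E[\sum_{i'\neq i}x_{i'}^{\VCG}(\bm{v})v_{i'}]$. The only subtlety is that $\E[\sum_{i'\neq i}x_{i'}^{\VCG}(\bm{v})v_{i'}]$ need not equal $\sum_{j\neq i}w_j$ exactly, because the allocation to $i'$ in the full VCG run can differ from its standalone contribution $w_{i'}$; I would need to argue either that the inequality $r_i\geq s_i+\sum_{j\neq i}w_j$ suffices for the revenue lower bound, or identify the correct substitute. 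I expect $r_i\geq \sum_{j\neq i}w_j$ always (removing buyer $i$ only enlarges the feasible welfare for the rest, relative to their contribution when $i$ is present), which combined with $r_i\geq s_i$ gives enough slack, but the exact bookkeeping here is the first thing I would nail down carefully.

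Next I would reuse verbatim the distribution $D_\alpha$ from the single-item sketch: with probability $\tfrac12$ set $\alpha=2^{K+1}$, and with probability $\tfrac12$ draw $\alpha$ uniformly from $\{0,2^{-L},\dots,\tfrac12,1,2,\dots,2^K\}$ with $L=\Theta(\log n)$. The revenue accounting is identical in form: when $\alpha\cdot r_i\leq w_i-s_i$ the mechanism extracts an extra $\alpha\cdot r_i$ from buyer $i$ on top of the VCG payment $s_i$ (so total $s_i+\alpha r_i$ from $i$), and when $\alpha\cdot r_i>w_i-s_i$ buyer $i$ is excluded and contributes $0$; note $w_i-s_i\geq 0$ is the buyer's VCG utility, which is nonnegative. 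So the expected revenue of $PW_{D_\alpha}$ is $\sum_i \E_\alpha[(s_i+\alpha r_i)\,\I[\alpha r_i\leq w_i-s_i]]$, exactly the same expression analyzed in the single-item proof. I would then split into the two cases from the sketch: (i) some buyer $i^\star$ has $w_{i^\star}$ dominating $\sum_{j\neq i^\star}w_j$ by more than a $2^K$ factor, in which case the $\alpha=2^{K+1}$ branch together with $r_{i^\star}\geq\sum_{j\neq i^\star}w_j$ already yields $\Omega(2^K\cdot\VCG)$ — here I should double-check the $\VCG$ lower bound plays the role $\SPA$ did, using $\VCG=\sum_j s_j$ and possibly $s_{i^\star}$ or $\sum_{j\neq i^\star}s_j$; and (ii) no buyer is that dominant, where for each $i$ with $w_i$ not too small relative to the rest, some $\alpha$ in the geometric grid satisfies $\alpha r_i\approx w_i-s_i$ from below, extracting a constant fraction of $w_i$, and summing over such $i$ recovers $\Omega(\tfrac{1}{K+\log_2 n})\WEL$. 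The $\log n$ comes, as before, from the grid needing to span ratios from $\Theta(1/n)$ to $\Theta(2^K)$.

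The main obstacle, and the place where the multi-unit setting genuinely departs from the single-item one, is the identity $r_i = s_i + \sum_{j\neq i} w_j$. In the single-item case this is a clean dichotomy (the runner-up value either becomes $i$'s price or becomes the winner's welfare). With multiple units and demand caps, the realized allocation to the others in the full run versus in the $i$-removed run can differ in complicated ways, and $w_j$ (buyer $j$'s standalone contribution) is not the same as buyer $j$'s contribution in the presence of $i$. I would handle this by working with the pointwise quantities $r_i(\bm{v}):=\max_{\bm{z}\in X_{m,\bm{d}}}\sum_{i'\neq i}z_{i'}v_{i'}$ and the VCG identity $p_i^{\VCG}(\bm{v})=r_i(\bm{v})-W_{-i}(\bm{v})$ where $W_{-i}(\bm{v})=\sum_{i'\neq i}x_{i'}^{\VCG}(\bm{v})v_{i'}$, so $r_i = s_i + \E[W_{-i}(\bm{v})]$ exactly, and then bound $\E[W_{-i}(\bm{v})] \geq \sum_{j\neq i} w_j$ is false in general but $\E[W_{-i}(\bm{v})]$ summed appropriately, or a per-$i$ inequality $r_i \geq \sum_{j\neq i} w_j$, should hold — I would prove $r_i\geq\sum_{j\neq i}w_j$ by a coupling/feasibility argument (the profile achieving $\sum_{j\neq i}w_j$ assigns each $j$ its standalone-optimal bundle, and $i$ removed only relaxes constraints). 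Once that inequality is in hand, every step of the single-item argument goes through with $\SPA$ replaced by $\VCG$, because all we ever used about $r_i$ was $r_i\geq s_i$ and $r_i\geq\sum_{j\neq i}w_j$ (equivalently $r_i\geq\max\{s_i,\WEL-w_i-(\text{correction})\}$), and the geometric grid construction for $\alpha$ is setting-agnostic. I would therefore structure the write-up as: (1) restate the per-buyer quantities and the VCG payment identity; (2) prove the two inequalities on $r_i$; (3) write down the revenue expression for $PW_{D_\alpha}$; (4) invoke the case analysis exactly as in \Cref{theorem:lowerbound}, flagging only the points where $\VCG$ substitutes for $\SPA$; and defer the bulk of the arithmetic to the appendix by reference.
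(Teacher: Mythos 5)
Your plan is essentially the paper's proof: the same per-buyer quantities $w_i$, $s_i$, $r_i$, the same distribution $D_\alpha$, and the same two-case analysis carried over from \Cref{theorem:lowerbound} with $\SPA$ replaced by $\VCG$. The one place you hesitate, however, is exactly the one new ingredient the proof needs, and your worry there is a phantom obstacle. With your own (and the paper's) definition $w_j=\E_{\bm v\sim\bm F}[v_j\,x_j^{\VCG}(\bm v)]$, the quantity $\E[\sum_{j\neq i}x_j^{\VCG}(\bm v)v_j]$ equals $\sum_{j\neq i}w_j$ \emph{by linearity of expectation}; no ``standalone contribution'' of buyer $j$ ever enters the definition, so your concern that the full-run allocation differs from a standalone one is moot. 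Combining this with the pointwise VCG payment formula $p_i^{\VCG}(\bm v)=\max_{\bm z\in X_{m,\bm d}}\sum_{j\neq i}z_jv_j-\sum_{j\neq i}x_j^{\VCG}(\bm v)v_j$ and taking expectations gives the exact identity $r_i=s_i+\sum_{j\neq i}w_j$ in one line. That is all the paper does before rerunning the single-item case analysis verbatim, so the ``bookkeeping'' you defer is trivial once the definitions are read correctly.

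Your proposed fallback, using only the separate inequalities $r_i\geq s_i$ and $r_i\geq\sum_{j\neq i}w_j$, would not suffice as stated. First, case (A.1) needs $r_i\geq\VCG=s_i+\sum_{j\neq i}s_j$ (which follows from the identity together with $s_j\leq w_j$) to conclude $\frac12\mathrm{REV}(2^{K+1})\geq 2^K\cdot\VCG$; from the two separate inequalities one only gets $r_i\geq\max\{s_i,\sum_{j\neq i}s_j\}\geq\frac12\VCG$, which falls a factor $2$ short of the stated $2^K\VCG$ branch. Second, your summary of ``all we ever used about $r_i$'' omits the upper bound $r_i\leq\WEL$, which the case analysis also relies on (in case (A.2) to deduce $s_i\geq\frac14\WEL$, and in case (B) to place $r_i\in[w_i,2nw_i]$ so that the geometric grid down to $2^{-L}$ catches $w_i-s_i$); this bound is easy — the optimum over $[n]\setminus\{i\}$ is dominated by the optimum over $[n]$, or it follows from the identity and $s_i\leq w_i$ — but it must be stated. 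With the exact identity in hand both issues disappear, and your write-up would coincide with the paper's proof.
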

\begin{proofsketch}
To generalize \Cref{theorem:lowerbound} to the multi-unit multi-demand setting, we redefine the important quantities $w_i$, $s_i$, and $r_i$ in the proof of \Cref{theorem:lowerbound} as follows:
$$
w_i=\E_{\bm{v}\sim \bm F}[v_i\cdot x_i^{\VCG}(\bm{v})],\ 
s_i=\E_{\bm{v}\sim \bm F}[p_i^{\VCG}(\bm{v})],\ 
r_i=\E_{\bm{v}_{-i}\sim\bm{F}_{-i}}[\max_{\bm{z}\in X_{m,\bm{d}}}\sum_{i'\in[n]\setminus\{i\}}z_{i'}v_{i'}].$$
One can see that the relationship between these quantities and the benchmarks are retained. Specifically,  by the payment rule of VCG, we have $r_i=s_i+\sum_{j\neq i}w_i$. Also, we have $\WEL=\sum_{i\in[n]} w_i$ and $\VCG=\sum_{i\in[n]}s_i$ by definition. This allows us to generalize the analysis in the \Cref{theorem:lowerbound} to this setting.
% The remaining analysis is analogous to \Cref{theorem:lowerbound}.
\end{proofsketch}

We remark that this extension actually holds for any single-parameter environment, as our analysis relies solely on the properties of the VCG mechanism, and does not depend on specific feasibility constraints.

\section{Conclusion}

In this paper, we design ex-ante incentive compatible mechanisms in the distribution-reporting model, with a revenue maximization target evaluated by the either-or approximation approach. We establish matching upper and lower bounds up to constant factors. We then extend our approach to the multi-unit multi-demand setting.
Future work could explore the design of distribution-reporting mechanisms in broader settings, such as heterogeneous multi-item settings, possibly utilizing the family of threshold-augmented mechanisms with different underlying bid-reporting mechanisms. Another interesting direction is investigating the statistical and computational aspects in the distribution-reporting model, such as the sample complexity of the distribution report. Finally, it would be interesting to test these mechanisms in real auction environments to see how well they perform in practice and identify any challenges that arise.

% In the interest of anonymization, please do not include acknowledgements in your submission.
%
%\begin{acks}
%
%	The authors would like to thank Dr. Maura Turolla of Telecom
%	Italia for providing specifications about the application scenario.
%
%	The work is supported by the \grantsponsor{GS501100001809}{National
%		Natural Science Foundation of
%		China}{http://dx.doi.org/10.13039/501100001809} under Grant
%	No.:~\grantnum{GS501100001809}{61273304\_a}
%	and~\grantnum[http://www.nnsf.cn/youngscientsts]{GS501100001809}{Young
%		Scientsts' Support Program}.
%
%
%\end{acks}

% Bibliography
% \bibliographystyle{splncs04}
\bibliographystyle{plainnat}
\bibliography{ref}

% Appendix
\appendix
\newpage

\section{Missing Proofs in Section 2}
\subsection{Proof of \Cref{lemma:reduce-to-bid-report}}

\begin{proof}
Under bid-reporting mechanism $(\tilde{x},\tilde{p})$, consider a buyer $i$ with true value $a_i\in A$ and bidding $a'\in A$, while the bid profile of others is fixed as $\bm{a}_{-i}$. Let $\bm{F}^{\bm{a}_{-i}}$ denote the Cartesian product of $F^{a_j}$ for all $j\neq i$. We denote buyer $i$'s utility by $\tilde{u}_i(a_i',\bm{a}_{-i};a_i)$, which equals
% $$\tilde{u}_i(a_1,\cdots,a_n;\hat{a}_i):=a_i\tilde{x}_i(a_i',\bm{a}_{-i})$$
\begin{align*}
\tilde{u}_i(a_i',\bm{a}_{-i};a_i):=&a_i\cdot\tilde{x}_i(a_i',\bm{a}_{-i})-\tilde{p}_i(a_i',\bm{a}_{-i})\\
=&a_i\cdot\E_{q_1,\cdots,q_n\sim U[0,1]}[\hat{v}(q_i)x_i(v^{a_1}(q_1),\cdots,v^{a_n}(q_n);F^{a_i'},\bm{F}^{\bm{a}_{-i}})]-p_i(F^{a_i'},\bm{F}^{\bm{a}_{-i}})\\
=&a_i\cdot\E_{q_i\sim U[0,1],\bm{v}_{-i}\sim \bm{F}^{\bm{a}_{-i}}}\left[ \hat{v}(q_i)x_i(v^{a_i'}(q_i),\bm{v}_{-i};F^{a_i'},\bm{F}^{\bm{a}_{-i}})\right]-p_i(F^{a_i'},\bm{F}^{\bm{a}_{-i}})\\
=&\E_{q_i\sim U[0,1],\bm{v}_{-i}\sim \bm{F}^{\bm{a}_{-i}}}\left[ v^{a_i}(q_i)x_i(v^{a_i'}(q_i),\bm{v}_{-i};F^{a_i'},\bm{F}^{\bm{a}_{-i}})\right]-p_i(F^{a_i'},\bm{F}^{\bm{a}_{-i}})\\
=&U_i^{M}(F^{a_i'},\bm{F}^{\bm{a}_{-i}};F^{a_i})
\end{align*}
Since $M$ is ex-ante IC, we have 
$$U_i^{M}(F^{a_i'},\bm{F}^{\bm{a}_{-i}};F^{a_i})\leq U_i^{M}(F^{a_i},\bm{F}^{\bm{a}_{-i}};F^{a_i}).$$
It follows that $\tilde{u}_i(a_i',\bm{a}_{-i};a_i)\leq \tilde{u}_i(a_i,\bm{a}_{-i};a_i)$, i.e., the bid-reporting mechanism $(\tilde{x},\tilde{p})$ is IC.

Assuming $M$ is feasible in the single item setting, we have 
\begin{align*}
    \tilde{x}_i(a_1,\cdots,a_n)&=\E_{q_1,\cdots,q_n\sim U[0,1]}[\hat{v}(q_i)x_i(v^{a_1}(q_1),\cdots,v^{a_n}(q_n);F^{a_i'},\bm{F}^{\bm{a}_{-i}})]
    \leq\E_{q_i\sim U[0,1]}[\hat{v}(q_i)]
    =\int_0^1\hat{v}(q_i)dq_i,
\end{align*}
where the inequality holds because $x_i(v^{a_1}(q_1),\cdots,v^{a_n}(q_n);F^{a_i'},\bm{F}^{\bm{a}_{-i}})\leq 1$ by the feasibility of $M$.
\qed
\end{proof}
\subsection{Proof of \Cref{lemma:impossible-WELandSPA}}
\begin{proof}
Let $\delta_a$ denote the degenerate distribution at $a$, represented by quantile function $v(q)=a$. 
For the first statement, we apply \Cref{lemma:reduce-to-bid-report} on $M$. In the statement of \Cref{lemma:reduce-to-bid-report}, let the base distribution be the degenerate distribution $\delta_{1}$, and let the range of multiplicative factors be $A=\bbR_+$. Then the induced $(\tilde{x},\tilde{p})$ can be viewed as an IC bid-reporting mechanism $M'$ on bid space $\bbR_+$. Notably, the revenue of $M$ on any instance of degenerate distributions $\delta_{a_1},\cdots,\delta_{a_n}$ equals to the revenue of $M'$ on bid profile $a_1,\cdots,a_n$, formally,
\begin{align*}
    \REV(M,\delta_{a_1},\cdots, \delta_{a_n})=\tilde{p}(a_1,\cdots,a_n).
\end{align*}

Consider the following distribution of $a_1,\cdots,a_n$: Let $a_2=\cdots=a_n=0$, and let $a_1$ follow the truncated equal revenue distribution represented by quantile function $v(q)=\frac1{\max\{q,\frac1{H}\}}$, where $H:=e^{\frac{1}{\epsilon}}$. We have $\E_{a_1,\cdots,a_n}[\tilde{p}(a_1,\cdots,a_n)]\leq 1$ by the property of truncated equal revenue distribution. However, $\E_{a_1,\cdots,a_n}[\Myer(\delta_{a_1},\cdots, \delta_{a_n})]=\E_{a_1}[a_1]=1+\ln H=1+\frac1{\epsilon}>\frac1{\epsilon}$. Therefore, we have
$$\E_{a_1,\cdots,a_n}[\REV(M,\delta_{a_1},\cdots, \delta_{a_n})]=\E_{a_1,\cdots,a_n}[\tilde{p}(a_1,\cdots,a_n)]\leq 1<\epsilon\cdot\E_{a_1,\cdots,a_n}[\Myer(\delta_{a_1},\cdots, \delta_{a_n})].$$
It follows that there exists some $a_1,\cdots,a_n$ such that $$\REV(M,\delta_{a_1},\cdots, \delta_{a_n})<\epsilon\cdot \Myer(\delta_{a_1},\cdots, \delta_{a_n})=\epsilon\cdot\WEL(\delta_{a_1},\cdots, \delta_{a_n}).$$

For the second statement, consider the instance where $F_1=F_2=\delta_1$, and $F_3=\cdots=F_n=\delta_0$. Then $\WEL(F_1,\cdots,F_n)=\SPA(F_1,\cdots,F_n)=1$. By ex-ante IR condition we have $\REV(M,F_1,\cdots,F_n)\leq \WEL(F_1,\cdots,F_n)$, and thus, $\REV(M,F_1,\cdots,F_n)\leq \SPA(F_1,\cdots,F_n)<(1+\epsilon)\cdot\SPA(F_1,\cdots,F_n).$
\qed
\end{proof}

\section{Missing Proofs in Section 3}
\subsection{Proof of \Cref{thm:TAM is ex-ante IC}}
\begin{proof}
Let $x,p$ denote the allocation and payment rules of $TAM_{M^S,\bm{\tau}}$.
    We prove that for each $i\in[n]$, given any $\bm B_{-i}$, the ex-ante utility
    $$U_i^{TAM_{M^S,\bm{\tau}}}(B_i,\bm{B}_{-i};F_i) = \E_{\bm b_{-i}\sim \bm B_{-i}, q_i \sim U[0,1]}[v_i(q_i) \cdot x_i(b_1,\cdots,b_i(q_i),\cdots,b_n; B_i,\bm B_{-i})]-p_i(B_i,\bm B_{-i})$$
    is maximized by truthfully reporting $B_i=F_i$, and is non-negative. 
    Here $v_i(\cdot)$ and $b_i(\cdot)$ represent the quantile function of $F_i$ and $B_i$, respectively.

    By \Cref{def:threshold-augmented-mechanism}, given $B_i$ and $\bm{B}_{-i}$, the ex-ante utility of bidder $i$ with true value distribution $F_i$ is
    \begin{align*}
    & U_i^{TAM_{M^S,\bm{\tau}}}(B_i,\bm{B}_{-i};F_i) \\
    =& \begin{cases}
        \E_{\bm{b}_{-i}\sim \bm B_{-i},q_i\sim U[0,1]} [v_i(q_i)\cdot x^S_i(b_i(q_i),\bm{b}_{-i})-p^S_i(b_i(q_i),\bm{b}_{-i})- \tau_i(\bm B_{-i})] , &\text{if }U_i^{M^S}(\bm{B};B_i)\geq \tau_i(\bm{B}_{-i}),\\
        0, & \text{if }U_i^{M^S}(\bm{B};B_i)< \tau_i(\bm{B}_{-i})
    \end{cases} \\
    =& \begin{cases}
        U_i^{M^S}(B_i,\bm{B}_{-i};F_i)- \tau_i(\bm B_{-i}), &\text{if }U_i^{M^S}(\bm{B};B_i)\geq \tau_i(\bm{B}_{-i}),\\
        0, & \text{if }U_i^{M^S}(\bm{B};B_i)< \tau_i(\bm{B}_{-i})
    \end{cases}
    \end{align*}

% Note that the ex-ante utility under $M^S$ is $=\E_{\bm{b}_{-i}\sim \bm B_{-i},q_i\sim U[0,1]} [v_i(q_i)\cdot x^S_i(b_i(q_i),\bm{b}_{-i})-p^S_i(b_i(q_i),\bm{b}_{-i})]$. 
Therefore, we have 
$$U_i^{TAM_{M^S,\bm{\tau}}}(B_i,\bm{B}_{-i};F_i)\leq\max\{U_i^{M^S}(B_i,\bm{B}_{-i};F_i)-\tau_i(\bm{B}_{-i}),0\}.$$

On the other hand, when buyer $i$ reports the true distribution, the ex-ante utility is
$$U_i^{TAM_{M^S,\bm{\tau}}}(F_i,\bm{B}_{-i};F_i)=\max\{U_i^{M^S}(F_i,\bm{B}_{-i};F_i)-\tau_i(\bm{B}_{-i}),0\}.$$

Since $M^S$ is a bid-reporting IC mechanism, truthful reporting under $M^S$ maximizes the ex-post utility and thus the ex-ante utility. That is,
$$U_i^{M^S}(F_i,\bm{B}_{-i};F_i)\geq U_i^{M^S}(B_i,\bm{B}_{-i};F_i).$$

It follows that
$$\max\{U_i^{M^S}(F_i,\bm{B}_{-i};F_i),0\}\geq\max\{U_i^{M^S}(B_i,\bm{B}_{-i};F_i),0\}.$$

Therefore, we have $U_i^{TAM_{M^S,\bm{\tau}}}(F_i,\bm{B}_{-i};F_i)\geq U_i^{TAM_{M^S,\bm{\tau}}}(B_i,\bm{B}_{-i};F_i)$, i.e., $TAM_{M^S,\bm{\tau}}$ is ex-ante IC.

Since $U_i^{TAM_{M^S,\bm{\tau}}}(F_i,\bm{B}_{-i};F_i)=\max\{U_i^{M^S}(F_i,\bm{B}_{-i};F_i)-\tau_i(\bm{B}_{-i}),0\}\geq 0$, we have that $TAM_{M^S,\bm{\tau}}$ is ex-ante IR.\qed
\end{proof}
\subsection{Proof of \Cref{theorem:lowerbound}}

\begin{proof}
Define $L=\lceil\log_2(4n)\rceil$. We define the probability distribution $D_{\alpha}$ of $\alpha$ as follows: with probability $\frac12$, let $\alpha=2^{K+1}$, and with the remaining $\frac12$ probability, let $\alpha$ be uniformly randomly drawn from $$A_{L,K}:=\{0,2^{-L},2^{-L+1},\cdots,\frac12,1,2,4,\cdots,2^K\}.$$

% WLOG assume $v_1(\cdot),\cdots,v_n(\cdot)$ represent continuous distributions. \hertz{why we can make this assumption?}
Fix any instance $\bm{F}=(F_1,\cdots,F_n)\in\calA^n$, we prove the lower bound on the revenue of $PM_{D_\alpha}$ under $\bm{F}$. We omit $\bm{F}$ from the notation of $\WEL(\bm{F})$ and $\SPA(\bm{F})$.

For each buyer $i\in[n]$, define 
\begin{align*}
w_i=\E_{\bm{v}\sim \bm F}[v_i\cdot \I[i=\arg\max_{i'\in[n]}v_{i'}]],\ 
s_i=\E_{\bm{v}\sim \bm F}[\max_{j\neq i}v_{j}\cdot \I[i=\arg\max_{i'\in[n]}v_{i'}]],\ 
r_i=\E_{\bm{v}_{-i}\sim\bm{F}_{-i}}[\max_{j\neq i}v_j].
\end{align*}
% $w_i=\E_{\bm{v}\sim \bm F}[v_i\cdot \I[i=\arg\max_{i'\in[n]}v_{i'}]]$, 
% $s_i=\E_{\bm{v}\sim \bm F}[\max_{j\neq i}v_{j}\cdot \I[i=\arg\max_{i'\in[n]}v_{i'}]]$,
% and $r_i=\E_{\bm{v}_{-i}\sim\bm{F}_{-i}}[\max_{j\neq i}v_j]$. 
Here $w_i$ and $s_i$ are bidder $i$'s expected contribution to welfare and expected payment under SPA respectively, and $r_i$ is the expected maximum value of all buyers except $i$. The indicator $\mathbb{I}[i = \arg\max_{i' \in [n]} v_{i'}]$ selects $i$ as the unique winner when $v_i$ is maximal, using tie-breaking consistent with the allocation rule of SPA.
Recall that in \Cref{def:peer-max} of the Peer-Max mechanism, $\tau^{\alpha}_i(\bm{F}_{-i})=\alpha\cdot r_i$.

By definition, we have $\WEL=\sum_{i\in[n]}w_i$ and $\SPA=\sum_{i\in[n]}s_i$. For each $i\in[n]$, observe that $$r_i=\E_{\bm{v}\sim\bm{F}}[\max_{j\neq i}v_{j}]=\E_{\bm{v}\sim\bm{F}}[\max_{j\neq i}v_{j}\cdot \I[i=\arg\max_{i'\in[n]}v_{i'}]+\sum_{j\neq i}v_j\cdot \I[j=\arg\max_{i'\in[n]}v_{i'}]]=s_i+\sum_{j\neq i}w_j.$$

For each $i\in[n]$, observe that $s_i \leq w_i$. The following inequalities hold:
\begin{align}\label{eq}
    \SPA\leq r_i\leq \WEL\leq r_i+w_i\leq 2\WEL.
\end{align}
The first inequality is because $\SPA=\sum_{j\in [n]} s_j = s_i+\sum_{j\neq i} s_j$ and $r_i=s_i + \sum_{j\neq i} w_j$. The second inequality follows from $\WEL=w_i + \sum_{j\neq i} w_j$. The last two inequalities are obtained by noticing that $\WEL\le r_i+w_i=\WEL+s_i\le 2\WEL$.

Now we are ready to analyze the revenue of $PM_{D_{\alpha}}$.
For convenience, we denote the revenue of $TAM_{\bm{\tau}^{\alpha}}$ under $\bm{F}$ by $\mathrm{REV}(\alpha)$. By definition, 
$$\mathrm{REV}(\alpha)=\sum_{i\in[n]}\I[w_i\geq s_i+\alpha r_i]\cdot(s_i+\alpha r_i),$$
and $PM_{D_{\alpha}}(\bm{F})=\E_{\alpha\sim D_{\alpha}}[\mathrm{REV}(\alpha)]=\frac12\mathrm{REV}(2^{K+1})+\frac1{2(K+L+1)}\sum_{\alpha\in A_{L,K}}\mathrm{REV}(\alpha)$.

To prove the theorem, we discuss in the following two cases: (A) When there is a buyer with a large welfare contribution $w_i$ in second-price auction compared to the total welfare $\WEL$; (B) When the welfare contribution  of all buyers in SPA are relatively small.

\textbf{case (A):} There exists $i\in[n]$ such that $w_i\geq \frac12\WEL$. Let $\alpha^*$ be the maximal $\alpha\in A_{L,K}\cup\{2^{K+1}\}$ such that $w_i\geq s_i+\alpha r_i$. We prove that either $\E_{\alpha\sim D_{\alpha}}[\mathrm{REV}(\alpha)]\ge 2^K \SPA$ holds, or there exists $\alpha \in A_{L,K}$ such that $\mathrm{REV}(\alpha)\ge \frac 14 \WEL$.

\textbf{case (A.1):} If $\alpha^*=2^{K+1}$, then
$$\mathrm{REV}(\alpha^*) \geq \I[w_i\geq s_i+\alpha^* r_i](s_i+\alpha^* r_i) = s_i+\alpha^* r_i \geq 2^{K+1} r_i\geq 2^{K+1} \SPA$$
where we used \cref{eq} to obtain the last inequality.
In this case, it follows that $$\E_{\alpha\sim D_{\alpha}}[\mathrm{REV}(\alpha)] \geq\frac12 \mathrm{REV}(2^{K+1})\geq 2^{K}\SPA.$$ 

\textbf{case (A.2):} If $\alpha^*=0$, then by the definition of $\alpha^*$, we have $w_i<s_i+2^{-L}r_i\leq s_i+\frac14r_i$. It follows that
\begin{align}\label{caseA.2}
\WEL\leq 2w_i<2s_i+\frac12 r_i\leq 2s_i+\frac12 \WEL,
\end{align}
where the first inequality follows from the assumption of case (A) and the last inequality follows from \cref{eq}.

From \cref{caseA.2}, we have $s_i\geq \frac14 \WEL$. Then $\mathrm{REV}(0)=\SPA\geq s_i\geq \frac14 \WEL$.

\textbf{case (A.3):} If $\alpha^*\in[2^{-L},2^{K}]$, then by the definition of $\alpha^*$, $s_i+\alpha^* r_i\leq w_i<s_i+2\alpha^* r_i$. It follows that
$$
\mathrm{REV}(\alpha^*) \ge s_i+\alpha^* r_i\geq \frac 12 (s_i+2\alpha^* r_i) >\frac12 w_i\geq \frac14 \WEL.
$$

\textbf{case (B):} For all $i\in[n]$, it holds that $w_i\leq \frac12 \WEL$. We focus on a subset of buyers $S=\{i\in [n]:w_i\ge \frac 1{2n}\WEL\}$, i.e., those buyers who each contributes at least $\frac1{2n}$ fraction in the total welfare. Then for any $i\notin S$ we have $w_i<\frac1{2n}\WEL$, and it follows that $$\sum_{i\in S}w_i=\WEL-\sum_{i\notin S}w_i\geq \WEL-\sum_{i\notin S}\frac1{2n}\WEL\geq \WEL-n\cdot\frac1{2n}\WEL=\frac12\WEL.$$
% Then $\forall i\in [n], r_i\geq \WEL-w_i\geq w_i$. 

For each buyer $i\in S$,  we have $$w_i\leq \WEL-w_i\leq r_i\leq \WEL\leq 2n w_i.$$ 
The first inequality is because $w_i\leq\frac12\WEL$. The second and third inequalities follow from \cref{eq}. The fourth inequality is because $w_i\geq \frac1{2n}\WEL$.

If $s_i\geq \frac12 w_i$, buyer $i$ contributes at least $\frac12 w_i$ payment to $\mathrm{REV}(0)$. Otherwise, we have $w_i-s_i\in [\frac 12 w_i, w_i]$, and since $r_i \in [w_i, 2nw_i]$, there exists $\alpha^*=2^{-j}$ for some $j\in\{1,\cdots,L\}$ such that $\alpha^* r_i\leq w_i-s_i< 2\alpha^* r_i$. Since $s_i+\alpha^* r_i=\frac12(2s_i+2\alpha^* r_i)>\frac12(s_i+w_i)\geq\frac12 w_i$, buyer $i$ contributes at least $s_i+\alpha^* r_i\geq \frac12 w_i$ payment to $\mathrm{REV}(\alpha^*)$.

In summary, every buyer $i\in S$ contributes at least $\frac 12 w_i$ to $\sum_{\alpha\in {A_{L,K}}}\mathrm{REV}(\alpha)$. So we have
\begin{align*}
\sum_{\alpha\in A_{L,K}}\mathrm{REV}(\alpha)
\geq \frac12\sum_{i\in S}w_i 
\geq \frac14 \WEL.
\end{align*}
It follows that $\E_{\alpha\sim D_{\alpha}}[\mathrm{REV}(\alpha)]\geq\frac{1}{2(K+L+1)}\sum_{\alpha\in A_{L,K}}\mathrm{REV}(\alpha)\geq \frac {1}{8(K+L+1)} \WEL$.

Combining cases (A) and (B) together, we obtain that
$$PM_{D_{\alpha}}(F_1,\cdots,F_n)\geq \min\left\{\frac1{8(K+L+1)}\WEL(F_1,\cdots,F_n),2^K\SPA(F_1,\cdots,F_n)\right\}.$$

Since $L=\lceil \log_2(4n)\rceil\leq \log_2(n)+3$, and $K+\log_2(n)\geq 2$, we have $K+L+1\leq K+\log_2(n)+4\leq 3(K+\log_2(n))$. Therefore, it holds that 
\begin{equation*}
    PM_{D_{\alpha}}(F_1,\cdots,F_n)\geq \min\left\{\frac1{24(K+\log_2(n))}\WEL(F_1,\cdots,F_n),2^K\SPA(F_1,\cdots,F_n)\right\}. \tag*{\qed}
\end{equation*}
\end{proof}

\subsection{Proof of \Cref{thm:upperbound}}
\begin{proof}
The impossibility when $\log_2 n\leq K$ follows directly from Theorem 4.2 in \citep{GuruganeshMWW24}, which states that, for any $n\geq 2$ and $T\geq 3$, for any IC bid-reporting mechanism $M'$, there exist degenerate distributions $\delta_{v_1},\cdots,\delta_{v_n}$ such that
\begin{align*}
    \REV(M',\delta_{v_1},\cdots, \delta_{v_n})\leq \min\left\{\frac{2.5}{\ln T}\Myer(\delta_{v_1},\cdots, \delta_{v_n}),T\cdot\mathrm{SPA}(\delta_{v_1},\cdots, \delta_{v_n})\right\}.
\end{align*}
Here $\delta_{v_i}$ denotes the degenerate distribution at $v_i\in\bbR$, represented by quantile function $v(q)=v_i$.
Note that for degenerate distributions, $\Myer(\delta_{v_1},\cdots, \delta_{v_n})=\WEL(\delta_{v_1},\cdots, \delta_{v_n})=\max_{i\in[n]}v_i$.

To adapt this impossibility result for the distribution-reporting mechanism $M$, we apply \Cref{lemma:reduce-to-bid-report} to show that $M$ behaves like a bid-reporting mechanism on degenerate distributions. In the statement of \Cref{lemma:reduce-to-bid-report}, let the base distribution be the degenerate distribution $\delta_{1}$, represented by quantile function $\hat{v}(q)=1$, and let the range of multiplicative factors be $A=\bbR_+$. Then the corresponding $(\tilde{x},\tilde{p})$ can be viewed as an IC bid-reporting mechanism $M'$, which implies that
\begin{align*}
    \REV(M,\delta_{v_1},\cdots, \delta_{v_n})=\tilde{p}(v_1,\cdots,v_n)\leq \min\left\{\frac{2.5}{\ln T}\WEL(\delta_{v_1},\cdots, \delta_{v_n}),T\cdot\mathrm{SPA}(\delta_{v_1},\cdots, \delta_{v_n})\right\}
\end{align*}

This implies that for any $K\geq\log_23$, if $\log_2 n\leq K$, then there exist $F_1,\cdots,F_n$ such that
\begin{align*}
    \REV(M,F_1,\cdots, F_n)\leq& \min\{\frac{\frac{5}{2}}{K\cdot \ln 2}\mathrm{WEL}(F_1,\cdots, F_n),2^K\mathrm{SPA}(F_1,\cdots, F_n)\}\\
    \leq& \min\left\{\frac{\frac{5}{\ln 2}}{K+\log_2 n}\mathrm{WEL}(F_1,\cdots, F_n),2^K\mathrm{SPA}(F_1,\cdots, F_n)\right\}.
\end{align*}

Now we consider the case where $K\leq \log_2 n$, and then $K+\log n=\Theta(\log n)$. We prove that there is some instance $\bm{F}$ such that
$$\mathrm{REV}(M,\bm{F})\leq \min\left\{\Theta\left(\frac{1}{\log n}\right)\mathrm{WEL}(\bm{F}),\Theta(1)\mathrm{SPA}(\bm{F})\right\}.$$

The following proof divides into three steps:
\begin{description}
    \item[(A)]  Construct a distribution $D_I$ of the instance $\bm{F}=(F_1,\cdots, F_n)$;
    \item[(B)] Prove that $\E_{\bm{F}\sim D_I}[\min\{\frac1{\Theta(\log n)}\mathrm{WEL}(\bm{F}),\mathrm{SPA}(\bm{F})\}] \geq \Omega(n\delta)$ where $\delta = O(\frac{1}{\sqrt n})$;
    \item[(C)] Prove that $\E_{\bm{F}\sim D_I}[\REV(M,\bm{F})]\le O(n\delta)$.
\end{description}

\textbf{Step (A).} Let $L=\lfloor\frac12\log_2 n\rfloor-1$. By assumption $\log_2 n\geq K\geq 6$, so $L\geq 2$, and $\log_2 n\leq 2(L+2)\leq 4L$. 

Let $\epsilon = \frac 1{4n}$. 
Define a distribution $D_{F}$ over a family of $L$ distributions $\{F^{1},\cdots,F^{L}\}$, where each $F^j$ is defined by the quantile function $v^j(q)=\frac{1}{2^j\epsilon}\I[q\leq\epsilon]$. Let $D_{F}$ assign $\delta\cdot 2^j$ probability to each $F^j$, where $\delta:=\frac{1}{\sum_{j=1}^L 2^j}$ is the normalizing factor. Note that $\delta=\frac1{2^{L+1}-2}\geq \frac1{2^{\lfloor\frac12\log_2 n\rfloor}-2}\geq\frac1{\sqrt{n}}$.

Construct the distribution $D_I$ of the instance $\bm{F}=(F_1,\cdots, F_n)$ such that each $F_i$ is independently drawn from distribution $D_{F}$.

\textbf{Step (B).} 
When $F_i\sim D_F$, we have $\E_{F_i\sim D_F}[\E_{v_i\sim F_i}[v_i]]=\sum_{j=1}^{L}\delta\cdot 2^j(\frac{1}{2^j\epsilon}\cdot  \epsilon)=\delta L$. Also, note that regardless of which distribution in the family is drawn as $F_i$, it always holds that $\Pr_{v_i\sim F_i}[v_i>0]=\epsilon$.
% Below, we denote $v_i(\cdot)$ as the quantile function of $F_i$. When $v_i(\cdot)\sim D_F$, we have $\E_{v_i(\cdot)\sim D_F}[\E_{q_i}[v_i(q_i)]]=\sum_{j=1}^L\delta\cdot 2^j \frac{H}{2^j}=\delta LH$. Also, note that we always have $\Pr_{q_i}[v_i(q_i)>0]=\epsilon$.

% When $I\sim D_I$, by concentration bound, we prove that with constant probability, $\mathrm{WEL}(I)$ will be above a constant fraction of its expectation. 

When $\bm{F}=(F_1,\cdots,F_n)\sim D_I$, we define the following event
$$E=\left\{\sum_{i\in[n]}\E_{v_i\sim F_i}[v_i]\geq\frac23 n\delta L\right\}.$$
By \Cref{lemma:lemma1} presented later, event $E$ happens with at least $\frac 12$ probability. 
Conditioning on $E$ happens, we have 
\begin{align*}
    \mathrm{WEL}(\bm{F})&\geq \sum_{i\in[n]}\E_{\bm{v}\sim\bm{F}}[v_i\cdot \I[\max_{i'\neq i} v_{i'}=0]]\\
    &\geq \sum_{i\in[n]}(1-(n-1)\epsilon)\E_{v_i\sim F_i}[v_i]\\
    &\geq \frac{3}{4}\sum_{i\in[n]}\E_{v_i\sim F_i}[v_i]\\
    &\geq\frac12 n\delta L.
\end{align*}
The first inequality is by only considering the welfare contribution when at most one value among $v_1,\cdots,v_n$ is positive. The second inequality is because the independence of $v_1,\cdots,v_n$ and $\Pr[\max_{i'\neq i} v_{i'}=0]=(1-\epsilon)^{n-1}\geq 1-(n-1)\epsilon$. The third inequality is by $\epsilon=\frac{1}{4n}$, and the last inequality is because $\sum_{i\in[n]}\E_{v_i\sim F_i}[v_i]\geq\frac23 n\delta L$ assuming event $E$ happens.
% However, $\mathrm{SPA}(v_1(\cdot),\cdots,v_n(\cdot))\leq \sum_{i\in[n]}\E_{q_1,\cdots,q_n}[\max_{i'\neq i}v_{i'}(q_{i'})\I[v_i(q_i)>0]]+1\leq \frac12 (n\epsilon)^2H+1$.

Next, for any instance $\bm{F}$, we have
\begin{align*}
    \mathrm{SPA}(\bm{F})\geq&\binom{n}{2}\epsilon^2(1-\epsilon)^{n-2}\frac{1}{2^{L}\epsilon}\\
    \geq& \frac12 n(n-1)\epsilon (1-n\epsilon)\frac{1}{2^L} \\
    = & \frac{3}{32}\frac{n-1}{2^L} 
\end{align*}
The first inequality follows by considering all events $E_{i,j}=\{v_i>0\land v_j>0\land\forall k\in[n]\setminus\{i,j\},v_k=0\}$ with $i,j\in [n]$ and $i<j$. Then $\mathrm {Pr} [E_{i,j}]=\epsilon^2(1-\epsilon)^{n-2}$ and no two events can happen at the same time. When $E_{i,j}$ happens, the revenue of SPA is $\min\{v_i,v_j\}\geq \frac{1}{2^L\epsilon}$. The second inequality follows from Bernoulli inequality: $(1-\epsilon)^{n-2}\ge 1-(n-2)\epsilon\ge 1-n\epsilon$. The last equality is by $\epsilon=\frac1{4n}$.

When event $E$ happens, since $\WEL(\bm{F})\geq \frac12 n\delta L$ and $\SPA(\bm{F})\geq\frac{3}{32}\frac{n-1}{2^L}$, we have
$$\min\{\frac{3}{16L}\mathrm{WEL}(\bm{F}),\mathrm{SPA}(\bm{F})\}\geq n\delta \min\{\frac{3}{32},\frac{3}{32}\frac{n-1}{2^L\delta n}\}.$$
Since $\frac{n-1}{2^L\delta n}=\frac{2^{L+1}-2}{2^L}\frac{n-1}{n}=(2-\frac{2}{2^L})(1-\frac{1}{n})\geq 1$, it follows that
$$\min\{\frac{3}{16L}\mathrm{WEL}(\bm{F}),\mathrm{SPA}(\bm{F})\}\geq \frac{3}{32}n\delta.$$

As $E$ happens with at least $\frac12$ probability, we have
$$\E_{\bm{F}\sim D_I}\left[\min\left\{\frac{3}{16L}\mathrm{WEL}(\bm{F}),\mathrm{SPA}(\bm{F})\right\}\right]\geq\frac{3}{64}n\delta.$$

\textbf{Step (C).} 
To prove the upper bound on the revenue of $M$, we apply \Cref{lemma:reduce-to-bid-report}.
In the statement of \Cref{lemma:reduce-to-bid-report}, consider the base distribution $\hat{F}$ given by quantile function $\hat{v}(q)=\frac1{\epsilon}\I[q\leq\epsilon]$, and the bid space $A=\{\frac1{2^j}:j=1,\cdots,L\}$. Then $D_F$ corresponds to a distribution $D_A$ over $A$ which assigns $\delta\cdot 2^j$ probability to $\frac1{2^j}$.

By \Cref{lemma:reduce-to-bid-report}, $M$ induces an IC bid-reporting mechanism $(\tilde{x},\tilde{p})$. Morever, for each $i\in[n]$, $\tilde{x}_i$ satisfies the feasibility constraint that $\tilde{x}_i(a_1,\cdots,a_n)\leq\int_0^1\hat{v}(q_i)dq_i=1$. 

When $a_1,\cdots,a_n$ are independently drawn from $D_A$, for each buyer $i\in[n]$, the expected revenue of $(\tilde{x},\tilde{p})$ from $i$ is upper-bounded by the optimal posted-price revenue. That is, 
\begin{align*}
    \E_{a_1,\cdots,a_n\sim D_A}[\tilde{p}_i(a_1,\cdots,a_n)]\leq\max_{j\in[L]}\E_{a_i\sim D_A}[\frac1{2^j}\I[a_i\geq \frac1{2^j}]]=\max_{j\in[L]}\frac1{2^j}\sum_{k=1}^j\delta\cdot 2^k
    % =\max_{j\in[L]}\frac1{2^j}\delta(2^{j+1}-2)
    \leq 2\delta.
\end{align*}

By the definition of $\tilde{p}$, we have
\begin{align*}
    \E_{\bm{F}\sim D_I}[p_i(\bm{F})]=\E_{a_1,\cdots,a_n\sim D_A}[\tilde{p}_i(a_1,\cdots,a_n)]\leq 2\delta.
\end{align*}
It follows that \begin{align*}
    \E_{\bm{F}\sim D_I}[\REV(M,\bm{F})]=\E_{\bm{F}\sim D_I}[\sum_{i\in[n]}p_i(\bm{F})]\leq 2n\delta.
\end{align*}

Combining steps (B) and (C), we have 
$$\E_{\bm{F}\sim D_I}[\REV(M,\bm{F})]\leq 2n\delta\leq \frac{128}{3}\E_{\bm{F}\sim D_I}[\min\{\frac{3}{16L}\mathrm{WEL}(\bm{F}),\mathrm{SPA}(\bm{F})\}].$$

Therefore, there exists some instance $\bm{F}=(F_1,\cdots,F_n)$ such that \begin{align*}
            \mathrm{REV}(M,F_1,\cdots,F_n)&\leq\min\{\frac{8}{L}\mathrm{WEL}(F_1,\cdots,F_n),\frac{128}{3}\mathrm{SPA}(F_1,\cdots,F_n)\}\\
            &\leq \min\{\frac{64}{K+\log_2 n}\mathrm{WEL}(F_1,\cdots,F_n),2^K\mathrm{SPA}(F_1,\cdots,F_n)\}.
        \end{align*}
where the second inequality is because $K+\log_2 n\leq 2\log_2 n\leq 8L$ and $2^K\geq 2^6>\frac{128}{3}$.

This completes the proof of \Cref{thm:upperbound}.\qed
\end{proof}

Lastly, we prove \Cref{lemma:lemma1} used in step (B) of the proof.

\begin{lemma}\label{lemma:lemma1}
    When $\bm{F}\sim D_I$, the event $E=\{\sum_{i\in[n]}\E_{v_i\sim F_i}[v_i]\geq \frac23 n\delta L\}$ happens with at least $\frac12$ probability.
\end{lemma}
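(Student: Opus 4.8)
\textbf{Proof plan for \Cref{lemma:lemma1}.}
The plan is to show that the random variable $X:=\sum_{i\in[n]}\E_{v_i\sim F_i}[v_i]$ has mean exactly $n\delta L$ and is tightly concentrated, so that it lands above $\frac23 n\delta L$ with probability at least $\frac12$. First I would observe that for each $i$, the quantity $Y_i:=\E_{v_i\sim F_i}[v_i]$ is itself a random variable (the randomness being the draw $F_i\sim D_F$), and these $Y_i$ are i.i.d.\ across $i\in[n]$ because the $F_i$ are drawn independently from $D_F$. From the computation already done in Step (B), $\E[Y_i]=\sum_{j=1}^L \delta\cdot 2^j\cdot\bigl(\frac{1}{2^j\epsilon}\cdot\epsilon\bigr)=\delta L$, so $\E[X]=n\delta L$. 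The random variable $Y_i$ takes value $\frac{1}{2^j\epsilon}\cdot\epsilon = 2^{-j}$ with probability $\delta 2^j$ for $j=1,\dots,L$; in particular $Y_i\in[2^{-L},\tfrac12]$, so each $Y_i$ is bounded, with $Y_i\le\frac12$.

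Next I would apply a concentration inequality for sums of bounded independent random variables — Hoeffding's inequality is the natural choice since $Y_i\in[0,\tfrac12]$. Hoeffding gives
\[
\Pr\!\left[X \le \E[X] - t\right] \;\le\; \exp\!\left(-\frac{2t^2}{n\cdot(1/2)^2}\right) \;=\; \exp\!\left(-\frac{8t^2}{n}\right).
\]
Taking $t=\frac13 n\delta L$, the event $E^c$ is exactly $\{X\le\E[X]-t\}$, so $\Pr[E^c]\le \exp\!\left(-\frac{8}{9}n\delta^2 L^2\right)$. It remains to check that this exponent is at least $\ln 2$, i.e.\ that $n\delta^2L^2$ is bounded below by a constant. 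Recall $\delta\ge \frac{1}{\sqrt n}$ and $L=\lfloor\frac12\log_2 n\rfloor-1\ge 2$ (using $\log_2 n\ge K\ge 6$), so $n\delta^2 \ge 1$ and hence $n\delta^2 L^2\ge L^2\ge 4$. Then $\frac89 n\delta^2 L^2 \ge \frac{32}{9} > \ln 2$, which yields $\Pr[E^c]\le e^{-32/9} < \frac12$, so $\Pr[E]\ge\frac12$ as claimed.

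The only mild subtlety — and the step I would be most careful about — is getting the bound $n\delta^2L^2\ge$ (something bigger than $\frac98\ln 2$) right using the precise definitions of $\delta$ and $L$ from Step (A), rather than the asymptotic estimates; the inequalities $\delta\ge 1/\sqrt n$ and $L\ge 2$ stated there are exactly what is needed and already proved, so this is routine. An alternative to Hoeffding, if one wants to avoid even invoking boundedness constants, is a second-moment (Chebyshev) argument: $\mathrm{Var}(Y_i)\le \E[Y_i^2]\le \tfrac12\E[Y_i]=\tfrac12\delta L$, so $\mathrm{Var}(X)\le\tfrac12 n\delta L$, and Chebyshev gives $\Pr[E^c]\le \mathrm{Var}(X)/t^2 \le \frac{(1/2)n\delta L}{(n\delta L/3)^2} = \frac{9}{2n\delta L}$, which is below $\frac12$ once $n\delta L\ge 9$ — and $n\delta L\ge \sqrt n\cdot L\ge 2\sqrt n \ge 2\sqrt{64}=16$ since $\log_2 n\ge 6$ forces $n\ge 64$. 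Either route closes the argument; I would present the Hoeffding version as the cleaner one.
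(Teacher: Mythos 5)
Your proposal is correct and follows essentially the same route as the paper: a Chernoff--Hoeffding bound on the i.i.d.\ bounded variables $\E_{v_i\sim F_i}[v_i]$ with mean $\delta L$, deviation $\frac13 n\delta L$, and the facts $n\delta^2\geq 1$ and $L\geq 2$. Your only deviations — using the sharper range $[0,\tfrac12]$ instead of $[0,1]$ and the optional Chebyshev fallback — merely improve or substitute constants and change nothing essential.
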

\begin{proof}
Define random variable $X_i=\E_{v_i\sim F_i}[v_i]$. We have $\E[X_i]=\sum_{j=1}^L\delta\cdot 2^j \frac{H}{2^j}=\delta L$ and $X_i\in[0,1]$. By Chernoff-Hoeffding's inequality, for any $t\in(0,1)$,
    \begin{align*}
        &\Pr_{\bm{F}\sim D_I}[\sum_{i\in[n]}X_i\leq n\delta L-tn\delta L]\leq e^{\frac{-2(tn\delta L)^2}{n(1-0)^2}}=e^{-2t^2L^2n\delta^2}\leq e^{-2t^2L^2}
    \end{align*}
The last inequality is because $n\delta^2\geq 1$.

Since $L\geq2$, take $t=\frac13$, we have \begin{align*}
    \Pr_{\bm{F}\sim D_I}[\sum_{i\in[n]}\E_{q_i}[v_i(q_i)]\geq \frac23 n\delta LH]\geq 1-e^{-\frac{8}{9}}\geq 0.588\geq \frac12.  \tag*{\qed} 
\end{align*}

% We remark that, when $L\to\infty$, we have 
% \begin{align*}
%     &\Pr_{\bm{F}\sim D_I}[\sum_{i\in[n]}\E_{q_i}[v_i(q_i)]\geq (1-o(1)) n\delta LH]\geq 1-o(1)
% \end{align*}
% by taking $t=\frac1{\sqrt{L}}$.
\end{proof}

\subsection{Proof of \Cref{thm:upperbound-regular}}
% \gyr{Should remove/change this subsection?}
% (upper bound)

\begin{proof}
When $\log_2 n\leq 3K$, by the proof of \Cref{thm:upperbound} we already know that there exist degenerate distributions (which are regular) $F_1,\cdots,F_n$ such that
\begin{align*}
    \REV(M,F_1,\cdots, F_n)&\leq \min\{\frac{\frac{2.5}{\ln 2}}{K}\mathrm{WEL}(F_1,\cdots, F_n),2^K\mathrm{SPA}(F_1,\cdots, F_n)\}\\
    &\leq \min\{\frac{\frac{10}{\ln 2}}{K+\log_2 n}\mathrm{WEL}(F_1,\cdots, F_n),2^K\mathrm{SPA}(F_1,\cdots, F_n)\}.
\end{align*}

Now we assume $\log_2 n\geq 3K$.
Let $L=\lfloor\frac12\log_2 n\rfloor-1$, we have $L\geq 11$ since $\log_2 n\geq 3K\geq 24$. We prove that there exists $F_1,\cdots,F_n\in\calA^R$ such that $$\REV(M,F_1,\cdots,F_n)\leq \min\{\Theta(\frac{1}{L})\mathrm{WEL}(F_1,\cdots,F_n),\Theta(1)\mathrm{SPA}(F_1,\cdots,F_n)\}.$$

The following proof divides into three steps:
\begin{description}
    \item[(A)]  Construct a distribution $D_I$ of the instance $\bm{F}=(F_1,\cdots, F_n)$;
    \item[(B)] Prove that $\E_{\bm{F}\sim D_I}[\min\{\frac1{\Theta(\log n)}\mathrm{WEL}(\bm{F}),\mathrm{SPA}(\bm{F})\}] \geq \Omega(n\delta\log n)$ where $\delta = O(\frac{1}{\sqrt n})$;
    \item[(C)] Prove that $\E_{\bm{F}\sim D_I}[\REV(M,\bm{F})]\le O(n\delta\log n)$.
\end{description}

\textbf{Step (A).}
Let $\epsilon=\frac1{4n}$. 
Define a distribution $D_{F}$ on a family of $L$ distributions $\{\mathcal{F}^{1},\cdots,\mathcal{F}^{L}\}$, where each $\mathcal{F}^j$ is given by the quantile function $v^j(q)=\frac{1}{2^j}\frac1{\max\{q,\epsilon\}}$. Let $D_{F}$ assign $\delta\cdot 2^j$ probability to each $\mathcal{F}^j$, where $\delta=\frac{1}{\sum_{j=1}^L 2^j}=\frac1{2^{L+1}-2}\geq \frac1{\sqrt{n}}$ is the normalizing factor.

We construct the distribution $D_I$ of the instance $\bm{F}=(F_1,\cdots, F_n)$ such that each $F_i$ is independently drawn from distribution $D_{F}$.
When $\bm{F}\sim D_I$, we define random variable $n_j:=|\{i\in[n]:F_i=F^j\}|$ as the number of buyers who are assigned $F^j$ as value distribution.

\textbf{Step (B).}
% When $F_i\sim D_F$, we have $\E_{F_i\sim D_F}[\E_{v_i\sim F_i}[v_i]]=\sum_{j=1}^{L}\delta\cdot 2^j\cdot\frac{1}{2^j}\int_0^1\frac1{\max\{q,\epsilon\}}dq=\delta L(1+\ln\frac1{\epsilon})$.
Define $L_1=\lceil\frac13 L\rceil$, $L_2=\lfloor\log_2(\frac1{\delta L})\rfloor$. Note that $L_2-L_1=\frac23 L-O(\log\log n)=\Theta(\log n)$.

% When $\bm{F}=(F_1,\cdots,F_n)\sim D_I$, we have $\E_{\bm{F}\sim D_I}[\sum_{j=1}^Ln_j\frac1{2^j}]=n\sum_{j=1}^L\delta 2^j\cdot \frac1{2^j}=n\delta L$. We define the following event in which $\sum_{j=1}^Ln_j\frac1{2^j}$ concentrates around its expectation:
% $$E:=\{\sum_{j=1}^{L}n_j\frac1{2^j}\in[\frac23 n\delta L,\frac43 n\delta L]\}.$$

We define the following two events that both happen with high probability, via concentration inequalities:
\begin{align*}
&E_1=\{\sum_{j=1}^{L}n_j\frac1{2^j}\leq\frac43 n\delta L\},\\
&E_2=\{\sum_{j=1}^{L_1}n_j\frac1{2^j}\geq\frac12 n\delta L_1\}.
\end{align*}

By \Cref{lemma:regular-concentration1} and \Cref{lemma:regular-concentration2} presented later, we have $\Pr_{\bm{F}\sim D_I}[E_1]\geq\frac34$ and $\Pr_{\bm{F}\sim D_I}[E_2]\geq\frac34$. It follows that $\Pr_{\bm{F}\sim D_I}[E_1\cap E_2]\geq1-(1-\Pr_{\bm{F}\sim D_I}[E_1])-(1-\Pr_{\bm{F}\sim D_I}[E_2])\geq \frac12$, i.e., with constant probability both $E_1$, $E_2$ happens.

When events $E_1$ and $E_2$ happens, we derive a lower bound on $\WEL$:
\begin{align*}
    \WEL(\bm{F})=&\E_{\bm{v}\sim\bm{F}}\left[\max_{i\in[n]}v_i\right]\\
    \geq& \sum_{j=L_1}^{L_2}(\frac{1}{2^j\epsilon}-\frac{1}{2^{j+1}\epsilon})\Pr_{\bm{v}\sim\bm{F}}\left[\max_{i\in[n]}v_i\geq \frac{1}{2^j\epsilon}\right]\\
    =& \sum_{j=L_1}^{L_2}\frac{1}{2^{j+1}\epsilon}\left(1-\prod_{k=1}^{j}\left(1-2^{j-k}\epsilon\right)^{n_k}\right)\\
    \geq& \sum_{j=L_1}^{L_2}\frac{1}{2^{j+1}\epsilon}\left(\sum_{k=1}^{j}n_k2^{j-k}\epsilon\right)\left(1-\sum_{k=1}^{j}n_k2^{j-k}\epsilon\right)\\
    =& \sum_{j=L_1}^{L_2}\frac{1}{2}\left(\sum_{k=1}^{j}n_k2^{-k}\right)\left(1-2^j\epsilon\sum_{k=1}^{j}n_k2^{-k}\right)\\
    % \geq& \sum_{j=L_1}^{L}\frac{1}{2}(\sum_{k=1}^{L_1}n_k2^{-k})(1-2^j\epsilon\sum_{k=1}^{j}n_k2^{-k})\\
    \geq&\sum_{j=L_1}^{L_2}\frac{1}{4}n\delta L_1(1-2^j\epsilon \frac43 n\delta L)\\
    \geq&(L_2-L_1+1)\frac16n\delta L_1
\end{align*}
The first inequality is because $\E_{\bm{v}\sim\bm{F}}[\max_{i\in[n]}v_i]=\int_0^{+\infty}\Pr[\max_{i\in[n]}v_i\geq t]dt$. The second inequality is because $1-\prod_{i=1}^n (1-t_i)=\sum_{i=1}^nt_i\prod_{j=1}^{i-1}(1-t_j)\geq \sum_{i=1}^nt_i\prod_{j=1}^{n}(1-t_j)\geq \sum_{i=1}^nt_i(1-\sum_{j=1}^{n}t_j)$ holds for any variables $t_1,\cdots,t_n\in[0,1]$. The third inequality is because when $E_1$ and $E_2$ happens, we have $\sum_{k=1}^{j}n_k2^{-k}\geq \sum_{k=1}^{L_1}n_k2^{-k}\geq\frac12n\delta L_1$ and $\sum_{k=1}^{j}n_k2^{-k}\leq \sum_{k=1}^{L}n_k2^{-k}\leq\frac43n\delta L$ for $L_1\leq j\leq L_2$. The last inequality is because $2^j\epsilon \frac43 n\delta L\leq 2^{L_2}\frac1{4n}\frac43 n\delta L\leq \frac13$, recalling $L_2=\lfloor\log_2(\frac1{\delta L})\lfloor$ and $\epsilon=\frac1{4n}$.

To derive a lower bound on $\SPA(\bm{F})$, we define $\varphi_{F_i}(v_i)$ as the Myerson virtual value function of distribution $F_i$. By the property of truncated equal-revenue distributions, we have $\varphi_{F^j}(v_i)=\I[v_i=\frac1{2^j\epsilon}]\frac1{2^j\epsilon}$. Then we have
\begin{align*}
    \SPA(\bm{F})\geq&\E_{\bm{v}\sim\bm{F}}\left[\sum_{i\in[n]}\varphi_{F_i}(v_i)\I[v_i>\max_{i'\neq i}v_{i'}]\right]\\
    \geq&\sum_{j=1}^{L}n_j\epsilon\frac1{2^j\epsilon}\Pr\left[\max_{i\in[n]}v_i<\frac1{2^j\epsilon}\right]\\
    % \geq&\sum_{j=L_1}^Ln_j\epsilon\frac1{2^j\epsilon}\prod_{k=1}^{j}(1-2^{j-k}\epsilon)^{n_k}\\
    \geq&\sum_{j=1}^{L_2}n_j2^{-j}(1-2^j\epsilon\sum_{k=1}^{j}n_k2^{-k})\\
    \geq&\sum_{j=1}^{L_2}n_j2^{-j}(1-2^j\epsilon\frac{4}{3}n\delta L)\\
    \geq&\frac23\sum_{j=1}^{L_2}n_j2^{-j}\\
    \geq&\frac13n\delta L_1
\end{align*}
The first inequality is by only considering the virtual welfare contribution when $v_i$ is strictly maximal among $v_1,\cdots,v_n$. The second inequality is because $\E_{v_i\sim F^j}[\varphi_{F^j}(v_i)\I[v_i>\max_{i'\neq i}v_{i'}]]=\epsilon\frac1{2^j\epsilon}\Pr[\max_{i'\neq i}v_{i'}<\frac1{2^j\epsilon}]\geq \epsilon\frac1{2^j\epsilon}\Pr[\max_{i'\in[n]}v_{i'}<\frac1{2^j\epsilon}]$ when $F_i=F^j$. The third inequality is because $\Pr[\max_{i\in[n]}v_i<\frac1{2^j\epsilon}]=\prod_{k=1}^j(1-2^{j-k}\epsilon)^{n_k}\geq 1-\sum_{k=1}^jn_k 2^{j-k}\epsilon$. The fourth inequality is because $\sum_{k=1}^{j}n_k2^{-k}\leq \sum_{k=1}^{L}n_k2^{-k}\leq\frac43n\delta L$ when $E_1$ happens. The fifth inequality is by $\epsilon=\frac1{4n}$ and $2^j\leq 2^{L_2}\leq\frac1{\delta L}$. The last inequality is because $\sum_{j=1}^{L_2}n_j2^{-j}\geq \sum_{j=1}^{L_1}n_j2^{-j}\geq \frac12 n\delta L_1$ when $E_2$ happens.

Therefore, when both $E_1$ and $E_2$ happens, we have
$$\min\{\frac2{L_2-L_1+1}\WEL(\bm{F}),\SPA(\bm{F})\}\geq\frac13n\delta L_1.$$

Since $\Pr_{\bm{F}\sim D_I}[E_1\cap E_2]\geq \frac12$, we have
$$\E_{\bm{F}\sim D_I}[\min\{\frac2{L_2-L_1+1}\WEL(\bm{F}),\SPA(\bm{F})\}]\geq\frac16n\delta L_1.$$

\textbf{Step (C).} 
To prove the upper bound on the revenue of $M$, we apply \Cref{lemma:reduce-to-bid-report}.
In the statement of \Cref{lemma:reduce-to-bid-report}, consider the base distribution $\hat{F}$ given by quantile function $\hat{v}(q)=\frac1{\max\{q,\epsilon\}}$, and the bid space $A=\{\frac1{2^j}:j=1,\cdots,L\}$. Then $D_F$ corresponds to a distribution $D_A$ over $A$ which assigns $\delta\cdot 2^j$ probability to $\frac1{2^j}$.

By \Cref{lemma:reduce-to-bid-report}, $M$ induces an IC bid-reporting mechanism $(\tilde{x},\tilde{p})$. Morever, for each $i\in[n]$, $\tilde{x}_i$ satisfies the feasibility constraint that $\tilde{x}_i(a_1,\cdots,a_n)\leq\int_0^1\hat{v}(q_i)dq_i=1+\ln\frac{1}{\epsilon}$. 

When $a_1,\cdots,a_n$ are independently drawn from $D_A$, for each buyer $i\in[n]$, the expected revenue of $(\tilde{x},\tilde{p})$ from $i$ is upper-bounded by the optimal posted-price revenue. That is, 
\begin{align*}
    \E_{a_1,\cdots,a_n\sim D_A}\left[\tilde{p}_i(a_1,\cdots,a_n)\right]
    &\leq\max_{j\in[L]}\E_{a_i\sim D_A}\left[\left(1+\ln\frac{1}{\epsilon}\right)\frac1{2^j}\I[a_i\geq \frac1{2^j}]\right]\\
    &=\left(1+\ln\frac{1}{\epsilon}\right)\max_{j\in[L]}\frac1{2^j}\sum_{k=1}^j\delta\cdot 2^k\\
    &=\left(1+\ln\frac{1}{\epsilon}\right)\max_{j\in[L]}\frac1{2^j}\delta(2^{j+1}-2)\\
    &\leq 2\delta\left(1+\ln\frac{1}{\epsilon}\right).
\end{align*}

By the definition of $\tilde{p}$, we have
\begin{align*}
    \E_{\bm{F}\sim D_I}[p_i(\bm{F})]=\E_{a_1,\cdots,a_n\sim D_A}[\tilde{p}_i(a_1,\cdots,a_n)]\leq 2\delta\left(1+\ln\frac{1}{\epsilon}\right).
\end{align*}
It follows that \begin{align*}
    \E_{\bm{F}\sim D_I}\left[\REV(M,\bm{F})\right]=\E_{\bm{F}\sim D_I}\left[\sum_{i\in[n]}p_i(\bm{F})\right]\leq 2n\delta\left(1+\ln\frac{1}{\epsilon}\right).
\end{align*}

Combining steps (B) and (C), we have
\begin{align*}
\E_{\bm{F}\sim D_I}[\REV(M,\bm{F})]&\leq 2n\delta(1+\ln\frac{1}{\epsilon})\leq\frac{12(1+\ln\frac1{\epsilon})}{L_1}\E_{\bm{F}\sim D_I}[\min\{\frac2{L_2-L_1+1}\WEL(\bm{F}),\SPA(\bm{F})\}].
\end{align*}

Since $L=\lfloor \frac12\log_2 n\rfloor-1\geq 11$, we have $\log_2 n\leq 2(L+2)=2L+4\leq 2.4L$, and $K+\log_2 n\leq \frac43\log_2 n\leq 3.2 L$. We also have $1+\ln \frac1{\epsilon}=1+\ln(2)\log_2(4n)=1+2\ln 2+\ln2\cdot\log_2(n)\leq 1+6\ln 2+2\ln 2\cdot L<5.2+1.4L< 2L$. 
Additionally, we have $L_2=\lfloor\log_2(\frac1{\delta L})\rfloor=\lfloor\log_2(\frac{2^{L+1}-2}{L})\rfloor\geq \log_2(2^{L+1}(1-2^{-L}))-\log_2 L-1= L-\log_2(\frac{L}{1-2^{-L}})\geq\frac23 L$.
Therefore, $L_2-L_1+1\geq \frac23 L-\frac13 L=\frac13 L$.

Therefore, there exists a distribution profile $\bm{F}=(F_1,\cdots,F_n)$ of regular distributions, such that
\begin{align*}
\REV(M,\bm{F})&\leq \frac{12(1+\ln\frac1{\epsilon})}{L_1}\min\left\{\frac2{L_2-L_1+1}\cdot \WEL(\bm{F}),\SPA(\bm{F})\right\}\\
&\leq \frac{24L}{\frac13 L}\min\{\frac2{\frac13 L}\cdot \WEL(\bm{F}),\SPA(\bm{F})\}\\
&\leq\min\left\{\frac{432}{L}\cdot \WEL(\bm{F}),72\cdot \SPA(\bm{F})\right\}\\
&\leq \min\left\{\frac{1382.4}{K+\log_2 n}\cdot \WEL(\bm{F}),2^K\cdot \SPA(\bm{F})\right\}.
 \tag*{\qed}
\end{align*}

Lastly, we prove \Cref{lemma:regular-concentration1} and \Cref{lemma:regular-concentration2} used in step (B) of the proof.

\begin{lemma}\label{lemma:regular-concentration1}
    When $\bm{F}\sim D_I$, the event $E_1=\{\sum_{j=1}^{L}n_j\frac1{2^j}\leq\frac43 n\delta L\}$ happens with at least $\frac34$ probability.
\end{lemma}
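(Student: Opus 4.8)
The plan is to write the random quantity $\sum_{j=1}^{L} n_j 2^{-j}$ as a sum of $n$ independent bounded random variables and then apply a one-sided Hoeffding bound, exactly mirroring the proof of \Cref{lemma:lemma1}.

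First I would introduce, for each buyer $i\in[n]$, the random index $j(i)$ with $F_i=F^{j(i)}$ and set $Y_i:=2^{-j(i)}$. Since $F_i\sim D_F$ assigns probability $\delta\cdot 2^j$ to $F^j$, the variable $Y_i$ equals $2^{-j}$ with probability $\delta\cdot 2^j$ for each $j\in[L]$, so $\E[Y_i]=\sum_{j=1}^{L}\delta 2^j\cdot 2^{-j}=\delta L$ and $Y_i\in[2^{-L},\tfrac12]\subseteq[0,\tfrac12]$; moreover the $Y_i$ are mutually independent because the $F_i$ are drawn independently. By construction $\sum_{j=1}^{L} n_j 2^{-j}=\sum_{i\in[n]} Y_i$, whose expectation is $n\delta L$, so $E_1$ is precisely the event that this sum stays within a factor $\tfrac43$ of its mean.

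Next I would apply the one-sided Chernoff--Hoeffding inequality to the independent variables $Y_i\in[0,\tfrac12]$ with deviation $\tfrac13 n\delta L$:
$$\Pr\Bigl[\sum_{i\in[n]} Y_i \geq \tfrac43 n\delta L\Bigr] \leq \exp\!\Bigl(-\frac{2\bigl(\tfrac13 n\delta L\bigr)^2}{n\,(\tfrac12)^2}\Bigr) = \exp\!\Bigl(-\tfrac89 n\delta^2 L^2\Bigr).$$
Finally, invoking $\delta\geq 1/\sqrt n$ from Step (A) (hence $n\delta^2\geq 1$) together with $L\geq 11$ in this regime, the right-hand side is at most $\exp(-\tfrac89 L^2)<\tfrac14$, which yields $\Pr[E_1]\geq\tfrac34$, as claimed.

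I do not expect a substantive obstacle here. The only points requiring a little care are identifying the correct range $[0,\tfrac12]$ of the summands (so that the Hoeffding exponent carries the constant $\tfrac89$ rather than a weaker one), and recalling the two facts $n\delta^2\geq 1$ and $L\geq 11$ (indeed $L\geq 2$ already suffices for the final numerical bound) from the surrounding construction; everything else is a direct transcription of the argument used for \Cref{lemma:lemma1}.
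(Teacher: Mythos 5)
Your proposal is correct and follows essentially the same route as the paper's proof: it writes $\sum_j n_j 2^{-j}$ as a sum of $n$ independent variables in $[0,\tfrac12]$ with mean $\delta L$, applies the one-sided Chernoff--Hoeffding bound with relative deviation $\tfrac13$, and uses $n\delta^2\geq 1$ plus the lower bound on $L$ to get a failure probability well below $\tfrac14$. No gaps.
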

\begin{proof}
Define random variable $X_i=\sum_{j=1}^L\I[F_i=F^j]\frac1{2^j}$, then $\sum_{j=1}^{L}n_j\frac1{2^j}=\sum_{i\in[n]}X_i$. We have $\E[X_i]=\sum_{j=1}^L\delta 2^j\cdot\frac1{2^j}=\delta L$ and $X_i\in[0,\frac12]$. By Chernoff-Hoeffding's inequality, for any $t\in(0,1)$,
    % \begin{align*}
    %     &\Pr_{\bm{F}\sim D_I}[|\sum_{i\in[n]}X_i-n\delta L|>tn\delta L]\leq 2e^{\frac{-2(tn\delta L)^2}{n(\frac12-0)^2}}=2e^{-8t^2L^2n\delta^2}\leq 2e^{-8t^2L^2}.
    % \end{align*}
\begin{align*}
    &\Pr_{\bm{F}\sim D_I}[\sum_{i\in[n]}X_i\geq n\delta L+tn\delta L]\leq e^{\frac{-2(tn\delta L)^2}{n(\frac12-0)^2}}=e^{-8t^2L^2n\delta^2}\leq e^{-8t^2L^2}.
\end{align*}
The last inequality is because $n\delta^2\geq 1$.

Since $L> 3$, take $t=\frac13$, we have 
\begin{align*}
    \Pr_{\bm{F}\sim D_I}[\sum_{i\in[n]}X_i\leq\frac43 n\delta L]\geq 1-e^{-8}\geq 0.999\geq \frac34.
\end{align*}
That is, $\Pr_{\bm{F}\sim D_I}[E_1]\geq\frac34$.
\qed
\end{proof}

\begin{lemma}\label{lemma:regular-concentration2}
    When $\bm{F}\sim D_I$, the event $E_2=\{\sum_{j=1}^{L_1}n_j\frac1{2^j}\geq\frac12 n\delta L_1\}$ happens with at least $\frac34$ probability.
\end{lemma}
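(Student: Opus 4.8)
The plan is to mirror the proof of \Cref{lemma:regular-concentration1}, applying Hoeffding's inequality to a bounded sum of independent indicator-type random variables, with the only difference being that here we need a lower-tail bound rather than an upper-tail bound.

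First I would introduce, for each buyer $i\in[n]$, the random variable $X_i=\sum_{j=1}^{L_1}\I[F_i=F^j]\cdot\frac{1}{2^j}$, so that $\sum_{j=1}^{L_1}n_j\frac{1}{2^j}=\sum_{i\in[n]}X_i$. Since $F_i$ equals $F^j$ with probability $\delta\cdot 2^j$, this gives $\E[X_i]=\sum_{j=1}^{L_1}\delta\cdot 2^j\cdot\frac{1}{2^j}=\delta L_1$, and $X_i$ takes values in $\{0\}\cup\{2^{-j}:j\in[L_1]\}\subseteq[0,\tfrac12]$ (the dominant term is the $j=1$ term). The variables $X_1,\dots,X_n$ are independent because the $F_i$ are drawn independently from $D_F$.

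Next I would apply Hoeffding's inequality to the lower tail: for any $t\in(0,1)$,
$$\Pr_{\bm F\sim D_I}\Bigl[\textstyle\sum_{i\in[n]}X_i\le n\delta L_1-t\,n\delta L_1\Bigr]\le \exp\!\Bigl(\frac{-2(t\,n\delta L_1)^2}{n\cdot(1/2)^2}\Bigr)=\exp(-8t^2L_1^2\,n\delta^2)\le\exp(-8t^2L_1^2),$$
where the last inequality uses $n\delta^2\ge1$ (as in \Cref{lemma:regular-concentration1}). Taking $t=\tfrac12$ yields $\Pr[\sum_i X_i\le\tfrac12 n\delta L_1]\le\exp(-2L_1^2)$. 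Since $L=\lfloor\tfrac12\log_2 n\rfloor-1\ge 11$, we have $L_1=\lceil\tfrac13 L\rceil\ge 4$, so $\exp(-2L_1^2)\le e^{-32}<\tfrac14$; hence $\Pr_{\bm F\sim D_I}[E_2]\ge 1-e^{-32}\ge\tfrac34$, as claimed.

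I do not expect any real obstacle: the argument is routine and structurally identical to \Cref{lemma:regular-concentration1}. The only points requiring a moment's care are the constant bookkeeping — verifying the range bound $X_i\le\tfrac12$ and confirming that the standing assumption $L\ge 11$ forces $L_1\ge 4$, so that the $t=\tfrac12$ deviation produces a tail probability comfortably below $\tfrac14$.
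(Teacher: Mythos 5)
Your proposal is correct and follows essentially the same route as the paper: the same random variables $X_i=\sum_{j=1}^{L_1}\I[F_i=F^j]2^{-j}$ with mean $\delta L_1$ and range $[0,\tfrac12]$, the same lower-tail Hoeffding bound using $n\delta^2\geq 1$, and the same choice $t=\tfrac12$. The only (immaterial) difference is bookkeeping: the paper settles for $L_1\geq 3$ and a tail of $e^{-18}$, while you use $L_1\geq 4$ and $e^{-32}$; both are far below $\tfrac14$.
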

\begin{proof}
Define random variable $X_i=\sum_{j=1}^{L_1}\I[F_i=F^j]\frac1{2^j}$, then $\sum_{j=1}^{L_1}n_j\frac1{2^j}=\sum_{i\in[n]}X_i$. We have $\E[X_i]=\sum_{j=1}^L\delta 2^j\cdot\frac1{2^j}=\delta L_1$ and $X_i\in[0,\frac12]$. By Chernoff-Hoeffding's inequality, for any $t\in(0,1)$,
    \begin{align*}
        &\Pr_{\bm{F}\sim D_I}\left[\sum_{i\in[n]}X_i\leq n\delta L_1-tn\delta L_1\right]\leq e^{\frac{-2(tn\delta L_1)^2}{n(\frac12-0)^2}}=e^{-8t^2L_1^2n\delta^2}\leq e^{-8t^2L_1^2}.
    \end{align*}
The last inequality is because $n\delta^2\geq 1$.

Since $L_1\geq\frac13 L\geq 3$, take $t=\frac12$, we have 
\begin{align*}
    \Pr_{\bm{F}\sim D_I}\left[\sum_{i\in[n]}X_i\geq \frac12 n\delta L_1\right]\geq 1-e^{-18}\geq 0.999\geq \frac34.
\end{align*}
That is, $\Pr_{\bm{F}\sim D_I}[E_2]\geq\frac34$.
\qed
\end{proof}
\end{proof}

\section{Missing Proofs in Section 4}
\subsection{Proof of \Cref{theorem:lowerbound-vcg}}
\begin{proof}
Define $L=\lceil\log_2(4n)\rceil$. We define the probability distribution $D_{\alpha}$ of $\alpha$ as follows: with probability $\frac12$, let $\alpha=2^{K+1}$, and with the remaining $\frac12$ probability, let $\alpha$ be uniformly randomly drawn from $$A_{L,K}:=\{0,2^{-L},2^{-L+1},\cdots,\frac12,1,2,4,\cdots,2^K\}.$$

Fix any $\bm{F}=(F_1,\cdots,F_n)\in\calA^n$, we prove the lower bound on the revenue of $PW_{D_\alpha}$ under $\bm{F}$. We omit $\bm{F}$ from the notation of $\WEL(\bm{F})$ and $\VCG(\bm{F})$.

For each buyer $i\in[n]$, define 
\begin{align*}
w_i=\E_{\bm{v}\sim \bm F}[v_i\cdot x_i^{\VCG}(\bm{v})],\ 
s_i=\E_{\bm{v}\sim \bm F}[p_i^{\VCG}(\bm{v})],\ 
r_i=\E_{\bm{v}_{-i}\sim\bm{F}_{-i}}[\max_{\bm{z}\in X_{m,\bm{d}}}\sum_{i'\in[n]\setminus\{i\}}z_{i'}v_{i'}].
\end{align*}
% $w_i=\E_{\bm{v}\sim \bm F}[v_i\cdot \I[i=\arg\max_{i'\in[n]}v_{i'}]]$, 
% $s_i=\E_{\bm{v}\sim \bm F}[\max_{j\neq i}v_{j}\cdot \I[i=\arg\max_{i'\in[n]}v_{i'}]]$,
% and $r_i=\E_{\bm{v}_{-i}\sim\bm{F}_{-i}}[\max_{j\neq i}v_j]$. 
Here $w_i$ and $s_i$ are bidder $i$'s expected contribution to welfare and expected payment under VCG respectively, and $r_i$ is the optimal welfare of all other buyers ignoring the existence of buyer $i$, with $X_{m,\bm{d}}$ denoting the set of feasible allocations. 
Recall that in \Cref{def:peer-max-vcg} of the Peer-Welfare mechanism, $\tau^{\alpha}_i(\bm{F}_{-i})=\alpha\cdot r_i$.

Observe that, by the payment rule of VCG mechanism, it holds for each $i\in[n]$ that
\begin{align*}
    p^{\VCG}_i(\bm{v})=\max_{\bm{z}\in X_{m,\bm{d}}}\sum_{i'\in[n]\setminus\{i\}}z_{i'}v_{i'}-\sum_{i'\in[n]\setminus\{i\}}x_{i'}^{\VCG}(\bm{v})v_{i'}
\end{align*}
Taking expectation over $\bm{v}\sim\bm{F}$, we have $s_i=r_i-\sum_{j\neq i}w_j$, and it follows that
$$r_i=s_i+\sum_{j\neq i}w_j.$$

% For each $i\in[n]$, define $w_i=\E_{\bm{v}\sim \bm{F}}[v_i\cdot x^{\VCG}_i(\bm{v})]$ and $s_i=\E_{\bm{v}\sim \bm{F}}[p^{\VCG}_i(\bm{v})]$. Here $w_i$ and $s_i$ are buyer $i$'s obtained value and payment under VCG respectively.
% Define $r_i=\WEL(\bm{F}_{-i})=\E_{\bm{v}\sim\bm{F}}[]$...

The following analysis steps are similar to the proof of \Cref{theorem:lowerbound}.
We have $\WEL=\sum_{i\in[n]}w_i$ and $\VCG=\sum_{i\in[n]}s_i$, and  $s_i\leq w_i$ for each $i\in[n]$. It follows that 
\begin{equation}\label{eq:2}
    \VCG\leq r_i\leq \WEL\leq r_i+w_i\leq 2\WEL.
\end{equation}

We denote the revenue of $TAM_{\VCG,\bm{\tau}^{\alpha}}$ under $\bm{F}$ by $\mathrm{REV}(\alpha)$. By definition, we have $$\mathrm{REV}(\alpha)=\sum_{i\in[n]}\I[w_i\geq s_i+\alpha r_i]\cdot(s_i+\alpha r_i),$$
and $PW_{D_{\alpha}}(\bm{F})=\E_{\alpha\sim D_{\alpha}}[\mathrm{REV}(\alpha)]=\frac12\mathrm{REV}(2^{K+1})+\frac1{2(K+L+1)}\sum_{\alpha\in A_{L,K}}\mathrm{REV}(\alpha)$.

To prove the theorem, we discuss in two cases: (A) When there is a buyer whose welfare contribution $w_i$ in VCG contributes a large fraction in the total welfare $\WEL$; (B) When the contributed welfare of all buyers in VCG are relatively small.

\textbf{case (A):} There exists $i\in[n]$ such that $w_i\geq \frac12\WEL$. Let $\alpha^*$ be the maximal $\alpha\in A_{L,K}\cup\{2^{K+1}\}$ such that $w_i\geq s_i+\alpha r_i$. We prove that either $\E_{\alpha\sim D_{\alpha}}[\mathrm{REV}(\alpha)]\ge 2^K \VCG$ holds, or there exists $\alpha \in A_{L,K}$ such that $\mathrm{REV}(\alpha)\ge \frac 14 \WEL$.

\textbf{case (A.1):} If $\alpha^*=2^{K+1}$, then
$$\mathrm{REV}(\alpha^*) \geq \I[w_i\geq s_i+\alpha^* r_i](s_i+\alpha^* r_i) = s_i+\alpha^* r_i \geq 2^{K+1} r_i\geq 2^{K+1} \VCG,$$
where the last inequality is by \cref{eq:2}.
In this case, it follows that $$\E_{\alpha\sim D_{\alpha}}[\mathrm{REV}(\alpha)] \geq\frac12 \mathrm{REV}(2^{K+1})\geq 2^{K}\VCG.$$ 

\textbf{case (A.2):} If $\alpha^*=0$, then by the definition of $\alpha^*$, we have $w_i<s_i+2^{-L}r_i\leq s_i+\frac14r_i$. It follows that
\begin{align*}
\WEL\leq 2w_i<2s_i+\frac12 r_i\leq 2s_i+\frac12 \WEL,
\end{align*}
where the first inequality follows from the assumption of case (A) and the last inequality follows from \cref{eq:2}.
This implies that $s_i\geq \frac14 \WEL$. Then we have $\mathrm{REV}(0)=\VCG\geq s_i\geq \frac14 \WEL$.

\textbf{case (A.3):} If $\alpha^*\in[2^{-L},2^{K}]$, then by the definition of $\alpha^*$, $s_i+\alpha^* r_i\leq w_i<s_i+2\alpha^* r_i$. It follows that
$$
\mathrm{REV}(\alpha^*) \ge s_i+\alpha^* r_i\geq \frac 12 (s_i+2\alpha^* r_i) >\frac12 w_i\geq \frac14 \WEL.
$$

\textbf{case (B):} For all $i\in[n]$, it holds that $w_i\leq \frac12 \WEL$. We focus on a subset of buyers $S=\{i\in [n]:w_i\ge \frac 1{2n}\WEL\}$, i.e., those buyers who each contributes at least $\frac1{2n}$ fraction in the total welfare. Then we have $\sum_{i\in S}w_i=\WEL-\sum_{i\notin S}w_i\geq \WEL-\sum_{i\notin S}\frac1{2n}\WEL\geq \WEL-n\cdot\frac1{2n}\WEL=\frac12\WEL$.
% Then $\forall i\in [n], r_i\geq \WEL-w_i\geq w_i$. 

For each buyer $i\in S$,  we have $$w_i\leq \WEL-w_i\leq r_i\leq \WEL\leq 2n w_i.$$ 
The first inequality is because $w_i\leq\frac12\WEL$. The second and third inequalities follow from \cref{eq:2}. The fourth inequality is because $w_i\geq \frac1{2n}\WEL$.

If $s_i\geq \frac12 w_i$, buyer $i$ contributes at least $\frac12 w_i$ payment to $\mathrm{REV}(0)$. Otherwise, we have $w_i-s_i\in [\frac 12 w_i, w_i]$, and since $r_i \in [w_i, 2nw_i]$, there exists $\alpha^*=2^{-j}$ for some $j\in\{1,\cdots,L\}$ such that $\alpha^* r_i\leq w_i-s_i< 2\alpha^* r_i$. Since $s_i+\alpha^* r_i=\frac12(2s_i+2\alpha^* r_i)>\frac12(s_i+w_i)\geq\frac12 w_i$, buyer $i$ contributes at least $s_i+\alpha^* r_i\geq \frac12 w_i$ payment to $\mathrm{REV}(\alpha^*)$.

In summary, every buyer $i\in S$ contributes at least $\frac 12 w_i$ to $\sum_{\alpha\in {A_{L,K}}}\mathrm{REV}(\alpha)$. So we have
\begin{align*}
\sum_{\alpha\in A_{L,K}}\mathrm{REV}(\alpha)
\geq \frac12\sum_{i\in S}w_i 
\geq \frac14 \WEL.
\end{align*}
It follows that $\E_{\alpha\sim D_{\alpha}}[\mathrm{REV}(\alpha)]\geq\frac{1}{2(K+L+1)}\sum_{\alpha\in A_{L,K}}\mathrm{REV}(\alpha)\geq \frac {1}{8(K+L+1)} \WEL$.

Combining cases (A) and (B) together, we obtain that
$$PW_{D_{\alpha}}(F_1,\cdots,F_n)\geq \min\left\{\frac1{8(K+L+1)}\WEL(F_1,\cdots,F_n),2^K\VCG(F_1,\cdots,F_n)\right\}.$$

Since $L=\lceil \log_2(4n)\rceil\leq \log_2(n)+3$, and $K+\log_2(n)\geq 2$, we have $K+L+1\leq K+\log_2(n)+4\leq 3(K+\log_2(n))$. Therefore, it holds that 
\begin{equation*}
    PW_{D_{\alpha}}(F_1,\cdots,F_n)\geq \min\left\{\frac1{24(K+\log_2(n))}\WEL(F_1,\cdots,F_n),2^K\VCG(F_1,\cdots,F_n)\right\}.
\tag*{\qed}
\end{equation*}
\end{proof}

\end{document}